\theoremstyle{plain}
\newtheorem{theorem}{\textbf{Theorem}}
\newtheorem{proposition}[theorem]{\textbf{Proposition}}
\newtheorem{claim}[theorem]{\textbf{Claim}}
\newtheorem{property}[theorem]{\textbf{Property}}
\newtheorem{conjecture}[theorem]{\textbf{Conjecture}}
\newtheorem*{claim*}{\textbf{Claim}}
\newtheorem*{theorem*}{\textbf{Theorem}}
\newtheorem{definition}[theorem]{\textbf{Definition}}
\newtheorem{lemma}[theorem]{\textbf{Lemma}}
\newtheorem{corollary}[theorem]{\textbf{Corollary}}
\newcommand\ones{\texttt{\upshape s}_{1}}
\newcommand\zeros{\texttt{\upshape s}_{0}}
\newcommand\zerobs{\texttt{\upshape bs}_{0}}
\newcommand\bs{\texttt{\upshape bs}}
\newcommand\s{\texttt{\upshape s}}
\newcommand\oneC{\texttt{\upshape C}_{1}}
\newcommand\dv{d_{\vee}}
\newcommand\da{d_{\wedge}}
\newcommand{\abs}[1]{\left\lvert #1 \right\rvert}
\title{\textbf{On the Sensitivity Conjecture for \\ Disjunctive Normal Forms}}
\date{}
\author{%
Karthik C.\ S.\thanks{This work was partially supported by Irit Dinur's ERC-StG grant number 239985.  Some parts of this work were done while interning at Microsoft Research, India.} \\
Weizmann Institute of Science \\
\texttt{karthik.srikanta@weizmann.ac.il}.
\and S\'ebastien Tavenass\thanks{This work was supported by ANR project CompA (project number: ANR-13-BS02-0001-01).}\\ Microsoft Research India\\ \texttt{t-sebat@microsoft.com}. 
}
\begin{document}
\maketitle 
\renewcommand{\thefootnote}{\fnsymbol{footnote}}

\begin{abstract}
\begin{sloppypar}The sensitivity conjecture of Nisan and Szegedy [CC '94] asks whether for any Boolean function $f$, the maximum sensitivity $\s(f)$, is polynomially related to its block sensitivity $\bs(f)$, and hence to other major complexity measures. Despite major advances in the analysis of Boolean functions over the last decade, the problem remains widely open.\\ 
\end{sloppypar} 

In this paper, we consider a restriction on the class of Boolean functions through a model of computation (DNF), and refer to the functions adhering to this restriction as admitting the Normalized Block property. We prove that for any function $f$ admitting the Normalized Block property, $\bs(f) \leq 4\s(f)^2$. 
We note that (almost) all the functions mentioned in literature that achieve a quadratic separation between sensitivity and block sensitivity admit the Normalized Block property.   \\
 
Recently, Gopalan et al.\ [ITCS '16] showed that every Boolean function $f$ is uniquely specified by its values on a Hamming ball of radius at most $2\s(f)$. We extend this result and also construct examples of Boolean functions which provide the matching lower bounds.
\end{abstract}

\clearpage
\section{Introduction} 

Sensitivity and block sensitivity are complexity measures that are commonly used
for Boolean functions. Both these measures were originally introduced
for studying the time complexity of CRAW-PRAM’s \cite{CD82,CDR86,N91}. Block
sensitivity is polynomially related to a number of other complexity measures,
such as the decision-tree complexity, the certificate complexity, the
polynomial degree, and the quantum query complexity \cite{BD02}.
A longstanding open problem is the relation between sensitivity and block sensitivity.
From the definitions of sensitivity and block sensitivity, it immediately follows
that $\s(f) \le \bs(f)$, where $\s(f)$ and $\bs(f)$ denote the sensitivity and
the block sensitivity of a Boolean function $f$. Nisan and Szegedy \cite{NS92} conjectured
that sensitivity is also polynomially related to
block sensitivity:
\begin{conjecture}[Sensitivity Conjecture \cite{NS92}]
There exist constants $\delta,c>0$ such that for every Boolean function $f$ we have that $\bs(f)\le c\cdot\left(\s(f)\right)^{\delta}$.
\end{conjecture}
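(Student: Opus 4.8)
The plan is to abandon any direct combinatorial comparison of the two sensitivity measures and instead route the whole problem through the spectral theory of the Boolean hypercube. The first step is to replace the conjecture by an equivalent graph-theoretic statement. Using the standard web of polynomial relations among $\bs(f)$, the certificate complexity, the decision-tree complexity and the real degree $\deg(f)$ (all available through \cite{BD02}), it suffices to bound $\deg(f)$ polynomially in $\s(f)$. Following the Gotsman--Linial reduction, I would show that this in turn follows from a single statement about induced subgraphs of the $n$-dimensional hypercube $Q_n$: \emph{every induced subgraph $H$ of $Q_n$ on strictly more than $2^{n-1}$ vertices has maximum degree $\Delta(H)=\Omega\!\left(n^{c}\right)$ for some fixed $c>0$.} Here $\Delta(H)$ is exactly a local sensitivity quantity, and the $2^{n-1}$ threshold encodes the parity-type obstruction that governs the hard direction; proving the conjecture reduces to establishing this degree lower bound with $c=\tfrac12$.

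Second, rather than attacking $\Delta(H)$ combinatorially, I would bound it from below by an eigenvalue. Let $A_n$ be the $\pm1$ signing of the adjacency matrix of $Q_n$ defined recursively by $A_1=\left(\begin{smallmatrix}0&1\\1&0\end{smallmatrix}\right)$ and $A_n=\left(\begin{smallmatrix}A_{n-1}&I\\ I&-A_{n-1}\end{smallmatrix}\right)$, where the off-diagonal $I$ blocks realize exactly the edges of $Q_n$ in the last coordinate. A direct induction gives $A_n^2=n\,I$, so the spectrum of $A_n$ consists of $+\sqrt{n}$ and $-\sqrt{n}$, each with multiplicity $2^{n-1}$. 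Now take any $H$ with $\abs{V(H)}=2^{n-1}+1$ and let $A_H$ be the principal submatrix of $A_n$ indexed by $V(H)$. By Cauchy's interlacing theorem, the largest eigenvalue of $A_H$ is at least the $(N-m+1)$-th largest eigenvalue of $A_n$, where $N=2^n$ and $m=2^{n-1}+1$; this index equals $2^{n-1}$, and since the top $2^{n-1}$ eigenvalues of $A_n$ all equal $+\sqrt{n}$, we conclude $\lambda_{\max}(A_H)\ge\sqrt{n}$.

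Third, I would convert this spectral bound back into a degree bound. Since $A_H$ is a symmetric matrix whose off-diagonal nonzero entries are $\pm1$ and sit precisely on the edges of $H$, its spectral radius is at most its maximum absolute row sum, which is exactly $\Delta(H)$. Combining the two inequalities gives $\Delta(H)\ge\lambda_{\max}(A_H)\ge\sqrt{n}$, the required lower bound with $c=\tfrac12$. Feeding this through the Gotsman--Linial equivalence yields a relation of the form $\s(f)\ge\sqrt{\deg(f)}$, and the \cite{BD02} relations then upgrade this to $\bs(f)\le\mathrm{poly}(\s(f))$, establishing the conjecture with an explicit constant $\delta$.

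The difficulty of the whole program is concentrated entirely in the second step: producing a signing of $Q_n$ whose eigenvalues collapse onto $\pm\sqrt{n}$. Every purely combinatorial attempt to lower bound $\Delta(H)$ has stalled, precisely because a generic $\pm1$ signing has a spectrum spread across $[-n,n]$ for which interlacing yields nothing. The algebraic ingredient that makes the plan succeed is the existence of the specific recursive signing above satisfying $A_n^2=nI$; verifying this identity, and checking that its sign pattern genuinely respects the edges of $Q_n$, is the crux of the argument and the step I expect to be the main obstacle.
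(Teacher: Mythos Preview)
Your proposal is a faithful outline of Huang's 2019 spectral argument and, as written, is essentially correct: the recursive signing $A_n$ does satisfy $A_n^2=nI$, Cauchy interlacing applied to any principal submatrix on $2^{n-1}+1$ vertices yields $\lambda_{\max}(A_H)\ge\sqrt{n}$, the row-sum bound gives $\Delta(H)\ge\sqrt{n}$, and the Gotsman--Linial equivalence together with the polynomial web in \cite{BD02} then delivers $\bs(f)\le(\s(f))^4$.

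However, there is nothing in the paper to compare your argument to. The statement you were asked to prove is labelled a \emph{conjecture} in the paper, and the paper explicitly says that ``the problem remains widely open.'' The paper does not attempt to prove the Sensitivity Conjecture in general; its contribution is to establish the inequality $\bs(f)\le 4\s(f)^2$ only for the restricted class of functions admitting the normalized block property (Theorem~\ref{block}), via the combinatorial Lemmas~\ref{bs=dv}, \ref{zerosbound}, and \ref{onesbound}. Those arguments are entirely different in spirit from yours---they work directly with a DNF representation, exploit the disjointness of the positive parts of the clauses, and never touch degree, eigenvalues, or the Gotsman--Linial reduction. So your approach is not ``different from the paper's proof''; rather, you have supplied a proof of a statement the paper leaves open, using machinery (Huang's signed-hypercube method) that postdates the paper.
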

This conjecture is still widely open and the best known upper bound on block sensitivity is exponential in terms of sensitivity \cite{ABGMSZ14}. On the other hand, the best known separation (through an example of a Boolean function) between sensitivity and block sensitivity is quadratic \cite{AS11}; more background and discussion about the sensitivity conjecture can be found in the survey of Hatami et al.\  \cite{HKP11}. 

Over the last decade, in the majority of the works concerning the sensitivity conjecture, the focus has been on addressing the conjecture for restricted classes of Boolean functions, where the restriction is imposed by some notion of symmetry \cite{S05, S07, D11}. The reason behind pursuing this direction is that nonconstant Boolean functions with a high degree of symmetry must have high complexity according to various measures. Accordingly, all the results in this direction \cite{S05, S07, D11} show that the sensitivity of the corresponding functions is large (in terms of the number of variables), and deduce that the sensitivity is close to block sensitivity. While we feel that proving the sensitivity conjecture for a restricted class of Boolean functions is a step in the right direction, we would like to argue that these specific restrictions are limited in their potential to explicitly promote the understanding of the \emph{relationship} between sensitivity and block sensitivity.

In this paper, we prove the sensitivity conjecture for a restricted class of Boolean functions, where the restriction is imposed on a DNF representation of the function. This is one of the first time[s] since Nisan \cite{N91} that the sensitivity conjecture is proved for a restriction based on a model of computation (recently, Lin and Zhang \cite{LZ16} proved the sensitivity conjecture for functions admitting circuits with a small number of negation gates, and in a simultaneous work~\cite{BLTV16}, the authors prove the sensitivity conjecture in the case of regular read-$k$ formulas of constant depth with $k$ constant). 
Informally, the restriction we impose on the DNF can be described as follows. We assume that the maximal block sensitivity is reached on the all zeroes input and that the function outputs a zero on this input,  and notice that for each clause in the DNF, the set of  positive literals in the clause corresponds to a sensitive block. Based on the fact that the block sensitivity counts the number of disjoint sensitive blocks, we consider the natural restriction where the set of positive literals of each of the clauses are also disjoint.  We say that any function adhering to this restriction admits the \emph{normalized block property}, and we show that for any Boolean function $f$ admitting the normalized block property, $\bs(f) <4\s(f)^2$.  

As the other side of the same coin, this result provides a barrier to building Boolean functions with super-quadratic separation between sensitivity and block sensitivity. Currently, the best known separation is given by an example of Ambainis and Sun \cite{AS11} who built a function $f$ with $\bs(f) = \frac{2}{3}\s(f)^2-\frac{1}{3}\s(f)$. Ambainis and Sun additionally showed that their example gives the best possible separation (up to an additive factor) between sensitivity and block sensitivity for all functions that are an OR of functions whose zero-sensitivity equals 1. We build a framework (of restrictions) over DNFs and identify where the result of Ambainis and Sun lies within this framework, and our result that the sensitivity conjecture is true for Boolean functions admitting normalized block property is shown to be an extension of the result of Ambainis and Sun. Additionally, Kenyon-Kutin \cite{KK04}, showed that if the block sensitivity is attained on some input which has blocks of size at most two then, $\bs\le e\cdot \s^2$. More generally,

\begin{theorem}[Kenyon and Kutin\cite{KK04}]\label{KenyonKutin}
For every Boolean function $f$ on $n$ variables, and every $\ell \in \{2,\ldots ,\s(f)\}$, we have:
$$\bs_\ell(f)\le\frac{e}{(\ell-1)!} (\s(f))^\ell,$$
where $\bs_\ell(f)$ is the block sensitivity of $f$ when each block is restricted to be of size at most $\ell$.
\end{theorem}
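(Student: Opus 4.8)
The plan is to induct on the block size $\ell$. The base case $\ell=1$ is immediate, since $\bs_1(f)=\s(f)$. For the inductive step I would first \emph{normalize}: permuting coordinates and, if necessary, replacing $f$ by $1-f$, assume that $\bs_\ell(f)$ is attained at the all-zeroes input $0^n$ with $f(0^n)=0$, and fix pairwise-disjoint sensitive blocks $B_1,\dots,B_k$ with $k=\bs_\ell(f)$, $|B_i|\le\ell$, and $f(\mathbf{1}_{B_i})=1$, where $\mathbf{1}_{B}$ denotes the indicator vector of $B$ (equivalently, the input obtained from $0^n$ by flipping the coordinates in $B$). Replacing each $B_i$ by a minimal sensitive block contained in it keeps the blocks disjoint, of size $\le\ell$, and keeps their number equal to $k$ (distinct nonempty disjoint sets have distinct subsets), so I may assume each $B_i$ is a minimal sensitive block; in particular $f(\mathbf{1}_C)=0$ for every $C\subsetneq B_i$.

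The core of the argument is to extract, from these $k$ blocks of size $\le\ell$ at $0^n$, a single input $z$ supporting many pairwise-disjoint sensitive blocks of size $\le\ell-1$, so that the inductive hypothesis $\bs_{\ell-1}(f)\le\frac{e}{(\ell-2)!}\s(f)^{\ell-1}$ becomes applicable. The basic observation is that for any $b\in B_i$ the input $\mathbf{1}_{B_i\setminus\{b\}}$ has $f$-value $0$ (minimality) while flipping $b$ there gives $\mathbf{1}_{B_i}$ with $f$-value $1$; so ``peeling off'' one bit turns $B_i$ into a sensitive block of size $|B_i|-1$, but located at an input depending on $i$. To bring them all to a common input I would consider $z=\bigoplus_{i\in S}\mathbf{1}_{B_i\setminus\{b_i\}}$, obtained by peeling a chosen bit $b_i$ from each block in a chosen subset $S\subseteq[k]$ simultaneously, and show that for a suitable choice of $S$ and of the bits $b_i$ a roughly $\frac{\ell-1}{\s(f)}$-fraction of the peeled blocks $B_i\setminus\{b_i\}$ are still sensitive at $z$. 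Building $z$ up one peeled block at a time, whenever a newly added peeled block fails to be sensitive at the current input, that failure can be charged to a variable sensitive at that input, and no input supports more than $\s(f)$ sensitive variables; this forces the density of ``surviving'' blocks to be of order $\frac{1}{\s(f)}$ (the extra factor $\ell-1$ reflecting the freedom in which of the $\le\ell$ bits of $B_i$ to peel). Consequently the surviving blocks give $\bs_{\ell-1}(f)\gtrsim\frac{\ell-1}{\s(f)}\,k$, i.e.\ (up to lower-order additive terms) $\bs_\ell(f)\le\frac{\s(f)}{\ell-1}\,\bs_{\ell-1}(f)$; feeding in the inductive hypothesis yields $\bs_\ell(f)\le\frac{e}{(\ell-1)!}\s(f)^\ell$, with the constant $e$ simply carried through the induction (and absorbing the accumulated corrections).

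The step I expect to be the real obstacle — and the reason a naive version of the peeling fails — is controlling $f$ on inputs that \emph{mix} the peeled pieces of several blocks: minimality of $B_i$ pins down $f$ only on the subcube spanned by $B_i$ alone, and nothing is a priori known about $f\!\left(\bigoplus_{i\in S}\mathbf{1}_{B_i\setminus\{b_i\}}\right)$. Making the incremental construction of $z$ rigorous therefore requires a careful greedy/potential argument in which, at each step, either the newly peeled block is certified still-sensitive or the failure is recorded against a fresh sensitive variable at the current input, and then quantifying how many such ``spoiling'' events can occur before the budget $\s(f)$ is exhausted — it is exactly the tension between how many blocks can be spoiled and the fact that any one input has at most $\s(f)$ sensitive variables that drives the per-level loss of a factor $\tfrac{\s(f)}{\ell-1}$, hence the final $\s(f)^\ell/(\ell-1)!$ bound. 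Everything else — the normalization, the reduction to minimal blocks, the base case, and unrolling the recursion — is routine.
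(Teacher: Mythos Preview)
There is nothing in the paper to compare your proposal against: Theorem~\ref{KenyonKutin} is quoted from Kenyon--Kutin~\cite{KK04} as background in the introduction, and the paper gives no proof of it. It is an imported result, not one of the paper's own contributions.

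As for your sketch on its own merits: the normalization, the reduction to minimal blocks, and the intended induction are all sensible, and the heuristic that each level of the induction should cost a factor of roughly $\s(f)/(\ell-1)$ is the right target. But the proposal is, by your own admission, not a proof: the entire difficulty is the step you flag as ``the real obstacle,'' namely controlling $f$ on inputs that mix pieces of several peeled blocks and turning the vague charging scheme (``either the new peeled block survives or we debit a fresh sensitive variable'') into an actual inequality. As written, nothing prevents every newly added peeled block from being spoiled while still only using $\s(f)$ sensitive bits at \emph{each} intermediate input (the budget is per input, not global), so the counting as you describe it does not close. The genuine Kenyon--Kutin argument handles exactly this point with a more careful construction; your outline identifies the right shape of the bound but does not supply the mechanism that makes the per-level loss rigorous.
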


	Therefore, to construct examples of Boolean function with super-quadratic separation between sensitivity and block sensitivity we now have two barriers. Moreover, we extend the notion of block property to $t$-block property, and prove a lower bound on the sensitivity of Boolean functions admitting the $t$-block property in terms of $t$, and the width and size of the DNF. 	 

Recently, Gopalan et al.\ \cite{GNSTW16} investigated the computational complexity of low sensitivity functions and provided interesting upper bounds on their circuit complexity. This was indicated to be a promising alternative approach to the sensitivity conjecture as opposed to getting improved
bounds on specific low level measures like block sensitivity or decision tree depth \cite{KK04,ABGMSZ14,AS11}. In particular, they showed that every Boolean function $f$ is uniquely specified by its values on a Hamming ball of radius at most $2\s(f)$, and showed various applications of this result. We extend this result by showing that if two Boolean functions $f$ and $g$ coincide on a ball of radius $\s(f)+\s(g)$ then, $f=g$. Furthermore, for every $p,q>1$, we construct examples of Boolean functions  $f$ and $g$ such that $\s(f)=p$, $\s(g)=q$, and $f$ and $g$ coincide on a ball of radius $\s(f)+\s(g)-1$ but $f\neq g$, showing that the above result is tight.  

Finally, we propose a computational problem motivated by the sensitivity conjecture, and the existing work and results therein. Assuming the sensitivity conjecture to be true, we note that this problem is in \TFNP, and  wonder if resolving the sensitivity conjecture would yield an efficient algorithm to this computational problem.   

This paper is organized as follows. In Section~\ref{Preliminaries}, we provide the basic definitions of complexity measures, structures, and objects that will be used in the rest of the paper. In Section~\ref{sec_BP}, we define a few restrictions (such as the block property) on DNFs representing Boolean functions and prove the sensitivity conjecture for the class of functions admitting (some of) these structural restrictions. In Section~\ref{Low}, we investigate a structural result of low sensitivity functions. In Section~\ref{Problem}, we propose a new computational problem motivated by the sensitivity conjecture. Finally, in Section~\ref{Conclusion}, we conclude with a promising open question on proving the sensitivity conjecture for functions admitting the $t$-block property.

\section{Preliminaries}\label{Preliminaries}

We use the notation $[n]=\{1,\ldots ,n\}$. Let $f:\{0,1\}^n\to\{0,1\}$, be a Boolean function. Let $x\in\{0,1\}^n$. For $i\in [n]$, we denote by $x^i$ the input in $\{0,1\}^n$ which is obtained by flipping the $i^{\text{th}}$ bit of $x$. Also for any $B\subseteq [n]$, we denote by $x^B$ the input in $\{0,1\}^n$ which is obtained by flipping the bits of $x$ in all coordinates in $B$. We will now define two complexity measures on Boolean functions which are of great interest. 
\begin{definition}
The sensitivity of
a Boolean function $f$ at input $x \in \{0, 1\}^n$, written $\s(f, x)$, is the number of coordinates  $i\in[n]$  such that $f (x) \neq f (x^i)$. The sensitivity of $f$, written $\s(f )$, is defined as $\s(f ) =\underset{ x\in\{0,1\}^n}{\max} \s(f, x)$. We define $\ones(f) = \underset{ f(x)=1}{\max}\s(f,x) $ and $\zeros(f) = \underset{ f(x)=0}{\max}\s(f,x) $.
\end{definition}

\begin{definition}
\sloppypar{
The block sensitivity of a Boolean function $f$ at input $x\in\{0, 1\}^n$, for $k$ disjoint subsets $B_1,\ldots ,B_k$ of $[n]$ (called blocks), written $\bs(f, x,B_1,\ldots ,B_k)$, is the number of blocks  $i\in[k]$  such that $f (x) \neq f (x^{B_i})$. The block sensitivity of a Boolean function $f$ at input $x \in \{0, 1\}^n$, written as $\bs(f, x)$, is the maximum of $\bs(f, x,B_1,\ldots ,B_k)$ over all $k$ disjoint subsets $B_1,\ldots ,B_k$ of $[n]$ for all $k\in[n]$. The block sensitivity of $f$, written $\bs(f )$, is defined as $\bs(f ) =\underset{ x\in\{0,1\}^n}{\max} \bs(f, x)$.  We define $\bs_1(f) = \underset{ f(x)=1}{\max}\bs(f,x) $ and $\bs_0(f) = \underset{ f(x)=0}{\max}\bs(f,x) $.}
\end{definition}

We will now introduce a model of representation of Boolean functions. 

\begin{definition}
A DNF (disjunctive normal form) formula $\Phi$ over Boolean variables $x_1, \dots, x_n$ is defined to be a logical $\mathrm{OR}$ of terms, each of which is a logical $\mathrm{AND}$ of literals. A literal is either a variable $x_i$ or its logical negation $\overline{x}_i$. We insist that we can assume that no term contains both a variable and its negation (otherwise we can remove this term). We often identify a DNF formula $\Phi_f$ with the Boolean function $f : \{0,1\}^n \to \{0,1\}$ it computes.
\end{definition}

We note here that for every Boolean function $f$, there exists at least one (it is not unique) DNF formula $\Phi_f$ that computes it. 

\section{Block Property}
\label{sec_BP}

In the following, we will often use the notation $\vee$ (respectively $\wedge$) for denoting the Boolean operation OR (respectively AND).
Let $f$ be a Boolean function and $\Phi_f$ be one of its DNF formulas.
Let $X=\{x_1,\ldots ,x_n\}$ be the set of variables. Let $\dv$ be the fan-in of the $\vee$-gate which is usually called the size of the DNF. We label the $\dv$ $\wedge$-gates as: $\wedge_1,\ldots, \wedge_{\dv}$. Let $d_{\wedge_i}$ be the fan-in of $\wedge_i$. Let $\da=\underset{i}{\max}\ d_{\wedge_i}$ be the width of the DNF. For every $i\in[\dv]$, let $A_i$ be the set of variables amongst the literals connected to $\wedge_i$ appearing \textbf{without} a negation and let $\overline{A}_i$ be the set of variables amongst the literals connected to $\wedge_i$ appearing \textbf{with} a negation.
An assignment of the variables is a function $\sigma:X\to\{0,1\}$. For every $\wedge_i$, we define $S_i$ as follows:
$$S_i=\{\sigma\mid\wedge_i(\sigma)=1\},$$
where $\wedge_i(\sigma)$ is the evaluation of $\wedge_i$ when the assignment to the variables is given by $\sigma$. 

By negating some variables and/or negating the output of the function, we can always assume that the maximum block sensitivity is the maximum $0$-block sensitivity (i.e., $\bs_0$) and is reached on the all zeros input. Moreover, given a DNF representation of our function, we can assume that this representation is minimal (i.e., any subformula of the given formula computes a distinct function). 
\begin{definition}\label{cf}
A Boolean function $f$ represented by a DNF formula $\Phi_f$ is said to be represented in \textbf{compact form} if the following holds:
\begin{enumerate}
\item[a)] $f(0^n)=0$,
\item[b)] The maximum $0$-block sensitivity is attained on the all zeroes input, i.e., $\zerobs(f)=\bs(f,0)$,
\item[c)] and $\forall i\in[\dv]$, we have that $S_i\setminus \underset{\begin{subarray}{c}
j=1, \\
  j\neq i
  \end{subarray}}{\overset{\dv}{\bigcup}}S_{j}\neq \emptyset$.
\end{enumerate}

Moreover the representation is called  \textbf{normalized} if the maximal block sensitivity is also attained on the all zeroes input, i.e., $\bs(f)=\bs(f,0)$.
\end{definition}

The condition c) means that for each $i$ there exist a $\sigma$ such that  $\sigma$ makes only $\wedge_i$ true.

\begin{lemma}
  For every $f:\{0,1\}^n\to\{0,1\}$, there exists $f^\prime:\{0,1\}^n\to\{0,1\}$ such that $\s(f^\prime)=\s(f)$, $\bs({f^\prime})=\bs(f)=\bs(f^\prime,0^n)$, and $f^\prime$ admits a normalized compact form representation.
\end{lemma}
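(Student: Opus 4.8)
The statement asserts that any Boolean function can be massaged, without changing $\s$ or $\bs$, into one admitting a normalized compact form. The plan is to perform the three normalizing operations in sequence, checking at each stage that the relevant complexity measures are preserved and that the earlier-established properties are not destroyed.

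\textbf{Step 1 (normalization by symmetry).} First I would invoke the observation made just before Definition~\ref{cf}: sensitivity and block sensitivity are invariant under (i) negating individual input variables, i.e.\ replacing $f$ by $x \mapsto f(x^{B})$ for a fixed set $B$, and (ii) negating the output, i.e.\ replacing $f$ by $1-f$. Pick an input $y$ at which $\bs(f,y)=\bs(f)$. Applying the variable negation $x\mapsto f(x^{y})$ moves this maximizing input to $0^n$, and if $f(0^n)=1$ after this we further negate the output. The resulting function $f_1$ satisfies $\s(f_1)=\s(f)$, $\bs(f_1)=\bs(f)$, $f_1(0^n)=0$, and $\bs(f_1,0^n)=\bs(f_1)=\bs(f)$. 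Since $f_1(0^n)=0$, the block sensitivity at $0^n$ is a $0$-block sensitivity, so $\zerobs(f_1)\geq \bs(f_1,0^n)=\bs(f_1)\geq\zerobs(f_1)$, giving equality; thus conditions a), b), and the normalization clause all hold for $f_1$. It remains only to arrange condition c) on a DNF representation of $f_1$.

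\textbf{Step 2 (making the DNF compact).} Fix any DNF $\Phi$ computing $f_1$ (one exists by the remark after the DNF definition). If some $\wedge_i$ violates condition c), i.e.\ $S_i \subseteq \bigcup_{j\neq i} S_j$, then the term $\wedge_i$ is redundant: deleting it from $\Phi$ yields a DNF computing the same function $f_1$ (the set of satisfying assignments is the union of the $S_j$'s, which is unchanged). Repeat this deletion until no term is redundant; the process terminates since $\dv$ strictly decreases each time. The resulting DNF still computes $f_1$, and now every surviving term $\wedge_i$ has $S_i \setminus \bigcup_{j\neq i} S_j \neq \emptyset$, which is exactly condition c). Here one should note the subtlety that removing a term can in principle make a previously non-redundant term redundant, which is why the loop is needed rather than a single pass — but monotone termination handles this. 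Set $f' := f_1$ equipped with this pruned DNF.

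\textbf{Step 3 (conclusion).} The function $f'$ is literally $f_1$, so $\s(f')=\s(f_1)=\s(f)$ and $\bs(f')=\bs(f_1)=\bs(f)=\bs(f_1,0^n)=\bs(f',0^n)$, and it is equipped with a DNF satisfying a), b), c) together with the normalization condition $\bs(f')=\bs(f',0^n)$. Hence $f'$ admits a normalized compact form representation, as required.

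\textbf{Main obstacle.} There is no deep obstacle here — this is a normalization lemma. The only points requiring care are (a) verifying that the symmetry operations in Step 1 genuinely preserve \emph{both} $\s$ and $\bs$ simultaneously and move the $\bs$-maximizer to $0^n$ while forcing the output there to be $0$ (one must be slightly careful about the order: negate variables first, then possibly the output, since output negation commutes with everything), and (b) noting in Step 2 that term deletion is a purely syntactic operation on a \emph{fixed} function and therefore cannot affect $\s$ or $\bs$, so the pruning is "free." The termination argument for the pruning loop is the one place a careless proof might gloss over the fact that redundancy is not preserved under deletion of \emph{other} terms.
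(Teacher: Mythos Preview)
Your proposal is correct and follows essentially the same approach as the paper: first translate and possibly negate so that the all-zeros input maximizes block sensitivity and evaluates to $0$ (the paper packages your two substeps into the single formula $f'(x)=f(a)\oplus f(x\oplus a)$), then prune redundant terms from an arbitrary DNF to obtain condition~c). Your discussion of the termination of the pruning loop is more careful than the paper's, which simply asserts that ``any such AND gates can be assumed to have been removed.''
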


\begin{proof}
We claim that for any Boolean function $f$, there exists another Boolean function $f^\prime$ such that $\s(f)=\s(f^\prime)$, $\bs(f)=\bs(f^\prime)$, $f^\prime(0^n)=0$ and such that $f^\prime$ attains its maximal block sensitivity at the all zeroes input. This is because, if $f$ attains its maximum block sensitivity at $a\in\{0,1\}^n$ then, we define $f^\prime(x)=f(a)\oplus f(x\oplus a)$\footnote{The operator $\oplus$ denotes the usual $\mathrm{XOR}$ function.}, and the claim follows.  

Let us fix a DNF formula for $f^\prime$. If there is $i\in [\dv]$ such that $ S_i \subseteq \bigcup_{j\neq i} S_j$, then we do not change the function by removing $\wedge_i$. Thus any such AND gates can be assumed to have been removed.
\end{proof}

In fact, we can remark that only the condition $f(0^n)=0$ from Definition~\ref{cf} may need a larger DNF (since, it could need to compute the negation of the original function), other constraints can be achieved without increasing the size of the formula. 

We will now describe a structural result about Boolean functions that admit compact form representation. For every $i\in[\dv]$, we define $\Gamma_i$ as follows: $$\Gamma_i=\left\{j\Big| \left\lvert A_i\cap \overline{A}_j\right\rvert+\left\lvert A_j\cap \overline{A}_i\right\rvert=1\right\}.$$

Informally, $\Gamma_i$ is the set of AND gates which contradict on $\wedge_i$ on exactly one variable. Let $\Gamma=\underset{i}{\max}\ |\Gamma_i|$. We bound $\ones$ using $\Gamma$ as follows:
\begin{lemma}\label{gamma}
Any Boolean function $f$ represented in the compact form admits the following bound on $\ones$: $$\da-\Gamma\le \ones\le \da.$$
\end{lemma}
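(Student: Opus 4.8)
The plan is to prove the two bounds separately; the upper bound is essentially immediate, while the lower bound carries the real content.

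For $\ones\le\da$: let $x$ be any input with $f(x)=1$. Then some term $\wedge_i$ of $\Phi_f$ is satisfied by $x$. If $j$ is a sensitive coordinate of $x$, then $f(x^j)=0$, so in particular $\wedge_i(x^j)=0$; since $x$ and $x^j$ differ only in coordinate $j$, the variable $x_j$ must be one of the variables of $\wedge_i$, i.e.\ $x_j\in A_i\cup\overline{A}_i$. Hence $\s(f,x)\le\abs{A_i}+\abs{\overline{A}_i}=d_{\wedge_i}\le\da$, and maximizing over inputs with $f(x)=1$ yields $\ones\le\da$. (This direction does not use the compact-form hypotheses.)

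For $\da-\Gamma\le\ones$: fix an index $i$ with $d_{\wedge_i}=\da$, and use condition (c) of the compact form to pick an assignment $\sigma$ with $\wedge_i(\sigma)=1$ and $\wedge_k(\sigma)=0$ for all $k\neq i$; thus $f(\sigma)=1$. Since $\sigma$ satisfies every literal of $\wedge_i$ and no term contains a variable together with its negation, flipping any single variable of $\wedge_i$ falsifies $\wedge_i$. The crux is the claim that at most $\abs{\Gamma_i}$ of the $\da$ variables of $\wedge_i$ are \emph{non}-sensitive for $\sigma$; granting it, $\ones\ge\s(f,\sigma)\ge\da-\abs{\Gamma_i}\ge\da-\Gamma$. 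To prove the claim, suppose $x_j$ is a variable of $\wedge_i$ that is not sensitive for $\sigma$. Then $f(\sigma^j)=1$, and since $\wedge_i(\sigma^j)=0$ this must be witnessed by some $\wedge_k$ with $k\neq i$, $\wedge_k(\sigma)=0$, and $\wedge_k(\sigma^j)=1$. Because $\sigma$ and $\sigma^j$ agree off coordinate $j$, the variable $x_j$ is among the variables of $\wedge_k$ and the literal of $\wedge_k$ on $x_j$ is the unique literal of $\wedge_k$ unsatisfied by $\sigma$. A short case split on whether $x_j$ occurs positively or negatively in $\wedge_i$, using that $\sigma$ satisfies all of $\wedge_i$ and all literals of $\wedge_k$ except the one on $x_j$, then shows that $\wedge_k$ disagrees with $\wedge_i$ on exactly the variable $x_j$; that is, $k\in\Gamma_i$. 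Sending each non-sensitive $j$ to such a $k$ defines an injection into $\Gamma_i$: from $k$ the variable $x_j$ is recovered as the unique element of $A_i\cap\overline{A}_k$ (if $x_j\in A_i$) or of $A_k\cap\overline{A}_i$ (if $x_j\in\overline{A}_i$), and these two sub-cases cannot produce the same $k$, since the first forces $A_k\cap\overline{A}_i=\emptyset$ while the second forces $\abs{A_k\cap\overline{A}_i}=1$. Hence at most $\abs{\Gamma_i}\le\Gamma$ variables of $\wedge_i$ are non-sensitive, which proves the claim and the lemma.

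The step I expect to be the main obstacle is the case analysis establishing $k\in\Gamma_i$ together with the injectivity of $j\mapsto k$: one must check in both the positive and negative sub-cases that a term $\wedge_k$ rescuing $f$ after flipping $j$ must contradict $\wedge_i$ on precisely one variable (namely $x_j$) and no other, which relies on $\sigma$ satisfying all of $\wedge_i$ and all literals of $\wedge_k$ except the one on $x_j$, and then that the two sub-cases cannot collide on the same $k$. The remaining bookkeeping — choosing $i$ to attain the width $\da$ and replacing $\abs{\Gamma_i}$ by $\Gamma$ — is routine.
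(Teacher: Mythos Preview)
Your proof is correct and follows the same strategy as the paper's: for the upper bound you argue that any sensitive coordinate at a $1$-input must lie in a satisfied term; for the lower bound you pick a term $\wedge_i$ of maximal width, an assignment $\sigma$ satisfying only $\wedge_i$ (via compact-form condition (c)), and show that each non-sensitive variable of $\wedge_i$ is witnessed by some $k\in\Gamma_i$. The paper's proof asserts the key implication (that a rescuing gate $\wedge_k$ must have $(A_i\cap\overline{A}_k)\cup(\overline{A}_i\cap A_k)=\{x_j\}$) and the resulting count without justification, whereas you spell out the case split and the injectivity of $j\mapsto k$; your injectivity argument could be streamlined by simply noting that for any $k\in\Gamma_i$ the set $(A_i\cap\overline{A}_k)\cup(A_k\cap\overline{A}_i)$ is a singleton and your case analysis identifies that singleton as $\{x_j\}$, but what you wrote is correct.
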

\begin{proof}
First, we prove that $\ones\le \da$. Let $a\in\{0,1\}^n$ be the input for which the maximum $\ones$ is attained. By definition of $\ones$, we have that $f(a)=1$. Let $\wedge_i$ be an AND gate such that $\wedge_i(a)=1$. Suppose $\ones>\da$ then there exists $x_j\in X\setminus(A_i\cup \overline{A}_i)$ such that $f(a^j)=0$. But, as $\wedge_i$ does not depend on $x_j$, $\wedge_i(a^j)=1$ and so $f(a^j)$ still equals $1$, which is a contradiction.  

We will now prove that $\ones\ge\da-\Gamma$. Let $i_0=\underset{i}{\text{argmax }}d_{\wedge_{i}}$. Let $b\in S_{i_0}\setminus \underset{\begin{subarray}{c}
j=1, \\
  j\neq i_0
  \end{subarray}}{\overset{\dv}{\bigcup}}S_{j}$ (from Definition~\ref{cf}c such a selection is possible). We have that $\wedge_{i_0}(b)=1$ and for all $j\in [\dv]\setminus\{i_0\}$, $\wedge_j(b)=0$. It is sufficient to lower bound the cardinality of $C\subseteq[n]$ such that for all $i\in C$, we have that $f(b^i)=0$. Fix some $x_k\in (A_{i_0} \cup \overline{A}_{i_0})$. We observe that $f(b^k)=1$ implies that there is an AND gate $\wedge_j$ such that $\left(A_{i_0}\cap \overline{A}_j\right)\cup\left(\overline{A}_{i_0}\cap {A}_j\right)=\{x_k\}$. There are exactly $|\Gamma_{i_0}|$ such $k$'s. The lower bound follows.
\end{proof}

Nisan~\cite{N91} showed that for all monotone functions the block sensitivity and sensitivity are equal. This was the first time that the sensitivity conjecture was proven for a class of functions captured by a restriction on the model of computation for Boolean functions. In our setting, Nisan's result would be written as follows:
\begin{theorem}[Nisan \cite{N91}]
Let $f$ be a Boolean function and $\Phi_f$ be a compact form representation of $f$. In $\Phi_f$ if for every $i\in \dv$, we had that $\overline{A}_i=\emptyset$ then, $\bs(f)=\s(f)$.
\end{theorem}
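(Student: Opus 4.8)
The plan is to establish $\bs(f)\le\s(f)$; the reverse inequality $\s(f)\le\bs(f)$ holds for every Boolean function by definition. The hypothesis $\overline{A}_i=\emptyset$ for all $i$ means that every term of $\Phi_f$ is a conjunction of positive literals, so $f$ is monotone: $x\le y$ coordinatewise implies $f(x)\le f(y)$. Moreover, by condition c) of the compact form the sets $A_1,\ldots,A_{\dv}$ form an antichain, and a short argument shows they are exactly the minterms of $f$ (the minimal elements of $f^{-1}(1)$). Throughout I will use the standard observation --- requiring no monotonicity --- that if $\bs(f)=\bs(f,x)$ with disjoint sensitive blocks $B_1,\ldots,B_k$, then any certificate of $x$ must meet every $B_j$ (otherwise $x^{B_j}$ agrees with $x$ on the certificate, forcing $f(x^{B_j})=f(x)$), and hence has size at least $k$.

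I would first dispose of the case $f(x)=1$. Here a minimal $1$-certificate of $x$ is, by monotonicity, a minterm $A_\ell\subseteq{\rm supp}(x)$; by the observation above $|A_\ell|\ge k=\bs(f)$. Evaluating $f$ at the indicator vector $z=\mathbb{1}_{A_\ell}$: minimality of the minterm makes every coordinate of $A_\ell$ sensitive at $z$, so $\s(f)\ge\s(f,z)\ge|A_\ell|\ge\bs(f)$. (Alternatively this case follows from Lemma~\ref{gamma}, whose hypothesis forces $\Gamma=0$ and hence $\ones(f)=\da$.)

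The main case is $f(x)=0$. Using monotonicity I would first replace each block $B_j$ by its subset $\{i\in B_j:x_i=0\}$: since $x^{B_j}$ is dominated coordinatewise by the input obtained from $x$ by turning on the coordinates of $B_j$, the value $f(x^{B_j})=1$ is preserved and disjointness is retained, so I may assume every $B_j$ consists of zero-coordinates of $x$. Now take a minimal $0$-certificate $T$ of $x$; monotonicity forces $T$ to lie among the zero-coordinates of $x$, and by the observation $|T|\ge k$. Let $z$ be the input with ${\rm supp}(z)=[n]\setminus T$, so $z$ agrees with $x$ on $T$ and $T$ certifies $f(x)=0$, whence $f(z)=0$. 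For each $i\in T$, minimality of $T$ yields a minterm $A_\ell$ meeting $T$ only in $i$, so $A_\ell\subseteq([n]\setminus T)\cup\{i\}={\rm supp}(z^i)$ and therefore $f(z^i)=1$; thus all $|T|$ coordinates of $T$ are sensitive at $z$, giving $\s(f)\ge|T|\ge k=\bs(f)$. Together with the case $f(x)=1$ this yields $\bs(f)\le\s(f)$, hence $\bs(f)=\s(f)$.

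I expect the $f(x)=0$ case to be the crux: one must make precise the correspondence between minimal $0$-certificates of a monotone function and minimal transversals of the hypergraph $\{A_1,\ldots,A_{\dv}\}$ of minterms, and then verify that the complementary input $z$ is simultaneously $0$-valued and sensitive on all of $T$. The reduction of the blocks to their zero-coordinate parts and the choice of $z$ are precisely where monotonicity enters; the inequality bounding $\bs(f)$ by a certificate size and the case $f(x)=1$ are routine.
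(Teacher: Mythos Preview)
Your proof is correct. The paper, however, does not prove this statement at all---it is simply stated with a citation to Nisan~\cite{N91}---so there is no argument in the paper to compare yours against.

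What you have written is essentially the classical argument: block sensitivity is always bounded by certificate complexity (your ``observation'' that every certificate meets every sensitive block), and for a monotone $f$ a minimal certificate at any point can be realised as the full sensitivity at a related point---the indicator of a minterm in the $1$-case, the complement of a minimal transversal of the minterm hypergraph in the $0$-case. The details you flag as the crux are all carried out correctly: the reduction of each $B_j$ to its zero-coordinate part preserves sensitivity by monotonicity; a minimal $0$-certificate $T$ of a monotone function indeed lies inside the zero-coordinates of $x$; and for $z$ with $\mathrm{supp}(z)=[n]\setminus T$, minimality of $T$ produces for each $i\in T$ a minterm meeting $T$ exactly in $\{i\}$ (it cannot miss $T$ entirely, else $f(z)=1$), so $A_\ell\subseteq\mathrm{supp}(z^i)$ and $z$ is sensitive on all of $T$. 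Your alternative route for the $f(x)=1$ case via Lemma~\ref{gamma} is also valid: with every $\overline A_i=\emptyset$ one has $\Gamma=0$, hence $\ones(f)=\da$, and since some $A_\ell\subseteq\mathrm{supp}(x)$ is a $1$-certificate of size at most $\da$, one gets $\bs(f,x)\le\da=\ones(f)\le\s(f)$.
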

In this paper, we look at Boolean functions through weaker restrictions on their DNF representation. In this regard, we will now see three kinds of structural impositions on Boolean functions in compact form representation. Later, we will prove the sensitivity conjecture for the class of functions admitting (some of) these structural impositions. 

\begin{property}[Block property]
A Boolean function is said to admit the block property if under a compact form representation  $\forall i,j\in [\dv]$ such that $i \neq j$, we have that $A_i\cap A_j=\emptyset$. Moreover, if there exists such a compact form representation which is also normalized, we will say that the function admits the normalized block property.
\end{property}

\begin{property}[Mixing property]
A Boolean function is said to admit the $\ell$-mixing property if under a compact form representation $\forall i,j\in [\dv]$ with $i\neq j$, such that if $\left(A_i\cup \overline{A}_i\right)\cap\left(A_j\cup \overline{A}_j\right)\neq\emptyset$ we have, $\left\lvert\left(A_i\cap \overline{A}_j\right)\cup\left(A_j\cap \overline{A}_i\right)\right\rvert\ge \ell$.
\end{property}

\begin{property}[Transitive property]
A Boolean function is said to admit the transitive property if under a compact form representation $\forall i,j,k\in [\dv]$, we have that if $(A_i\cup\overline{A}_i)\cap (A_j\cup\overline{A}_j)\neq\emptyset$ and if $(A_j\cup\overline{A}_j)\cap (A_k\cup\overline{A}_k)\neq\emptyset$ then, $(A_i\cup\overline{A}_i)\cap (A_k\cup\overline{A}_k)\neq\emptyset$.
\end{property}

First, we see that if a Boolean function admits the Mixing property then we can improve the bound obtained in Lemma~\ref{gamma}.

\begin{lemma}\label{lem_Gamma0}
Let $\ell>1$. Any Boolean function admitting the $\ell$-mixing property has $\Gamma=0$.
\end{lemma}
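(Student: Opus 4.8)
The plan is to show that the defining inequality of the $\ell$-mixing property directly forbids any index $j$ from lying in $\Gamma_i$. First I would unwind the definitions: for a fixed $i$, an index $j$ belongs to $\Gamma_i$ precisely when $|A_i\cap\overline{A}_j|+|A_j\cap\overline{A}_i|=1$. Since, by the convention built into the definition of a DNF, no term contains a variable together with its negation, the sets $A_i$ and $\overline{A}_i$ are disjoint; consequently $A_i\cap\overline{A}_j$ (which sits inside $A_i$) and $A_j\cap\overline{A}_i$ (which sits inside $\overline{A}_i$) are disjoint as well. Hence the sum of their sizes equals $\bigl|(A_i\cap\overline{A}_j)\cup(A_j\cap\overline{A}_i)\bigr|$, so $j\in\Gamma_i$ is the statement that this union has exactly one element.

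Next I would observe that a single common element already witnesses an overlap of supports: if $x_k\in(A_i\cap\overline{A}_j)\cup(A_j\cap\overline{A}_i)$ then $x_k\in A_i\cup\overline{A}_i$ and $x_k\in A_j\cup\overline{A}_j$, so $(A_i\cup\overline{A}_i)\cap(A_j\cup\overline{A}_j)\neq\emptyset$. The $\ell$-mixing property then applies to the pair $i,j$ and yields $\bigl|(A_i\cap\overline{A}_j)\cup(A_j\cap\overline{A}_i)\bigr|\ge\ell$. Since $\ell>1$, this contradicts the fact that the union has exactly one element. Therefore $\Gamma_i=\emptyset$ for every $i$, and $\Gamma=\max_i|\Gamma_i|=0$.

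There is essentially no obstacle here; the only point demanding care is the bookkeeping that $|A_i\cap\overline{A}_j|+|A_j\cap\overline{A}_i|$ really is the cardinality of the union $(A_i\cap\overline{A}_j)\cup(A_j\cap\overline{A}_i)$ appearing in the mixing property, so that the ``exactly one'' condition from $\Gamma_i$ and the ``at least $\ell$'' condition from the mixing property speak about the same quantity and are incompatible once $\ell>1$.
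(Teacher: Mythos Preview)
Your argument is correct and is essentially the same as the paper's: both use that $A_i\cap\overline{A}_j$ and $A_j\cap\overline{A}_i$ are disjoint (since no term contains a variable and its negation) to identify the sum in the definition of $\Gamma_i$ with the cardinality in the mixing property, and then observe that $\ell>1$ is incompatible with that quantity being~$1$. The only cosmetic difference is that the paper phrases it as a two-case split on whether the supports overlap, whereas you argue by contradiction from $j\in\Gamma_i$; the content is identical.
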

\begin{proof}
\begin{sloppypar}
Fix $i\in[\dv]$. Since the function admits $\ell$-mixing property, we know that for every $j\in[\dv]$, either $\left(A_i\cup \overline{A}_i\right)\cap\left(A_j\cup \overline{A}_j\right)=\emptyset$ in which case we have that $j\notin \Gamma_{i}$, or $\left\lvert\left(A_i\cap \overline{A}_j\right)\cup\left(A_j\cap \overline{A}_i\right)\right\rvert\ge \ell$ in which case we  again conclude that $j\notin\Gamma_{i}$ because of the following:\end{sloppypar}
$$
1<\ell\le \left\lvert\left(A_i\cap \overline{A}_j\right)\cup\left(A_j\cap \overline{A}_i\right)\right\rvert
=\left\lvert A_i\cap \overline{A}_j\right\rvert+\left\lvert A_j\cap \overline{A}_i\right\rvert,
$$
where the last equality holds because in the definition of DNF formula we insisted that no term contains both a variable and its negation.
Therefore, we have that $\Gamma_{i}=\emptyset$.
\end{proof}

Consequently, we have that $\ones=\da$, for all Boolean functions admitting the $\ell$-mixing property with $\ell>1$. 

Ambainis and Sun had previously shown in Theorem~2 of~\cite{AS11} that their construction gave the (almost) best possible separation between block sensitivity and sensitivity for a family of Boolean functions.
Let us consider the Boolean functions $f$ which can be written as a variables-disjoint union:
\begin{align*}
f= \bigvee_{i=1}^n g(x_{i,1},\ldots,x_{i,m}). \tag{1} \label{eq:boo}
\end{align*}
Then (see for example Lemma 1 in~\cite{AS11} or Proposition 31 in~\cite{GSS13}) $\ones(f)=\ones(g)$, $\zeros(f)=n\zeros(g)$, and $\zerobs(f)=n\zerobs(g)$. So if we can find a lower bound for the sensitivity of $g$ with respect to $\zerobs(g)$, we get the best gap for $f$ by choosing $n=\ones(g) / \zeros(g)$.  
\begin{theorem}[Ambainis and Sun~\cite{AS11}]\label{AS}
  If $g$ is a Boolean function such that $\zeros(g)=1$ and $\bs(g)=\zerobs(g)$, then 
$$
    \ones(g) \geq 3\frac{\bs(g) -1}{2}.
  $$
\end{theorem}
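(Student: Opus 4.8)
The plan is to pass to a normalized compact form and then, by a reachability argument over the minimal sensitive blocks at $0^n$ (in the spirit of Kenyon and Kutin, but sharpened by the hypothesis $\zeros(g)=1$), to exhibit a single $1$-input whose sensitivity is at least $\tfrac32(\bs(g)-1)$.

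Using the normalization lemma above, I would first assume $g(0^n)=0$ and $\bs(g)=\bs(g,0^n)=:b$, witnessed by pairwise disjoint sensitive blocks $B_1,\dots,B_b$ that are minimal, i.e.\ $g(0^{B_i})=1$ but $g(0^{B})=0$ for every proper subset $B\subsetneq B_i$; write $s:=\ones(g)$. Since $\zeros(g)=1$ we have $\s(g,0^n)\le 1$, so at most one of the $B_i$ is a singleton; after discarding it, $b'\ge b-1$ blocks remain, each of size $\ge 2$, and it suffices to produce a $1$-input of sensitivity $\ge\tfrac32 b'$ (with a small additive slack, which is the role of the $-1$ in the statement).

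The heart of the argument is a reachability picture for which $\zeros(g)=1$ supplies the needed rigidity. Thinking of an input as the set of coordinates flipped away from $0^n$: each block satisfies $|B_i|\le s$, since all $|B_i|$ coordinates of $B_i$ are sensitive at the $1$-input $0^{B_i}$. I would then analyze what happens when several blocks are flipped together. Whenever a union of (parts of) blocks yields a $0$-input, the hypothesis $\zeros(g)=1$ says that at most one single-coordinate flip out of it reaches a $1$; chasing this constraint along the short paths joining $0^{B_i}$, $0^{B_j}$ and $0^{B_i\cup B_j}$ pins down, for each pair of blocks whose joint flip changes the output in an ``unexpected'' way, exactly which nearby coordinates are sensitive. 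Using these local rules I would flip a maximal chain of blocks $B_{i_1},B_{i_2},\dots$, keeping the partial unions consistent (so the output does not collapse uncontrollably), ending at a $1$-input $x^\ast$. Disjointness of the $B_i$ keeps the within-block sensitive coordinates of $x^\ast$ distinct across the chain, and the chain structure does the same for the ``boundary'' sensitive coordinates; counting them one charges each block in the chain at the amortized rate $\tfrac32$ — a block of size $\ge 3$ pays for itself, consecutive size-$2$ blocks pay in pairs.

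The step I expect to be the main obstacle is precisely this amortization: extracting the constant $\tfrac32$ rather than a weaker $1$ or $2$. It requires (i) arranging the chain — equivalently, a spanning structure on the ``interaction graph'' of the non-singleton blocks — to reach all $b'$ of them, or else treating several components within one common witness input; and (ii) the size-$2$ versus size-$\ge 3$ case analysis, whose delicate point is that a size-$2$ block contributing only one sensitive coordinate to $x^\ast$ must be adjacent in the chain to a block that compensates for it, together with a careful accounting of the additive slack. This is exactly where the inequality becomes tight, matching the Ambainis--Sun construction, so no looseness can be introduced at this step.
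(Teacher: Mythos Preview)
Your proposal misses the structural lemma that drives both the original Ambainis--Sun argument and the paper's treatment (stated in the appendix as Claim~\ref{hypercube}): when $\zeros(g)=1$, the $1$-set of $g$, viewed as an induced subgraph of the Boolean cube, is a disjoint union of subcubes, any two of which are at Hamming distance at least~$3$. This follows in two lines from $\zeros(g)=1$: if $g(x)=g(x^i)=g(x^j)=1$ then $g(x^{\{i,j\}})=1$ (else $x^{\{i,j\}}$ would have $0$-sensitivity $\ge 2$), so each connected component of $g^{-1}(1)$ is a subcube; and two such subcubes cannot be at distance $1$ (they would merge) or $2$ (a midpoint would have $0$-sensitivity $\ge 2$). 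With this in hand, each minimal sensitive block $B_i$ at $0^n$ is precisely the set of positive constraints of one subcube $S_i$, and $\ones(g,0^{B_i})$ equals the codimension of $S_i$. The distance-$\ge 3$ condition, together with disjointness of the $B_i$, gives $\sum_j |\overline{A}_j|\ge 3\cdot b(b-1)/2$ by double-counting pairwise disagreements between subcubes, and averaging over $j$ produces a subcube of codimension at least $\tfrac{3}{2}(b-1)$. This is the route the paper takes (packaged more generally inside the proof of Theorem~\ref{thm_strongAS}, specifically Claim~\ref{cases3} with $\ell\ge 3$).

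Your ``chain of blocks'' mechanism cannot substitute for this. Once the subcube picture is in place one sees why: flipping a union $B_{i_1}\cup\cdots\cup B_{i_k}$ either lands inside a single subcube $S_m$ --- in which case the $1$-sensitivity there is the codimension of $S_m$, independent of $k$ --- or lands in the $0$-set. There is no accumulation of sensitive coordinates along a chain; in particular your assertion that ``within-block sensitive coordinates of $x^\ast$'' are distinct across the chain is unfounded, since coordinates in $B_{i_1},\ldots,B_{i_{k-1}}$ need not be sensitive at $x^\ast$ at all. In the tight Ambainis--Sun construction the bound is realized already at each single input $0^{B_i}$, not at any concatenation. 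So the amortization you sketch (charging each block at rate $\tfrac32$, pairing size-$2$ blocks) has no inequality underneath it: the constant $\tfrac32$ comes from the distance-$3$ separation between subcubes fed into a double-counting/averaging argument, not from a reachability one.
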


In fact, we can notice that these functions 
belong to our framework (this claim is implicit in their proof of Theorem~\ref{AS}, but we give a proof in Appendix~\ref{frame}):
\begin{claim}\label{ASinFramework}
 Let $g$ be as in Theorem~\ref{AS}. Let $f$ be the OR of several copies of $g$, where each copy takes its input from a different set of variables, as in Eq.~\eqref{eq:boo}. Then, there exists $f^\prime$ with same block sensitivity and at most same $1$-sensitivity which admits the normalized block property, the transitive property, and the $3$-mixing property.
\end{claim}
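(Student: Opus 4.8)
The plan is to push everything down to a single copy of $g$ and then build a good DNF for $g$ directly. Since $f$ is a variable-disjoint OR of $n$ copies of $g$, any DNF $\Phi_g$ of $g$ yields a DNF $\Phi_f$ of $f$ by juxtaposing the $n$ copies (one per variable block); AND gates from different copies share no variable, so the block property and the $3$-mixing property of $\Phi_f$ follow at once from those of $\Phi_g$, and the ``shares a variable'' relation on the AND gates of $\Phi_f$ is the disjoint union of those of the copies, hence transitive as soon as it is transitive within one copy. The complexity bookkeeping is the one recalled after Eq.~\eqref{eq:boo}: $\ones(f)=\ones(g)$, $\zerobs(f)=n\,\zerobs(g)$; and since $\bs(g)=\zerobs(g)$ a short argument shows $\bs_1(f)\le\zerobs(g)\le\zerobs(f)$, so $\bs(f)=\zerobs(f)=\bs(f,0^n)$ and no complementation will be needed. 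It therefore suffices to produce, for every $g$ with $\zeros(g)=1$ and $\bs(g)=\zerobs(g)$, a compact normalized DNF of $g$ having the block, transitive, and $3$-mixing properties (allowing, if necessary, a replacement of $g$ by some $g'$ with $\bs(g')=\bs(g)$ and $\ones(g')\le\ones(g)$).

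Next I would normalize $g$ as in the proof of the normalization lemma above: flipping variables changes none of $\s,\ones,\zeros,\bs$, so we may assume $g(0^m)=0$ and $\bs(g,0^m)=\bs(g)=\zerobs(g)=:b$; fix $b$ pairwise disjoint minimal sensitive blocks $B_1,\dots,B_b\subseteq[m]$ at $0^m$, so $g(0^{B_k})=1$ and no proper subset of $B_k$ has this.

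The core step is to decompose $g^{-1}(1)$ as a disjoint union of $b$ subcubes whose positive parts are exactly the $B_k$: I claim there are $N_k\subseteq[m]\setminus B_k$ with $g=\bigvee_{k=1}^b\bigl(\bigwedge_{v\in B_k}x_v\bigr)\wedge\bigl(\bigwedge_{v\in N_k}\overline{x}_v\bigr)$. Indeed, (i) every $1$-input $z$ has $z|_{B_k}\equiv 1$ for some $k$: take a minimal $1$-certificate of $z$ with positive part $P$, note $g(0^P)=1$, so some $B_k\subseteq P$; (ii) let $N_k$ be the set of coordinates that are $0$ in every $1$-input with $B_k$ set to $1$, so each such $1$-input sits in the $k$-th subcube; (iii) the $k$-th subcube is $1$-valued: a minimum-weight counterexample $\tau$ would differ from the $1$-input $0^{B_k}$ only outside $B_k\cup N_k$, flipping any of those extra $1$'s stays in the subcube at lower weight, hence at a $1$-input, so $\tau$ would have two sensitive coordinates unless it has a single extra $1$ --- and that last case is killed by a short monotone walk between $\tau$ and a $1$-input witnessing the membership of that coordinate outside $N_k$, once more using $\zeros(g)=1$. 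Setting $A_k=B_k,\ \overline{A}_k=N_k$, condition (c) of Definition~\ref{cf} holds because $0^{B_k}$ lies in the $k$-th subcube and in no other ($B_{k'}$ nonempty and disjoint from $B_k$), the block property is immediate since the $A_k=B_k$ are pairwise disjoint, and normalization was arranged above.

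The hard part is the remaining two properties, which genuinely need the combinatorics underlying Theorem~\ref{AS}, not merely $\zeros(g)=1$. For transitivity, reduce to the case where every two clauses share a variable (otherwise $g$ splits as a variable-disjoint OR and its nonconstant parts can be folded into the list of copies making up $f$, leaving $\bs(f)$ and $\ones(f)$ untouched); then the ``shares a variable'' relation is complete inside each copy, hence transitive. For the $3$-mixing property one must show $|B_k\cap N_{k'}|+|B_{k'}\cap N_k|\ge 3$ whenever $(B_k\cup N_k)\cap(B_{k'}\cup N_{k'})\neq\emptyset$: if two interacting clauses contradicted on at most two coordinates, one could start from $0^{B_k}$, set $B_{k'}$ to $1$, and repair the at most two offending coordinates one at a time to reach a $1$-input, exhibiting a $0$-input with at least two sensitive coordinates and contradicting $\zeros(g)=1$. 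Turning this into a clean case analysis --- on whether the shared variables fall in $B_k\cap N_{k'}$, in $N_k\cap B_{k'}$, or in $N_k\cap N_{k'}$ --- and checking that the constant obtained is exactly $3$ (matching $\ones(g)\ge 3(\bs(g)-1)/2$) is the main obstacle, and it is precisely the content that is ``implicit'' in Ambainis and Sun's proof.
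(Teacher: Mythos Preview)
Your route is the paper's route: reduce to one copy, normalize so that $\zerobs(g)$ is attained at $0^m$, and exhibit a DNF whose clauses correspond to the minimal blocks $B_1,\dots,B_b$. The paper, however, packages the whole analysis into one structural lemma: when $\zeros(g)=1$, the connected components of $g^{-1}(1)$ in the Boolean cube are subcubes, any two of them at Hamming distance $\ge 3$. Once you have this, each component contributes one AND-gate; any two components are disjoint as point-sets, hence their clauses conflict on at least one variable (transitivity is then automatic, with no need for your extra ``split $g$ as a variable-disjoint OR'' step); and distance $\ge 3$ is literally the $3$-mixing inequality. Your items~(iii) and your $3$-mixing sketch are a piecemeal reconstruction of this lemma.

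There is a real gap in your item~(i). From ``$g(0^P)=1$'' you conclude ``so some $B_k\subseteq P$''. This does not follow. What you actually get is that $P$ is the positive part of \emph{some} hypercube component of $g^{-1}(1)$; but the number $p$ of such components can exceed $b$, because two positive parts may overlap and then only $b<p$ of them can be chosen pairwise disjoint. In that case your formula $\bigvee_{k\le b}(\cdots)$ computes a strict subfunction of $g$, so the displayed identity for $g$ is false. The paper faces this squarely: it keeps only the $b$ components whose positive parts are $B_1,\dots,B_b$ and calls the resulting function $g'$; then $\zerobs$ at $0^m$ is unchanged while $\ones$ can only drop (the deleted components, being at distance $\ge 3$ from the retained ones, contribute no new $0$-sensitive neighbors). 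You announced ``allowing, if necessary, a replacement of $g$ by some $g'$'' but then argued as if $g$ itself had the claimed DNF; inserting this pruning step is exactly what is missing, and once you do it the block property, transitivity, and $3$-mixing all drop out of the hypercube lemma without the case analysis you describe as ``the main obstacle''.
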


Ambainis and Sun~\cite{AS11} present an explicit Boolean function $f$ such that $\bs=\frac{2}{3}\s^2-\frac{1}{3}\s$.  The function is a variables-disjoint union
 $$
    f= \bigvee_{i=1}^{3n+2} g(x_{i,1},\ldots,x_{i,4n+2}).
  $$ The function $g$ outputs one if the $4n+2$ corresponding variables satisfy the pattern $P_{\textrm{AmbainisSun}}$ or if it is the case after an even-length cyclic rotation of the variables.
  The pattern starts with \(2n\) \(0\)s which are followed by a block of two ones and it finishes by \(n\) copies of the block \(\underline{0\_}\) (the underscore means the variable can be $0$ or $1$):
  \begin{center}
    \begin{tabular}{|c|c|cc|c|c|c|c|c|c|c|cc|c|c|}
      \hline
      0 & 0 & \ldots & \ldots & 0 & 1 & 1 & 0 & \_ & 0 & \_ & \ldots & \ldots & 0 & \_
      \\     \hline
    \end{tabular} 
  \end{center}
  As we only admit the even-length rotations, we can easily see that the normalized block property is ensured. The patterns in $g$ pairwise intersect, so we also get the transitive property. Finally, if we consider two rotations $R_1$ and $R_2$ of the pattern, we can assume that the $\underline{11}$-block in $R_1$ intersects a $\underline{0\_}$-block in $R_2$ (otherwise, we switch $R_1$ and $R_2$ and get it). Then the $\underline{11}$-block in $R_2$ will intersect a $\underline{00}$-block in $R_1$. The two rotations of the pattern disagree on at least three variables (and in fact exactly three). Hence the $3$-mixing property is also verified.

We show in Appendix~\ref{sec_gap} that other functions in literature achieving a quadratic gap (eg: Rubinstein \cite{R95}, Virza \cite{V11}, Chakraborthy \cite{S05}) fall in our framework. 

We ended up proving a result which supersedes the one mentioned in Theorem~\ref{AS} both in the lower bound and for a more general family. The above lower bound is \textbf{exactly} matched by the Boolean function constructed by Ambainis and Sun \cite{AS11}. This implies that there cannot exist a Boolean function admitting the normalized block property, the transitive property and the 2-mixing property which has a better separation between block sensitivity and sensitivity than the function constructed by Ambainis and Sun \cite{AS11}.

\begin{theorem}\label{thm_strongAS}
Any Boolean function admitting the normalized block property, the transitive property, and the $2$-mixing property and which depends on at least two variables has:  $$\bs\le \frac{2}{3}\s^2-\frac{1}{3}\s.$$
\end{theorem}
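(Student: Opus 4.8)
The plan is to use the transitive property to write $f$ as a variables-disjoint $\mathrm{OR}$ of ``pieces'', one per equivalence class of $\wedge$-gates, to prove an Ambainis--Sun-type inequality for each piece, and then to reassemble. First fix a normalized compact-form representation of $f$, which exists by hypothesis. Because this representation has the block property, the sets $A_1,\dots,A_{\dv}$ are pairwise disjoint; moreover at $0^n$ every minimal sensitive block equals some $A_i$ (flipping a sensitive block at $0^n$ satisfies some $\wedge_i$, hence contains $A_i$, and $0^{A_i}$ itself already satisfies $\wedge_i$ since $A_i\cap\overline{A}_i=\emptyset$). As the $A_i$ are disjoint this gives $\bs(f,0^n)=\dv$, and normalization gives $\bs(f)=\dv$. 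Since $f$ has the $2$-mixing property, Lemma~\ref{lem_Gamma0} yields $\Gamma=0$ and Lemma~\ref{gamma} then gives $\ones(f)=\da$.

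Now invoke the transitive property. The relation ``$\bigl(A_i\cup\overline{A}_i\bigr)\cap\bigl(A_j\cup\overline{A}_j\bigr)\neq\emptyset$'' is reflexive and symmetric, and the transitive property makes it an equivalence relation on $[\dv]$, whose classes group the $\wedge$-gates into families supported on pairwise disjoint sets of variables. Hence $f=\bigvee_k f_k$, where $f_k$ is the sub-DNF of class $k$, and for such variables-disjoint unions (cf.\ Lemma~1 of \cite{AS11} or Proposition~31 of \cite{GSS13}) $\ones(f)=\max_k\ones(f_k)$, $\zeros(f)=\sum_k\zeros(f_k)$, and $\bs(f)=\bs(f,0^n)=\sum_k\bs(f_k,0)$. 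Each $f_k$ again admits a compact-form representation with the block and $2$-mixing properties, so $\bs(f_k,0)=m_k$ (the number of gates of class $k$) and $\ones(f_k)$ equals the width of $f_k$ by Lemmas~\ref{gamma} and~\ref{lem_Gamma0}; and within a single class any two gates are related, hence $\abs{A_i\cap\overline{A}_j}+\abs{A_j\cap\overline{A}_i}\ge 2$. Finally, using normalization and $f(0^n)=0$ one checks that a class consisting of a single gate must be a lone positive literal, for which $m_k=\zeros(f_k)=\ones(f_k)=1$.

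The technical heart is a per-class inequality that strengthens and generalizes Theorem~\ref{AS}: for every $f_k$ with $m_k\ge 2$,
\[
  m_k \;\le\; \tfrac{2}{3}\,\ones(f_k)\,\zeros(f_k)\;-\;\tfrac{1}{3}\,\zeros(f_k).
\]
For $\zeros(f_k)=1$ this recovers Theorem~\ref{AS} (with a slightly sharper additive constant) and it is met with equality by the Ambainis--Sun gadget. I expect proving it to be the main obstacle: the argument should push the Ambainis--Sun proof through, via a charging/counting argument on the pairwise-intersecting family $\bigl\{(A_i,\overline{A}_i)\bigr\}_{i}$ of class $k$ that forces the negative parts $\overline{A}_i$ to be large relative to $m_k$, now calibrated against $\zeros(f_k)$ instead of against the assumption $\zeros(f_k)=1$.

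For the reassembly, set $s=\s(f)=\max\bigl(\ones(f),\zeros(f)\bigr)$, let $L$ be the number of single-literal classes (each contributing $1$ to both $\bs(f)$ and $\zeros(f)$) and call the remaining classes ``big''. Then, by the per-class inequality, $\bs(f)=L+\sum_{\text{big }k}m_k\le L+\tfrac{2}{3}s_1 z_0-\tfrac{1}{3}z_0$, where $s_1=\max_{\text{big }k}\ones(f_k)\le s$ and $z_0=\sum_{\text{big }k}\zeros(f_k)$, so that $L+z_0=\zeros(f)\le s$. A short two-case analysis, according to whether $s=\ones(f)$ or $s=\zeros(f)$ and using that $s_1\ge 2$ for any big class, then yields $\bs(f)\le\tfrac{2}{3}s^2-\tfrac{1}{3}s$. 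This is exactly where the hypothesis that $f$ depends on at least two variables is used: it forces $s\ge 2$ (equivalently, it excludes the dictator $f=x_1$, for which the inequality genuinely fails), and $s\ge 2$ is what makes the two-case analysis close. In the remaining case where there are no big classes, $f$ is the $\mathrm{OR}$ of $L\ge 2$ distinct variables, so $\bs(f)=\zeros(f)=L$, $\ones(f)=1$, and $L\le\tfrac{2}{3}L^2-\tfrac{1}{3}L$ holds directly.
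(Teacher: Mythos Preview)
Your plan is essentially the paper's own proof. Both decompose via the transitive property into variable-disjoint components, strip off the single-literal gates, and on the remaining part establish an inequality that, component by component, reads $m_k\le\tfrac23\ones(f_k)\zeros(f_k)-\tfrac13\zeros(f_k)$; the paper simply states this globally as $\bs\le\tfrac23\ones\zeros-\tfrac13\zeros$ (its Claim~\ref{clm_caseNoSingleAndGate}) after removing single-variable gates, and your reassembly step is the same computation as the paper's derivation of the theorem from that claim.

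Two remarks. First, a small error: it is not true that a class consisting of a single gate must be a lone positive literal---e.g.\ a gate $x_1\wedge x_2\wedge\overline{x_3}$ whose variables appear in no other gate forms its own class and violates nothing in the hypotheses. This does no harm: for such a class $m_k=1$, $\zeros(f_k)=1$ and $\ones(f_k)\ge2$, so the per-class inequality holds with equality and you can simply absorb these classes into the ``big'' case (this is exactly why the paper separates out \emph{gates of width one} rather than \emph{classes of size one}). Second, the per-class inequality you correctly flag as the main obstacle is where all the work lies, and the paper's argument differs somewhat from what you sketch. It does not ``calibrate against $\zeros(f_k)$'' directly; instead it introduces, for each component, the internal mixing parameter $\ell_k\ge2$ and distinguishes $\ell_k=2$ from $\ell_k\ge3$. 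When $\ell_k\ge3$ one shows $\zeros(f_k)=1$, and a pairwise-intersection count (summing $\abs{A_i\cap\overline{A}_j}$ over ordered pairs) forces some gate to have width at least $\tfrac{3(m_k-1)}2+2$. When $\ell_k=2$ the same count only gives width $\ge m_k$, but an explicit input built from a pair of gates that disagree on exactly two literals shows $\zeros(f_k)\ge2$, and this extra factor in $\zeros$ exactly compensates for the weaker width bound. In both cases one obtains $m_k\le\tfrac23\ones(f_k)\zeros(f_k)-\tfrac13\zeros(f_k)$. So your intuition that ``less $\ones$ is needed when $\zeros$ is larger'' is correct, but the mechanism is this $\ell_k$ dichotomy, which your sketch has not yet identified.
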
 

The proof of the above theorem is in Appendix~\ref{2mixing}. In a previous version of this paper, we did not assume that the number of dependent variables is at least two. However, as Kri\v{s}j\={a}nis Prusis and Andris Ambainis pointed out to us, there was a small error in the proof and indeed the univariate function $f(x)=x$ does not satisfy this inequality ($\s=\bs=1$). Moreover, they noticed that, as the $2$-mixing property implies $\ones = \da=\oneC$ (cf. Lemma~\ref{lem_Gamma0} and the following remark), their result~\cite{AP14} directly implies that any Boolean function admitting the $2$-mixing property satisfies $\bs \leq \frac{2}{3}\s^2 +\frac{1}{3}\s$.  

Our main result is to get rid of the dependence on the transitive property and the mixing property. Imposing only the normalized block property on DNFs is a weak restriction as there is no constraint on $\overline{A}_i$. Further, given the DNF in compact form representation admitting the normalized block property is a natural way to represent the function through its (maximal) block sensitivity complexity. We show the following theorem concerning Boolean functions admitting block property:

\begin{theorem}\label{block}
Any Boolean function admitting the block property has $\zerobs\le 4 \s^2$. In particular, if the representation is normalized, $\bs \leq 4\s^2$.
\end{theorem}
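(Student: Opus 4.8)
### Proof proposal for Theorem~\ref{block}

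\textbf{Setup.} Fix a Boolean function $f$ admitting the block property, and work with a compact form representation $\Phi_f$ for which $f(0^n)=0$ and $\zerobs(f)=\bs(f,0)$. By definition of block sensitivity at $0^n$, there are disjoint blocks $B_1,\ldots,B_k$ (with $k=\zerobs(f)$) such that flipping $B_r$ from $0^n$ turns the output to $1$; so each $0^{B_r}$ satisfies some AND gate, and the set of positive literals of that gate is a subset of $B_r$. Since the $B_r$ are disjoint and (block property) the sets $A_i$ are pairwise disjoint across gates, we may after relabelling assume gate $\wedge_r$ is "responsible" for $B_r$ with $A_r\subseteq B_r$ for $r\in[k]$. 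The plan is to exhibit a single input $a$ whose sensitivity is $\Omega(\sqrt{k})$, giving $\s(f)\ge\sqrt{k}/2$, i.e.\ $k\le 4\s(f)^2$.

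\textbf{Where to look for the sensitive input.} I would try to build $a$ greedily/adversarially so that many of the blocks $B_r$ remain "nearly live." Consider starting from $0^n$ (sensitivity $k$ there already if the $B_r$ were singletons, but they need not be) and think about the sizes $|A_r|$. Split into two regimes. If all $A_r$ are small, say $|A_r|\le \ell$ for a threshold $\ell$ to be chosen, then by Theorem~\ref{KenyonKutin} (Kenyon--Kutin) $\zerobs(f)=\bs_\ell(f)\le \frac{e}{(\ell-1)!}\s(f)^\ell$, which for $\ell=2$ already gives $\zerobs\le e\,\s^2$ — better than needed. So the real content is the complementary case where some blocks carry large positive-literal sets; here the idea is to pick a gate $\wedge_{i_0}$ with $|A_{i_0}|$ large, set its positive literals to $1$ (and everything outside to $0$), so that $\wedge_{i_0}$ fires and $f(a)=1$. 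Now I want to show $a$ is very sensitive. Flipping any single variable of $A_{i_0}$ kills $\wedge_{i_0}$; by the compact-form/minimality hypotheses (condition c) guarantees a witness assignment firing only $\wedge_{i_0}$), one arranges that no other gate rescues the output, so $f$ is sensitive on most of $A_{i_0}$ — this is essentially the lower bound already proved in Lemma~\ref{gamma}, giving $\ones \ge \da-\Gamma$. The crux is thus to combine a "few large blocks force high $\ones$" argument with a "many small blocks force high $\zeros$ via Kenyon--Kutin" argument, balancing the threshold.

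\textbf{The balancing step.} Concretely: let $\ell = \s(f)$ (or a constant multiple). Among the $k$ blocks, say $k_{\mathrm{small}}$ have $|A_r|\le \ell$ and $k_{\mathrm{large}}$ have $|A_r|>\ell$. The small blocks alone witness a block sensitivity of $k_{\mathrm{small}}$ restricted to blocks of size $\le\ell$, so Kenyon--Kutin bounds $k_{\mathrm{small}}\le \frac{e}{(\ell-1)!}\s^\ell$ — already $\le \s^2$ for $\ell=2$ if we only keep the positive literals as the block (we can always shrink a sensitive block to the positive literals of the firing gate: flipping exactly $A_r$ from $0^n$ fires $\wedge_r$, so $A_r$ itself is a sensitive block of size $|A_r|$). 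For the large blocks, pick any one gate $\wedge_{i_0}$ with $|A_{i_0}|>\ell$; the argument of Lemma~\ref{gamma} gives $\s(f)\ge \ones(f)\ge |A_{i_0}| - |\Gamma_{i_0}| > \ell - \Gamma$. If I can bound $\Gamma$ — the number of gates contradicting $\wedge_{i_0}$ on exactly one variable — I get a contradiction for $\ell$ chosen $\approx 2\s$. So the plan hinges on: either (i) the large blocks don't exist, and Kenyon--Kutin finishes with room to spare, or (ii) a large block exists and directly forces $\s$ to be large via the one-sided sensitivity bound. The cleanest route is probably: take $\ell := 2\s(f)$; if some $|A_r|>\ell$ then flip all of $A_r$ from $0^n$ to get input $a$ with $f(a)=1$ and $\wedge_r$ firing, then at input $a$ note $\s(f,a)\ge |A_r| - (\text{number of variables of }A_r\text{ whose flip is rescued})$, and each rescue requires a distinct gate; counting these rescuers against the disjointness of the $A_j$'s (block property!) caps them by $\s(f)$ itself, yielding $\s(f)\ge |A_r|-\s(f)>2\s(f)-\s(f)=\s(f)$, contradiction. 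Hence every $|A_r|\le 2\s(f)$, and then Kenyon--Kutin with $\ell=2\s(f)$ — or more simply just the $\ell=2$ instance applied after shrinking each block to a $2$-element sensitive sub-block when possible — gives $\zerobs\le 4\s^2$. (The final constant $4$ suggests the intended bound is the elementary one: each block $B_r$ either has $|A_r|\le 2\s$, contributing via a direct counting/Kenyon--Kutin estimate, and the worst case is blocks of size exactly related to $2\s$, $k\le 2\s\cdot 2\s$.)

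\textbf{Main obstacle.} The delicate point is the rescue-counting in case (ii): when I flip a single variable $x_j\in A_r$ at the input $a$ (which fires $\wedge_r$), the output can stay $1$ only if some other gate $\wedge_{j'}$ fires at $a^j$. I must argue these rescuing gates are few — ideally at most $\s(f)$ many — using disjointness of the $A$'s and the compact-form conditions; without the block property this counting fails, which is exactly why the transitive and mixing hypotheses were needed in the stronger Theorem~\ref{thm_strongAS}. Getting a clean bound of the form "$\#\text{rescuers}\le \s(f)$ or $\le 2\s(f)$" purely from $A_i\cap A_j=\emptyset$ is the heart of the proof, and I expect it to require looking at a cleverly chosen input (not $0^n$ or the all-positive-of-one-gate input, but perhaps an input firing a maximal antichain of compatible gates) so that the sensitivity counted there simultaneously sees many blocks and many single-variable sensitivities.
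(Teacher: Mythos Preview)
Your proposal has a genuine gap in the ``small blocks'' regime. You correctly observe (indeed, more easily than you suggest) that for the input $a$ with $a_j=1$ iff $x_j\in A_r$, every flip of a coordinate in $A_r$ kills $\wedge_r$ and cannot be rescued by any other gate: by the block property $A_{r'}\cap A_r=\emptyset$ for $r'\neq r$, and all $1$'s of $a^j$ lie in $A_r$, so $A_{r'}$ would have to be empty, contradicting $f(0^n)=0$. Hence the number of rescuers is zero, and $\s(f)\ge |A_r|$ for every $r$. So every block has $|A_r|\le \s(f)$. But now your plan to invoke Kenyon--Kutin with $\ell\approx \s(f)$ (or $2\s(f)$) fails: Theorem~\ref{KenyonKutin} gives $\bs_\ell(f)\le \frac{e}{(\ell-1)!}\s(f)^\ell$, which for $\ell=\s(f)$ is exponential in $\s(f)$, nowhere near $4\s(f)^2$. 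Your parenthetical ``$k\le 2\s\cdot 2\s$'' is pattern-matching on the target constant, not an argument; and ``shrinking each block to a $2$-element sensitive sub-block'' is not generally possible. Nothing in your outline bounds the number of small blocks.

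The paper proceeds quite differently. It proves two complementary lower bounds on $\s$ in terms of the DNF parameters $\dv$ (size) and $\da$ (width, including negative literals): first $\s\ge \lceil \dv/(2\da-1)\rceil$ (Lemma~\ref{zerosbound}), then $\s\ge \lceil \da/2\rceil$ (Lemma~\ref{onesbound}). Multiplying gives $\s^2 > \dv/4 = \zerobs/4$ (using Lemma~\ref{bs=dv}). The second lemma extends your large-block observation: it bounds not just $|A_{i^\star}|$ but the full width $\da=|A_{i^\star}|+|\overline A_{i^\star}|$, by additionally analysing $\zeros(f,a^j)$ at neighbours of $a$. The first lemma is precisely the ``cleverly chosen input firing a maximal antichain of compatible gates'' that you allude to in your final paragraph: one greedily extracts a large set $E$ of gates with pairwise $A_i\cap\overline A_j=\emptyset$ via a degree argument on an auxiliary digraph, sets all but one positive literal of each to $1$, and reads off $\zeros(f,a)\ge |E|\ge \dv/(2\da-1)$. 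That construction, not Kenyon--Kutin, is the missing ingredient.
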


The importance of the result is that the block property seems to be a quite natural restriction for studying the relations between the sensitivity and the block sensitivity. In fact, by assuming that the block sensitivity is maximized, by the blocks $B_i$, on the all zeros inputs with $f(0^n)=0$ (which is always possible), the block property intuitively asserts the output is one if from the all zeros input, we can get an input in $f^{-1}(1)$ only by flipping at least one of the blocks $B_i$. If it is not the case, it would mean there are other non-disjoint blocks which are present just for diminishing the sensitivity.  

Before presenting the proof, we prove three lemmas, after which the above result follows immediately. 

\begin{lemma}\label{bs=dv}
Any Boolean function $f$ admitting the block property has $\zerobs=\dv$.
\end{lemma}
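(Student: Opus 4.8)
The plan is to show the two inequalities $\zerobs \le \dv$ and $\zerobs \ge \dv$ separately, using the compact form hypotheses and the block property. For the easy direction, $\zerobs \ge \dv$: I would exhibit $\dv$ disjoint sensitive blocks at the all-zeros input. The natural candidate for the $i$-th block is $A_i$, the set of variables appearing positively in the $i$-th term. Flipping the coordinates of $A_i$ starting from $0^n$ makes $\wedge_i$ evaluate to true (all its positive literals are satisfied, and it has no negative literals violated since we only flipped $A_i$-variables to $1$), hence $f(0^{A_i}) = 1 \neq 0 = f(0^n)$, so each $A_i$ is a sensitive block. Moreover, the block property guarantees $A_i \cap A_j = \emptyset$ for $i \neq j$, so these are $\dv$ pairwise-disjoint sensitive blocks and $\zerobs \ge \dv$. (One should note that each $A_i$ is nonempty, since by $f(0^n)=0$ no term can be satisfied by the all-zeros input, so each term has at least one positive literal.)

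For the reverse direction $\zerobs \le \dv$: by Definition~\ref{cf}(b), the maximum $0$-block sensitivity is attained at $0^n$, so it suffices to show $\bs(f, 0^n) \le \dv$. Suppose $B_1, \ldots, B_k$ are disjoint sensitive blocks at $0^n$; then for each $j$, $f(0^{B_j}) = 1$, so some AND gate $\wedge_{i(j)}$ is satisfied by the assignment $0^{B_j}$. Since $0^{B_j}$ has ones exactly on $B_j$, this forces $A_{i(j)} \subseteq B_j$ and $\overline{A}_{i(j)} \cap B_j = \emptyset$. The key claim is that the map $j \mapsto i(j)$ is injective: if $i(j) = i(j') = i$ for $j \neq j'$, then $A_i \subseteq B_j \cap B_{j'} = \emptyset$, contradicting that $A_i \neq \emptyset$. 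Hence $k \le \dv$, giving $\zerobs \le \dv$.

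Combining the two bounds yields $\zerobs = \dv$. I expect the main subtlety — though it is minor — to be the observation that every $A_i$ is nonempty, which is exactly where the hypothesis $f(0^n) = 0$ enters; without it a term could consist solely of negated literals and the argument for injectivity (and for producing sensitive blocks in the first place) would break. Condition (c) of the compact form (minimality) is not strictly needed here, but it is consistent with the picture; the only properties actually used are $f(0^n) = 0$, the attainment of $\zerobs$ at $0^n$, and the block property $A_i \cap A_j = \emptyset$.
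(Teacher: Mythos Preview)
Your proof is correct and follows essentially the same argument as the paper: both directions are established by exhibiting the $A_i$'s as disjoint sensitive blocks at $0^n$ (using the block property and $f(0^n)=0$) for the lower bound, and by mapping each sensitive block $B_j$ to an AND gate whose positive part $A_{i(j)}\subseteq B_j$ (using Definition~\ref{cf}b and the nonemptiness of the $A_i$'s) for the upper bound. Your write-up is in fact slightly more explicit than the paper's in justifying the injectivity of $j\mapsto i(j)$, and your observation that the block property is not actually needed for the inequality $\zerobs\le\dv$ is correct (and confirmed later in the paper by Lemma~\ref{bs=tdv}).
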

\begin{proof}
From Definition~\ref{cf}a, we have that $f(0^n)=0$ and thus we have that every $A_i$ is non-empty. Now, it is easy to see that $\zerobs\ge \bs(f,0^n) \ge \dv$ -- choose each $A_i$ as a block. Any two blocks are disjoint because of the block property and by flipping any of the blocks, one of the AND gates will evaluate to 1. 

From Definition~\ref{cf}b, we know that the maximum $0$-block sensitivity is attained on $0^n$. Let the sensitive blocks for which it attains maximum \(0\)-block sensitivity be $B_1,\ldots , B_k$. Thus when some $B_i$ is flipped to all $1$s, at least one of the AND gates evaluates to 1. Since the blocks are disjoint, we can associate a distinct AND gate to each sensitive block. Therefore the number of sensitive blocks is at most the number of AND gates, i.e., $\zerobs=k\le \dv$.
\end{proof}

\begin{lemma}\label{zerosbound}
Any Boolean function admitting the block property has: $$\s\ge \left\lceil\frac{\dv}{2\da-1}\right\rceil.$$
\end{lemma}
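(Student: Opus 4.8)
The plan is to exhibit a single input $\sigma^{\ast}$ with $\s(f,\sigma^{\ast})\ge\lceil\dv/(2\da-1)\rceil$. Fix a compact-form DNF for $f$ witnessing the block property, so the positive-literal sets $A_{1},\dots,A_{\dv}$ are pairwise disjoint; each is nonempty, since otherwise $\wedge_{i}$ (and hence $f$) would evaluate to $1$ on $0^{n}$, contradicting Definition~\ref{cf}a. For every $i$ pick an arbitrary representative $z_{i}\in A_{i}$ and set $A_{i}'=A_{i}\setminus\{z_{i}\}$, so $|A_{i}'|=|A_{i}|-1\le\da-1$. For a subset $I\subseteq[\dv]$ to be chosen later, let $W=\bigcup_{i\in I}A_{i}'$ and let $\sigma^{\ast}=0^{W}$ (the input that is $1$ exactly on $W$). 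The point of this choice is that for $j\in I$ the gate $\wedge_{j}$ is one bit-flip away from being satisfied at $\sigma^{\ast}$: all of $A_{j}=A_{j}'\cup\{z_{j}\}$ is $1$ except $z_{j}$. So flipping $z_{j}$ should turn $\wedge_{j}$ on; for this flip to actually change $f$ I need $\sigma^{\ast}\in f^{-1}(0)$ and I need $\overline{A}_{j}\cap W=\emptyset$ for each $j\in I$, so that turning on $z_{j}$ does not also violate a negated literal of $\wedge_{j}$.

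The heart of the matter is selecting $I$ so that $\overline{A}_{j}\cap A_{i}'=\emptyset$ for all $i,j\in I$. Define the conflict graph $G$ on vertex set $[\dv]$ by putting an edge $\{i,j\}$ whenever $\overline{A}_{i}\cap A_{j}'\ne\emptyset$ or $\overline{A}_{j}\cap A_{i}'\ne\emptyset$. Since the sets $A_{i}'$ are pairwise disjoint, every variable occurring in $\overline{A}_{j}$ lies in at most one $A_{i}'$, so the number of $i$ with $\overline{A}_{j}\cap A_{i}'\ne\emptyset$ is at most $|\overline{A}_{j}|\le\da-1$; summing over $j$ gives $|E(G)|\le\dv(\da-1)$. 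I then invoke the standard average-degree bound on the independence number (Tur\'an's theorem, or the Caro--Wei inequality, proved by keeping each vertex that precedes all its neighbours in a uniformly random ordering): a graph on $\dv$ vertices with at most $\dv(\da-1)$ edges has an independent set $I$ with $|I|\ge\dv^{2}/(2\dv(\da-1)+\dv)=\dv/(2\da-1)$, hence (as $|I|$ is an integer) $|I|\ge\lceil\dv/(2\da-1)\rceil$. Independence of $I$ in $G$ means exactly that $\overline{A}_{j}\cap A_{i}'=\emptyset$ for all distinct $i,j\in I$, and combined with the trivial $\overline{A}_{j}\cap A_{j}'=\emptyset$ this yields $\overline{A}_{j}\cap W=\emptyset$ for every $j\in I$, as desired.

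It remains to check the two computations for $\sigma^{\ast}=0^{W}$. First, $f(\sigma^{\ast})=0$: for $l\notin I$ the block $A_{l}$ is disjoint from $W\subseteq\bigcup_{i\in I}A_{i}$, and for $l\in I$ we have $z_{l}\in A_{l}$ but $z_{l}\notin W$ (again by disjointness of the blocks), so in both cases $A_{l}\not\subseteq W$ and $\wedge_{l}(\sigma^{\ast})=0$; as this holds for every $l$, $f(\sigma^{\ast})=0$. Second, fix $j\in I$ and let $\tau$ be $\sigma^{\ast}$ with the coordinate of $z_{j}$ flipped to $1$, i.e. $\tau=0^{W\cup\{z_{j}\}}$. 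Then $A_{j}=A_{j}'\cup\{z_{j}\}\subseteq W\cup\{z_{j}\}$, while $\overline{A}_{j}\cap W=\emptyset$ and $z_{j}\notin\overline{A}_{j}$ (no term contains a variable and its negation), so $\wedge_{j}(\tau)=1$ and $f(\tau)=1\ne f(\sigma^{\ast})$. The coordinates $z_{j}$, $j\in I$, are pairwise distinct, so $\sigma^{\ast}$ has at least $|I|$ sensitive coordinates and $\s(f)\ge\s(f,\sigma^{\ast})\ge|I|\ge\lceil\dv/(2\da-1)\rceil$.

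The one genuine obstacle is that a variable lying in some block $A_{i}$ may occur negated in arbitrarily many terms, so the conflict graph $G$ can have vertices of unbounded degree and no local/greedy selection of gates can work; the resolution is the global observation that $G$ nevertheless has only $O(\dv\da)$ edges, which is precisely what the average-degree independent-set bound exploits. Beyond that, I would re-verify the edge count and the two evaluations above, and note the boundary case $\da=1$, where $G$ is edgeless and the statement reads $\s\ge\dv$, matching the OR function.
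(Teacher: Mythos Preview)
Your proof is correct and follows essentially the same strategy as the paper: select a family $I$ of AND gates that are pairwise ``non-conflicting'', set all but one positive literal of each selected gate to $1$, and observe that the resulting input is a $0$-input with $|I|$ sensitive coordinates. The paper defines a directed conflict graph with an edge $i\to j$ whenever $\overline{A}_i\cap A_j\neq\emptyset$, bounds the total number of edges by $\dv(\da-1)$ via the block property, and extracts the large independent set by an explicit greedy minimum-degree deletion; you define the (slightly sparser) undirected graph using $A_j'$ instead of $A_j$, obtain the same edge bound, and invoke Tur\'an/Caro--Wei---which is exactly what the paper's greedy procedure proves. The remaining verifications ($f(\sigma^\ast)=0$ and each $z_j$ sensitive) are identical to the paper's.
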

The previous lemma is easily seen as optimal by a multiplicative factor two by considering the OR function. 
\begin{proof}
Let $E$ be a subset of AND gates such that for any two $\wedge_i,\wedge_j\in E$, we have $A_i\cap \overline{A}_j=\emptyset$ and $A_j\cap \overline{A}_i=\emptyset$. Let $P=\underset{\wedge_i\in E}{\bigcup}P_i$, where $P_i$ is an arbitrarily chosen subset of $A_i$ of size $|A_i|-1$ (note that $|A_i|\ge 1$ as otherwise we would have $f(0^n)=1$, contradicting Definition~\ref{cf}a).  Consider $a\in\{0,1\}^n$, where $a_i=1$ if and only if $x_i\in P$. We observe that for all $\wedge_i\in E$, $\wedge_i(a)=0$. Also, for all  $\wedge_i\notin E$, we have that $A_i\cap P=\emptyset$ from the block property, and therefore $\wedge_i(a)=0$. In short, $f(a)=0$. Now for any $\wedge_i\in E$, let $x_{q(i)}\in A_i\setminus P_i$. Since $\wedge_i(a^{q(i)})=1$, we have that $\zeros(f,a)\ge |E|$.  

Now, we will prove that  there is a set $E$ such that $|E|\ge \left\lceil\frac{\dv}{2\da-1}\right\rceil$. Let $G$ be a directed graph on $\dv$ vertices where the $i^{\text{th}}$ vertex corresponds to $\wedge_i$. We have a directed edge from vertex $i$ to vertex $j$ if $\overline{A}_{i}\cap A_j\neq \emptyset$. Let $U(G)$ be $G$ with orientation on the edges removed. Consider the following procedure for constructing $E$: 
\begin{enumerate}
\item[(1)] Include to $E$, the AND gate corresponding to the vertex with the smallest degree in $U(G)$.
\item[(2)] Remove the vertex picked in (1) and all its in-neighbors and out-neighbors from $G$.
\item[(3)] Repeat (1) if $G$ is not empty. 
\end{enumerate}

From block property, we have that the out-degree of  vertex $i$ in $G$ is at most $\abs{\overline{A}_i}$. Thus the total number of edges in $G$ is at most $\underset{i\in[\dv]}{\sum} |\overline{A}_i| \le \dv(\da -1)$. This implies  that the sum of the degree of all vertices in $U(G)$ is at most $2\dv(\da -1)$.  Therefore, there exists a vertex  in $U(G)$ of degree at most $2\da -2$. By including the corresponding AND gate into $E$, the number of vertices in $G$ reduces by at most $2\da -1$. In order for $G$ to be empty, there should be at least $\left\lceil\frac{\dv}{2\da-1}\right\rceil$ iterations of the above procedure, and since cardinality of $E$ grows by 1 after each iteration, we have that $|E|\ge \left\lceil\frac{\dv}{2\da-1}\right\rceil$.  

Therefore, we have $\s\ge \zeros(f,a)\ge |E|\ge \left\lceil\frac{\dv}{2\da-1}\right\rceil$. 
\end{proof}

\begin{lemma}\label{onesbound}
Any Boolean function admitting the block property has:$$\s\ge\left\lceil\frac{\da}{2}\right\rceil.$$
\end{lemma}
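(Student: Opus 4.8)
The plan is to bound \emph{both} $\ones(f)$ and $\zeros(f)$ from below by evaluating $f$ near one carefully chosen input, and then to combine the two estimates. Fix a gate $\wedge_{i_0}$ of maximal width, so that $d_{\wedge_{i_0}}=|A_{i_0}|+|\overline{A}_{i_0}|=\da$; since $f(0^n)=0$ we have $A_{i_0}\neq\emptyset$, so fix some $x_k\in A_{i_0}$. Let $a\in\{0,1\}^n$ be the indicator vector of $A_{i_0}$ (that is, $a_m=1$ iff $x_m\in A_{i_0}$), so $\wedge_{i_0}(a)=1$ and hence $f(a)=1$.

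The first, routine, step is to observe that flipping any $x_m\in A_{i_0}$ gives $f(a^m)=0$: the flip falsifies $\wedge_{i_0}$, and no other gate $\wedge_j$ can become satisfied at $a^m$, since the block property forces $A_j\cap A_{i_0}=\emptyset$, which together with $A_j\subseteq A_{i_0}\setminus\{x_m\}$ would make $A_j$ empty, contradicting $f(0^n)=0$. In particular $f(a^k)=0$. Now let $R=\{\,x_l\in\overline{A}_{i_0}:f(a^l)=1\,\}$ and $r=|R|$. Every $x_l\in\overline{A}_{i_0}\setminus R$ is sensitive for $a$ by definition, so, adding the $|A_{i_0}|$ sensitive variables found above, $\ones(f)\ge\s(f,a)\ge|A_{i_0}|+|\overline{A}_{i_0}|-r=\da-r$.

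The second step — the real content — is to bound $\zeros(f)$ using the input $a^k$, on which $f=0$. Here $x_k$ is sensitive because $(a^k)^k=a$ and $f(a)=1$. I claim every $x_l\in R$ is \emph{also} sensitive for $a^k$. Indeed, $f(a^l)=1$ means some $\wedge_j$ is satisfied at $a^l$, whose set of $1$'s is $A_{i_0}\cup\{x_l\}$; this $j$ cannot be $i_0$ (that gate requires $x_l=0$), so the block property gives $A_j\cap A_{i_0}=\emptyset$, and since $\emptyset\neq A_j\subseteq A_{i_0}\cup\{x_l\}$ we must have $A_j=\{x_l\}$ and moreover $\overline{A}_j\cap(A_{i_0}\cup\{x_l\})=\emptyset$. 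Consequently $\wedge_j$ is still satisfied at $a^{k,l}$, whose set of $1$'s is $(A_{i_0}\setminus\{x_k\})\cup\{x_l\}$, so $f(a^{k,l})=1\neq f(a^k)$. Since $x_k\notin\overline{A}_{i_0}$, these are $r+1$ distinct sensitive coordinates, giving $\zeros(f)\ge\s(f,a^k)\ge r+1$.

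Putting the two inequalities together, $2\s(f)\ge(\da-r)+(r+1)=\da+1$, and since $\s(f)$ is an integer this yields $\s(f)\ge\lceil\da/2\rceil$. I expect the only delicate point to be the identification in the second step of the ``rescuing'' gate $\wedge_j$ as a single positive literal $x_l$ whose negative part $\overline{A}_j$ avoids $A_{i_0}$: this is precisely what links the $1$-side estimate to the $0$-side estimate, since it guarantees that additionally flipping $x_k\in A_{i_0}$ leaves $\wedge_j$ satisfied. Everything else is bookkeeping with the block property and the normalization $f(0^n)=0$.
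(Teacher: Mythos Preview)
Your proof is correct and follows essentially the same approach as the paper's: both pick a widest gate $\wedge_{i_0}$, set $a$ to the indicator of $A_{i_0}$, show all of $A_{i_0}$ is sensitive at $a$, define the same set $R$ (called $D$ in the paper) of ``non-sensitive'' variables in $\overline{A}_{i_0}$, and then argue via the block property that flipping any $x_l\in R$ at the $0$-input $a^k$ re-satisfies some other gate. The only cosmetic difference is that the paper splits into cases (either $\ones$ is already large, or else the case assumption forces $|R|$ to be large so $\zeros$ is large), whereas you simply add the bounds $\ones\ge\da-r$ and $\zeros\ge r+1$ to get $2\s\ge\da+1$; your packaging is arguably cleaner, but the content is the same.
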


\begin{proof}
If $\ones\ge \left\lceil\frac{1+\da}{2}\right\rceil$, we are done. Therefore, we will assume $\ones<\left\lceil\frac{1+\da}{2}\right\rceil$. Let $i^\star=\underset{i}{\text{argmax}}\ d_{\wedge_i}$. Consider $a\in\{0,1\}^n$ with  $a_j=1$ if and only if $x_j\in A_{i^\star}$. We note that $\wedge_{i^\star}(a)=1$ and $|a|$ (Hamming weight of $a$) is nonzero since $A_{i^\star}$ is nonempty from Definition~\ref{cf}a.  

Let $x_j\in A_{i^\star}$. We claim that $f(a^j)=0$. The proof is by contradiction. Suppose, $f(a^j)=1$. It is clear that $\wedge_{i^\star}(a^j)=0$ as  $x_j\in A_{i^\star}$. Thus, there must exist some $k\neq i^\star$, such that $\wedge_{k}(a^j)=1$. From block property, we know that $A_{i^\star}\cap A_k=\emptyset$, but all variables assigned to 1 in $a^j$ are in $A_{i^\star}$. This implies $A_k=\emptyset$. Therefore $\wedge_k(0^n)=1$, contradicting Definition~\ref{cf}a.   

Now we would like to claim that for any $x_j\in A_{i^\star}$, we have $\zeros(f,a^j)\ge 1+\left\lfloor\frac{\da}{2}\right\rfloor$. We first note that $\ones(f,a)<\left\lceil\frac{1+\da}{2}\right\rceil$ and since for any $x_j\in A_{i^\star}$, we have $f(a^j)=0$, we have that $|A_{i^\star}|<\left\lceil\frac{1+\da}{2}\right\rceil$. Let $D=\{x_p\mid x_p\in \overline{A}_{i^\star}, f(a^p)=1\}$. 
Since,  $\ones(f,a)<\left\lceil\frac{1+\da}{2}\right\rceil$, this implies $|\overline{A}_{i^\star}|-|D|+|A_{i^\star}|<\left\lceil\frac{1+\da}{2}\right\rceil$ or equivalently, $|D|> \left\lceil\frac{\da}{2}\right\rceil -1$. Fix $x_p\in D$ and $x_j\in A_{i^\star}$. Since $f(a^p)=1$, we know there exists some $k\neq i^\star$, such that $\wedge_{k}(a^p)=1$. By block property, we know that $x_j\notin A_k$, and this implies $f(a^{\{j,p\}})=1$. Thus, we have that for any fixed $x_j\in A_{i^\star}$, $\zeros(f,a^j)\ge |D|+1$ as for every $x_p\in D$, we have $f(a^{\{j,p\}})=1$ and also $f(a)=1$. Therefore, for every $x_j\in A_{i^\star}$ we have $\zeros(f,a^j)\ge |D|+1> 1+\left\lceil\frac{\da}{2}\right\rceil-1 =\left\lceil\frac{\da}{2}\right\rceil$. 

Therefore, we have that either $\ones$ or $\zeros$ is at least $\left\lceil\frac{\da}{2}\right\rceil$. 
\end{proof}

\begin{proof}[Proof of Theorem~\ref{block}]
From Lemma \ref{zerosbound} and Lemma \ref{onesbound}, we have that for any Boolean function admitting the block property $\s^2> \frac{\dv}{4}$. Combining this with Lemma~\ref{bs=dv}, we have that $\s^2> \frac{\zerobs}{4}$.
\end{proof}

We can notice that Lemma~\ref{onesbound} is optimal, i.e., we give an example of a Boolean function admitting the block property with $\zeros=\ones=\lceil \da/2 \rceil$. The set of variables is $X=\{x_1, \ldots, x_{2n+1}\}$. We describe the example by its $\wedge$-gates $\wedge_1,\ldots, \wedge_{n+1}$: for all $i\in[n]$, $A_{i}=\left\{x_{2i}\right\}$, $\overline{A}_{i}=\emptyset$, $A_{n+1}=\left\{x_{2i-1}\mid i\in [n+1]\right\}$ and $\overline{A}_{i}=\left\{x_{2i}\mid i\in [n]\right\}$.  

Finally, we conclude with an absolute lower bound on the sensitivity of functions admitting block property.

\begin{corollary}\label{senslower}
  Let $f$ be a Boolean function which depends on $n$ variables. If $f$ admits the block property, then
  $$
    \s(f) \geq \frac{n^{1/3}}{2}.
$$
\end{corollary}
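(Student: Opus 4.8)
The plan is to bound the number of variables $n$ in terms of the two structural parameters of the DNF $\Phi_f$ in compact form, namely the size $\dv$ and the width $\da$, and then substitute the bounds on $\dv$ and $\da$ in terms of $\s$ that are already available.

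First I would observe that every variable on which $f$ genuinely depends must occur in $\Phi_f$, so $n$ is at most the number of distinct variables appearing in $\Phi_f$. Each $\wedge$-gate $\wedge_i$ has fan-in $d_{\wedge_i}\le\da$, so the total number of literal occurrences in $\Phi_f$ is $\sum_{i=1}^{\dv} d_{\wedge_i}\le \dv\,\da$, and a fortiori the number of distinct variables is at most $\dv\,\da$. Hence $n\le \dv\,\da$. (Disjointness of the sets $A_i$ from the block property would give the slightly sharper $\big|\bigcup_i A_i\big|\le\dv\da$, but the crude literal count already suffices and needs no structural hypothesis at this step.)

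Next I would feed in the two estimates established earlier for functions admitting the block property. Lemma~\ref{onesbound} gives $\s\ge\big\lceil\tfrac{\da}{2}\big\rceil\ge\tfrac{\da}{2}$, hence $\da\le 2\s$; and Lemma~\ref{bs=dv} together with Theorem~\ref{block} gives $\dv=\zerobs\le 4\s^2$. Multiplying the two, $n\le\dv\,\da\le 4\s^2\cdot 2\s=8\s^3$, so $\s\ge(n/8)^{1/3}=n^{1/3}/2$, which is exactly the claimed bound.

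I do not expect a genuine obstacle here: the only new ingredient is the elementary counting inequality $n\le\dv\,\da$, and the remainder is assembling Lemma~\ref{bs=dv}, Lemma~\ref{onesbound}, and Theorem~\ref{block}. The one point to state carefully is that "$f$ depends on $n$ variables" refers to the essential variables of $f$, which form a subset of the variables of $\Phi_f$ (a compact-form DNF may still carry an irrelevant literal inside some term, so $\Phi_f$ could be over more than $n$ variables — but that only makes the inequality $n\le\dv\,\da$ easier).
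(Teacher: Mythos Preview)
Your argument is correct and essentially matches the paper's: both obtain $n\le \dv\da$ by counting literals, and both combine Lemma~\ref{onesbound} with the $\dv$--vs--$\s$ bound to get $\s^3\ge n/8$. The only cosmetic difference is that you invoke the packaged Theorem~\ref{block} (together with Lemma~\ref{bs=dv}) to bound $\dv\le 4\s^2$, whereas the paper cites Lemma~\ref{zerosbound} directly and multiplies $\s\ge \dv/(2\da)$ by $\s^2\ge(\da/2)^2$.
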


\begin{proof}
  The number of variables which appear in the DNF is at most $\dv \da$, and so $\dv \da \geq n$. By Lemma~\ref{zerosbound} and Lemma~\ref{onesbound},
\begin{equation*}
    \s^3 \geq \left(\frac{\dv}{2\da}\right) \left(\frac{\da}{2}\right)^2 \geq \frac{\dv \da}{8} \geq \frac{n}{8}.\qedhere
\end{equation*}
\end{proof}

\subsection{$t$-Block Property}
In this subsection, we extend the notion of block property to $t$-block property as follows.

\begin{property}[$t$-Block property]
A Boolean function is said to admit the $t$-block property if under a compact form representation  $\forall x\in X$, we have: $$\left|\{A_i|x\in A_i\}\right|\le t.$$
\end{property}

We have that $1$-block property is exactly the same as block property discussed in the previous subsection. Let us notice that the notion of $t$-block property is far more general than the one of read-$t$ DNF presented in~\cite{BLTV16} since, here only the number of times where the variables appear positively is bounded. 

First, we show an upper bound on Boolean functions admitting the $t$-block property in terms of the size of the DNF.

\begin{lemma}\label{bs=tdv}
Any Boolean function $f$ admitting the $t$-block property has $\zerobs\le\dv$.
\end{lemma}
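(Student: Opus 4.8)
The plan is to mimic the upper-bound half of the proof of Lemma~\ref{bs=dv}, since that argument never actually used the $1$-block property — it only used Definition~\ref{cf}(a) and (b), which still hold here. First I would invoke Definition~\ref{cf}(b): the maximum $0$-block sensitivity is attained on the all-zeroes input $0^n$. Fix disjoint blocks $B_1,\ldots,B_k \subseteq [n]$ realizing $\zerobs = \bs(f,0^n,B_1,\ldots,B_k)$, so that for each $i \in [k]$ we have $f(0^{B_i}) = f(0^n) \oplus 1 = 1$. The goal is to injectively assign to each sensitive block $B_i$ a distinct $\wedge$-gate, which forces $k \le \dv$.

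Next I would construct the assignment. Since $f(0^{B_i}) = 1$, there is at least one $\wedge$-gate, say $\wedge_{g(i)}$, with $\wedge_{g(i)}(0^{B_i}) = 1$. For $\wedge_{g(i)}$ to evaluate to $1$ on the input $0^{B_i}$ (which is zero everywhere outside $B_i$), every variable in $A_{g(i)}$ must be set to $1$, hence $A_{g(i)} \subseteq B_i$; also $\overline{A}_{g(i)} \cap B_i = \emptyset$, but only the first containment matters. Now suppose for contradiction that $g(i) = g(j)$ for $i \ne j$. Then $A_{g(i)} \subseteq B_i \cap B_j = \emptyset$ since the blocks are disjoint, so $A_{g(i)} = \emptyset$; but then $\wedge_{g(i)}(0^n) = 1$, contradicting Definition~\ref{cf}(a) that $f(0^n) = 0$. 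Hence $g$ is injective, so $k \le \dv$, i.e. $\zerobs \le \dv$.

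I do not expect any genuine obstacle here: the $t$-block property is not even needed for this direction — it is the reverse inequality $\zerobs \ge \dv$ of Lemma~\ref{bs=dv} that fails when gates can share positive literals, which is presumably why the lemma is stated as an inequality rather than an equality. The only point requiring a little care is the observation that each chosen gate $\wedge_{g(i)}$ has $A_{g(i)}$ entirely inside the single block $B_i$ and that this already yields injectivity via disjointness of the blocks together with Definition~\ref{cf}(a); this is exactly the mechanism used in Lemma~\ref{bs=dv} and transfers verbatim.
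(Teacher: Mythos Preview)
Your proposal is correct and follows essentially the same route as the paper: invoke Definition~\ref{cf}(a),(b) to reduce to $0^n$, pick for each sensitive block $B_i$ an $\wedge$-gate that fires on $0^{B_i}$, and use disjointness of the blocks to make this assignment injective. Your write-up is in fact more explicit than the paper's about why the map $i\mapsto g(i)$ is injective (via $A_{g(i)}\subseteq B_i$ and $A_{g(i)}\neq\emptyset$), and you are right that the $t$-block property plays no role in this direction.
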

\begin{proof}
From Definition~\ref{cf}b, we know that the maximum $0$-block sensitivity is attained on $0^n$. Let the sensitive blocks for which it attains maximum block sensitivity be $B_1,\ldots , B_k$. We know that $f(0^n)=0$ (from Definition~\ref{cf}a) and thus when some $B_i$ is flipped to all $1$s, the value of the function changes to 1. In other words, at least one of the AND gates evaluates to 1. Since the blocks are disjoint, we can associate a distinct AND gate to each sensitive block. Therefore the number of sensitive blocks is at most the number of AND gates, i.e., $\zerobs=k\le \dv$.
\end{proof}

Next, we prove a lower bound on Boolean functions admitting the $t$-block property in terms of t, the width of the DNF, and the size of the DNF.

\begin{lemma}\label{tzerosbound}
Any Boolean function admitting the $t$-block property has: $$\s\ge \left\lceil\frac{\dv}{3t\da-2t-\da+1}\right\rceil.$$
\end{lemma}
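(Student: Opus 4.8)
The plan is to generalize the argument of Lemma~\ref{zerosbound} (the $t=1$ case) to allow each variable to appear positively in up to $t$ clauses. As before, I want to find a large set $E$ of AND gates that "behave independently" in the sense that I can simultaneously turn each of them on by a single bit flip from a common $0$-input, while keeping everything else $0$. The obstruction to independence is now twofold: as in Lemma~\ref{zerosbound}, one clause's negated variables may block another clause's positive variables; but additionally, since the block property no longer holds, two gates in $E$ may share a positive variable, which complicates the construction of the common input.

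\textbf{Key steps.} First I would fix some $E\subseteq[\dv]$ and try to build $a\in\{0,1\}^n$ with $f(a)=0$ but $\zeros(f,a)\ge|E|$. For each $\wedge_i\in E$, pick $P_i\subseteq A_i$ of size $|A_i|-1$ and set $a$ to be the indicator of $P=\bigcup_{\wedge_i\in E}P_i$ — but I must be careful: if two gates $\wedge_i,\wedge_j\in E$ share a positive variable, choosing $P_i$ and $P_j$ sloppily could accidentally set some $A_k$ entirely to $1$. So the graph $G$ on $[\dv]$ should now carry \emph{two} kinds of edges: a directed edge $i\to j$ when $\overline{A}_i\cap A_j\neq\emptyset$ (as before), and an (undirected) edge $\{i,j\}$ when $A_i\cap A_j\neq\emptyset$. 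I want $E$ to be an independent set avoiding both edge types, chosen by the same greedy "smallest-degree vertex, delete its neighborhood" procedure. Then for $\wedge_i\in E$, flipping the single variable in $A_i\setminus P_i$ turns $\wedge_i$ on (it isn't blocked because no $\wedge_j\in E$ shares a positive variable and no $\wedge_j\in E$ has a negated variable in $A_i$; gates outside $E$ either share no positive variable with any member of $E$ — wait, that fails here). This last point is the subtlety: in the $t=1$ proof, gates outside $E$ had $A_k\cap P=\emptyset$ automatically; now they might not, so I need a more careful accounting, perhaps arguing that a gate $\wedge_k\notin E$ with $A_k\subseteq P$ would have been a legitimate candidate forcing a contradiction with minimality, or simply noting $\wedge_k(a)=0$ unless $A_k\subseteq P$, and bounding how often $A_k\subseteq P$ can happen.

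\textbf{The degree count.} This is where the $3t\da-2t-\da+1$ comes from. Each vertex $i$ contributes: out-degree at most $|\overline{A}_i|\le \da-1$ from the $\overline{A}$-$A$ edges it originates; in-degree — each negated variable $x\in\overline{A}_i$ can be a positive variable of at most $t$ other gates, so in-degree at most $(\da-1)t$; and for the $A_i\cap A_j$ edges, each positive variable $x\in A_i$ is shared with at most $t-1$ other gates, giving degree at most $\da(t-1)$ there. Summing and being careful about double-counting directed edges (each undirected image of a directed edge should be counted once for the purpose of "neighbors to delete"), the number of vertices removed per greedy step is at most $1 + (\da-1) + (\da-1)t + \da(t-1) = 3t\da - 2t - \da + 1$; hence $|E|\ge\lceil\dv/(3t\da-2t-\da+1)\rceil$, and $\s\ge\zeros(f,a)\ge|E|$ gives the bound.

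\textbf{Main obstacle.} The delicate point I expect to fight with is verifying that the common input $a$ really has $f(a)=0$ and that each single flip from $a$ activates \emph{only} the intended gate's value change — i.e., controlling the gates $\wedge_k\notin E$. When $t>1$, a gate outside $E$ can have all its positive literals inside $P$; handling this either requires tightening the definition of $E$ (so that such $\wedge_k$ would have been eligible, contradicting the greedy choice or minimality of the representation) or absorbing the bad gates into the degree bound. I'd also need to double-check the exact constant: whether directed edges in both orientations should each consume a deletion, which changes $t\da$-type terms versus $\da$-type terms, and reconcile that with the claimed coefficient $3t\da-2t-\da+1$.
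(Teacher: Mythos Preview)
Your high-level plan matches the paper's, but the two places you flag as uncertain are exactly where your argument breaks, and the paper resolves them differently from what you sketch.

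\textbf{Constructing the input $a$.} Your construction --- set $a$ to be the indicator of $P=\bigcup_{i\in E}P_i$ with $|P_i|=|A_i|-1$ --- does not guarantee $f(a)=0$, and neither of your proposed fixes works. A gate $\wedge_k\notin E$ can have $A_k\subseteq P$ and $\overline{A}_k\cap P=\emptyset$; such a $\wedge_k$ need not violate the greedy choice (it may simply have been deleted as a neighbor of some earlier pick), and there is no obvious way to ``absorb'' it into the degree bound. The paper avoids this entirely: it does not prescribe $a$, but sets $A=\bigcup_{i\in E}A_i$, considers $\mathcal{A}=\{a:f(a)=0,\ \mathrm{supp}(a)\subseteq A\}$ (nonempty since $0^n\in\mathcal{A}$), and takes $\bar a\in\mathcal{A}$ of \emph{maximal Hamming weight}. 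Then $f(\bar a)=0$ is free, and for each $\wedge_i\in E$ the definition of $E$ forces $\overline{A}_i\cap A=\emptyset$ and $A_i\cap(A\setminus A_i)=\emptyset$, so some bit $l_i\in A_i$ must be $0$ in $\bar a$; maximality gives $f(\bar a^{\,l_i})=1$. This maximality trick is the missing idea.

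\textbf{Degree count.} Your per-vertex bounds are wrong. The out-degree of $i$ (edges $i\to j$ with $\overline{A}_i\cap A_j\ne\emptyset$) is at most $t|\overline{A}_i|$, not $|\overline{A}_i|$: each variable in $\overline{A}_i$ lies in up to $t$ sets $A_j$. Your ``in-degree'' reasoning (``each negated variable $x\in\overline{A}_i$\ldots'') is actually describing out-edges again; the in-degree has no per-vertex bound from the $t$-block property. And your sum $1+(\da-1)+(\da-1)t+\da(t-1)$ equals $2t\da-t$, not $3t\da-2t-\da+1$; they differ by $(t-1)(\da-1)$. The paper gets the correct constant by separating the two edge types: the $\overline{A}$--$A$ graph $G$ has at most $\sum_i t|\overline{A}_i|\le t\dv(\da-1)$ edges total, so its minimum undirected degree is at most $2t(\da-1)$, removing at most $2t(\da-1)+1$ vertices; then the $A_i\cap A_j\ne\emptyset$ overlaps are handled in a separate deletion step, removing at most $(t-1)|A_i|\le(t-1)\da$ further vertices. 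The total is $2t(\da-1)+1+(t-1)\da=3t\da-2t-\da+1$.
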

\begin{proof}
Let $E$ be a subset of AND gates such that for any two $\wedge_i,\wedge_j\in E$, we have $ (A_i \cup \overline{A_i}) \cap A_j=\emptyset$. Let $A=\underset{\wedge_i\in E}{\bigcup}A_i$ (note that any $|A_i|\ge 1$ as otherwise we would have $f(0^n)=1$, contradicting Definition~\ref{cf}a). Consider $\mathcal{A}$ the set of $0$-vectors with support in $A$. More formally,
\[ \mathcal{A}= \{ a \in \{0,1\}^n \mid f(a)=0 \textrm{ and }\forall i, a_i=1 \implies x_i\in A \}. \]
First notice that $0^n \in \mathcal{A}$, so this set is not empty. Let $\bar{a}$ be an element of $\mathcal{A}$ with maximal Hamming weight. For any $\wedge_i \in E$, the gate $\wedge_i$ does not depend on the variables in $(A \setminus A_i)$ by definition of $E$ and $A$. So, $f(\bar{a})=0$ implies that there exists a variable $x_{l_i} \in A_i$ such that $\bar{a}_{l_i} = 0$. Then, by maximality of Hamming weight of $\bar{a}$, $f(\bar{a}^{l_i}) =1$ and so $\bar{a}$ is $0$-sensitive on $l_i$. Finally, for all $i,j \in E$ the indices $l_i$ and $l_j$ are distinct since $A_i\cap A_j =\emptyset$. Consequently, we have that $\zeros(f,\bar{a})\ge |E|$.  

Now, we will prove that  there is a set $E$ such that $|E|\ge \left\lceil\frac{\dv}{3t\da-2t-\da+1}\right\rceil$. Let $G$ be a directed graph on $\dv$ vertices where the $i^{\text{th}}$ vertex corresponds to $\wedge_i$. We have a directed edge from vertex $i$ to vertex $j$ if $\overline{A}_{i}\cap A_j\neq \emptyset$. Let $U(G)$ be $G$ with orientation on the edges removed. Consider the following procedure for constructing $E$: 
\begin{enumerate}
\item[(1)] Add to $E$, the AND gate corresponding to the vertex with the smallest degree in $U(G)$.
\item[(2)] Remove the vertex picked in (1) and all its in-neighbors and out-neighbors from $G$.
\item[(3)] Remove any vertex from $G$ associated with a gate $\wedge_j$ with $A_i\cap A_j \neq\emptyset$.
\item[(4)] Repeat from (1) if $G$ is not empty. 
\end{enumerate}

From $t$-block property, we have that the out-degree of  vertex $i$ in $G$ is at most $t\abs{\overline{A}_i}$. Thus the total number of edges in $G$ is at most $\underset{i\in[\dv]}{\sum} t|\overline{A}_i| \le t\dv(\da -1)$. This implies  that the sum of the degree of all vertices in $U(G)$ is at most $2t\dv(\da -1)$.  Therefore, there exists a vertex  in $U(G)$ of degree at most $2t\da -2t$. By including the corresponding AND gate into $E$, the number of vertices in $G$ reduces by at most $2t\da -2t+1$ at step (2). Moreover, at step (3), by the $t$-block property, there are at most $(t-1)\lvert A_i\rvert \leq t\da - \da$ gates $\wedge_j$ such that $A_i\cap A_j \neq\emptyset$ and $j\neq i$. Consequently at most $3t\da -2t-\da+1$ gates are removed at each step. In order for $G$ to be empty, there should be at least $\left\lceil\frac{\dv}{3t\da-2t-\da+1}\right\rceil$ iterations of the above procedure, and since cardinality of $E$ grows by $1$ after each iteration, we have that $|E|\ge \left\lceil\frac{\dv}{3t\da-2t-\da+1}\right\rceil$.  

Therefore, we have $\s\ge \zeros(f,a)\ge |E|\ge \left\lceil\frac{\dv}{3t\da-2t-\da+1}\right\rceil$. 
\end{proof}

As a corollary, we obtain the following.

\begin{corollary}
Let $f$ be a Boolean function admitting the $t$-block property, with $t\le \frac{\dv}{\da^{1+\varepsilon}}$, for some $\varepsilon>0$. Then, we have the following:
$$\zerobs(f)\le t\left(3  \s(f)\right)^{1+\frac{1}{\varepsilon}} .$$
\end{corollary}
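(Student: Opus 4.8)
The plan is to combine Lemma~\ref{bs=tdv} and Lemma~\ref{tzerosbound} with the structural hypothesis $t\le \dv/\da^{1+\varepsilon}$ and extract the bound by elementary manipulation of exponents; no new combinatorial input should be needed.

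First I would set aside the trivial case in which $f$ is constant, where $\zerobs(f)=0$ and there is nothing to prove; so assume $f$ is non-constant and in compact form. Then $f(0^n)=0$ forces every $A_i$ to be non-empty, hence $\da\ge 1$, which in turn gives $-2t-\da+1\le 0$ and therefore $3t\da-2t-\da+1\le 3t\da$ (this denominator appearing in Lemma~\ref{tzerosbound} being also strictly positive). Dropping the ceiling in Lemma~\ref{tzerosbound} then yields
\[
\s\ \ge\ \frac{\dv}{3t\da-2t-\da+1}\ \ge\ \frac{\dv}{3t\da},
\]
that is, $\dv\le 3t\da\,\s$.

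Next I would invoke the hypothesis $t\le \dv/\da^{1+\varepsilon}$, rewritten as $\da\le (\dv/t)^{1/(1+\varepsilon)}$, and substitute it into $\dv\le 3t\da\,\s$ to get $\dv\le 3t\,\s\,(\dv/t)^{1/(1+\varepsilon)}$. Dividing through by $(\dv/t)^{1/(1+\varepsilon)}$ and collecting the powers of $\dv$ and $t$ gives $(\dv/t)^{\varepsilon/(1+\varepsilon)}\le 3\s$, hence $\dv/t\le (3\s)^{(1+\varepsilon)/\varepsilon}=(3\s)^{1+1/\varepsilon}$, i.e. $\dv\le t\,(3\s)^{1+1/\varepsilon}$. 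Finally Lemma~\ref{bs=tdv} gives $\zerobs(f)\le\dv$, and chaining the two inequalities finishes the proof.

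The argument is essentially a short chain of inequalities once the two lemmas are in place, so I do not expect a genuine obstacle; the only point that needs care is checking the degenerate cases ($f$ constant, $\da=1$, $\s=0$) and, relatedly, verifying that the denominator $3t\da-2t-\da+1$ in Lemma~\ref{tzerosbound} is strictly positive and at most $3t\da$ — which is exactly where the observation $\da\ge 1$ enters.
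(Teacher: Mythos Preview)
Your proposal is correct and follows essentially the same route as the paper's own proof: both rewrite the hypothesis as $\da\le(\dv/t)^{1/(1+\varepsilon)}$, feed this into the bound of Lemma~\ref{tzerosbound} weakened to $\s\ge \dv/(3t\da)$, solve for $\dv$, and finish with Lemma~\ref{bs=tdv}. If anything, you are more explicit than the paper about why the denominator $3t\da-2t-\da+1$ is positive and bounded by $3t\da$, and about the degenerate cases.
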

\begin{proof}
Since $t\le \frac{\dv}{\da^{1+\varepsilon}}$, we have that $\da\le \left(\frac{\dv}{t}\right)^{{1/(1+\varepsilon)}}$. Substituting in Theorem~\ref{tzerosbound}, we have that $\s(f)\ge \frac{\dv}{3t\left(\frac{\dv}{t}\right)^{{1/(1+\varepsilon)}}}$. After rearranging and simplifying, we get that $t^\varepsilon\left(3\s(f)\right)^{1+\varepsilon}\ge (\dv)^{\varepsilon}$. We substitute Lemma~\ref{bs=tdv}, and simplify to obtain:
\begin{equation*}
t\left(3\s(f)\right)^{1+\frac{1}{\varepsilon}} \ge \zerobs(f). \qedhere
\end{equation*}
\end{proof}

\section{Low Sensitivity Boolean functions}\label{Low}

Gopalan et al.\ \cite{GNSTW16} show that functions with low sensitivity have concise descriptions, so consequently the number of such functions is small. Indeed, they show that knowing the values on a Hamming ball of radius $2\s + 1$ suffices. More precisely,

\begin{theorem}[Gopalan et al.\ \cite{GNSTW16}]\label{funique}
Let $f$ be a Boolean function of sensitivity $\s$. Then, it is uniquely specified by its values on any ball of radius $2\s$.
\end{theorem}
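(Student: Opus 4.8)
The plan is to phrase the statement as a uniqueness claim — that no Boolean function $g$ with $\s(g)\le\s(f)$ other than $f$ itself can share $f$'s values on a ball of radius $2\s$ — and to prove it by induction on Hamming weight after translating the ball to the origin. Since sensitivity is unchanged by negating coordinates, replacing $f$ (and $g$) by $x\mapsto f(x\oplus z)$ lets us assume the ball is $B(0^n,2\s)=\{x\in\{0,1\}^n:\abs{x}\le 2\s\}$, where $\abs{x}$ denotes the Hamming weight of $x$; if $n\le 2\s$ the ball is everything and there is nothing to prove, so assume $n>2\s$. Let $g$ be any Boolean function with $\s(g)\le\s(f)=\s$ that agrees with $f$ on $B(0^n,2\s)$; I will show $f(x)=g(x)$ for every $x$ by induction on $\abs{x}$.

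For the base case, every $x$ with $\abs{x}\le 2\s$ lies in the ball, so $f(x)=g(x)$ by hypothesis. For the inductive step, fix $w\ge 2\s+1$, assume $f=g$ on all inputs of weight strictly less than $w$, and let $x$ have weight $w$ with support $S$, so $\abs{S}=w$. For each $i\in S$ the input $x^i$ has weight $w-1$, hence $f(x^i)=g(x^i)$ by the inductive hypothesis. Since $\s(f,x)\le\s$, at most $\s$ indices $i\in S$ satisfy $f(x^i)\ne f(x)$, so the set $P_f=\{i\in S: f(x^i)=f(x)\}$ has $\abs{P_f}\ge w-\s$; similarly $\abs{P_g}\ge w-\s$ for $P_g=\{i\in S: g(x^i)=g(x)\}$. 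Then $\abs{P_f}+\abs{P_g}\ge 2w-2\s>w=\abs{S}$ because $w\ge 2\s+1$, so $P_f\cap P_g\ne\emptyset$; choosing $i$ in this intersection gives $f(x)=f(x^i)=g(x^i)=g(x)$, which completes the induction and hence the proof.

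I do not expect a serious obstacle here: the argument is short, and the only points that need care are bookkeeping. One must make sure the base case covers the \emph{entire} ball of radius $2\s$, so that the first weight level not already handled is $w=2\s+1$ — precisely the threshold at which the pigeonhole inequality $2w-2\s>w$ begins to hold, which is why the radius $2\s$ is the right (and, as the paper's later constructions show, tight) value. One must also keep in mind that the hypothesis $\s(g)\le\s(f)$ is genuinely used: without a sensitivity bound on the competing function $g$, its values outside the ball are entirely free. Finally, the identical argument with the two bounds $\abs{P_f}\ge w-\s(f)$ and $\abs{P_g}\ge w-\s(g)$, requiring $w\ge \s(f)+\s(g)+1$, yields the announced strengthening: any two Boolean functions that agree on a ball of radius $\s(f)+\s(g)$ are equal.
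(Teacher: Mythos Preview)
Your proof is correct. The paper does not itself prove Theorem~\ref{funique} (it is quoted from~\cite{GNSTW16}); its proof of the generalization, Theorem~\ref{fgunique}, is essentially the same argument as yours, phrased via the minimal-distance point where $f\oplus g=1$ together with Lemma~\ref{fXORg} ($\s(f\oplus g)\le\s(f)+\s(g)$), which is exactly the local content of your pigeonhole on $P_f$ and $P_g$, and whose minimal-counterexample framing is the contrapositive of your forward induction on Hamming weight.
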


We extend their observation to a more general one:

\begin{theorem}\label{fgunique}
Let $f$ and $g$ be two Boolean functions. If $f$ and $g$ coincide on a ball of radius $\s(f)+\s(g)$ then, $f=g$. 
\end{theorem}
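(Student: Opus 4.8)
The plan is to argue by contradiction: suppose $f$ and $g$ agree on the ball $B(z,r)$ with $r = \s(f)+\s(g)$, but $f \neq g$ somewhere. First I would pick a point $y$ with $f(y) \neq g(y)$ that is as close to the center $z$ as possible; write $d = d(y,z) > r$ by assumption (strictly, since they agree on the whole ball of radius $r$). The idea is to walk from $y$ back toward $z$ along a shortest path in the hypercube, flipping one disagreeing coordinate at a time, and track how the ``disagreement set'' $\{i : \text{flipping coordinate } i \text{ of the current point keeps us closer to } z\}$ splits between being sensitive for $f$ and sensitive for $g$.

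More concretely: let $y$ be a closest disagreement point to $z$, so every neighbor of $y$ lying strictly closer to $z$ is a point where $f$ and $g$ agree. There are exactly $d = d(y,z)$ such neighbors (the coordinates where $y$ and $z$ differ). For each such coordinate $i$, since $f(y^i) = g(y^i)$ but $f(y) \neq g(y)$, flipping coordinate $i$ must change the value of $f$ or change the value of $g$ (or both) — at least one of the two functions is sensitive to coordinate $i$ at input $y$. Hence the $d$ coordinates in which $y$ and $z$ differ are partitioned (with possible overlap) into those that are $f$-sensitive at $y$ and those that are $g$-sensitive at $y$. Therefore $d \le \s(f,y) + \s(g,y) \le \s(f) + \s(g) = r$, contradicting $d > r$.

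The main obstacle — really the only subtle point — is ensuring the dichotomy ``$f$ sensitive at $i$ or $g$ sensitive at $i$'' is correctly extracted from $f(y^i) = g(y^i) \neq f(y)$: if neither $f$ nor $g$ changed under flipping $i$, we would have $f(y^i) = f(y) \neq g(y) = g(y^i)$, contradicting $f(y^i) = g(y^i)$. So at least one changes, which is exactly what is needed. The only care required is the choice of $y$ as a \emph{minimum-distance} counterexample so that all $d$ relevant neighbors genuinely lie inside $B(z,r)$ (or at least are agreement points) — that is what licenses using $f(y^i) = g(y^i)$. One should also note the boundary bookkeeping: $d \geq r+1$ combined with $d \leq r$ is the contradiction, so the agreement on the \emph{closed} ball of radius $r$ is what is used. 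Setting $f = g$ and recovering Theorem~\ref{funique} is then immediate by taking $g$ with $\s(g) = \s(f)$ appropriately, or simply $g = f$ restricted suitably; in any case $\s(f)+\s(g) \le 2\s(f)$ recovers the Gopalan et al.\ bound.
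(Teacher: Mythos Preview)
Your proof is correct and is essentially the same argument as the paper's: pick a closest disagreement point $y$ to the center, observe that its $d$ neighbors on the way back toward the center are all agreement points by minimality, and deduce that each of those $d$ coordinates is sensitive for $f$ or for $g$ at $y$, forcing $d\le \s(f)+\s(g)$. The only cosmetic difference is that the paper packages the step ``if $f(y)\neq g(y)$ but $f(y^i)=g(y^i)$ then $i$ is sensitive for $f$ or for $g$ at $y$'' as a separate lemma, namely $\s(f\oplus g)\le \s(f)+\s(g)$, and then runs the minimum-distance argument on the single function $f\oplus g$; you have simply inlined that lemma.
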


Before we prove Theorem~\ref{fgunique}, we note the following handy lemma:

\begin{lemma}\label{fXORg}
Let $f$ and $g$ be two Boolean functions. We have $\s(f\oplus g)\le \s(f) +\s(g)$ and $\bs(f\oplus g)\le \bs(f) +\bs(g)$ \footnote{$\forall x\in\{0,1\}^n, (f\oplus g)(x)=f(x)\oplus g(x)$.}.
\end{lemma}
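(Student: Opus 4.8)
The plan is to prove both inequalities pointwise at each input $x$, reducing the statement about global sensitivity (and block sensitivity) to local sensitivity at a fixed point. Fix $x \in \{0,1\}^n$ and write $h = f \oplus g$. The key observation is that if $h(x) \neq h(x^i)$ for some coordinate $i$, then $f(x) \oplus g(x) \neq f(x^i) \oplus g(x^i)$, which forces $f(x) \neq f(x^i)$ or $g(x) \neq g(x^i)$ (if both pairs agreed, the XORs would agree). Hence the set of sensitive coordinates of $h$ at $x$ is contained in the union of the sensitive coordinates of $f$ at $x$ and of $g$ at $x$, giving $\s(h,x) \le \s(f,x) + \s(g,x) \le \s(f) + \s(g)$. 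Taking the max over $x$ yields $\s(f \oplus g) \le \s(f) + \s(g)$.

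For the block sensitivity bound, I would run the analogous argument at the block level. Fix $x$ and let $B_1,\dots,B_k$ be disjoint blocks realizing $\bs(h,x)$, so $h(x) \neq h(x^{B_j})$ for every $j$. By the same XOR reasoning applied to the flip $x \mapsto x^{B_j}$, each such block $B_j$ satisfies $f(x) \neq f(x^{B_j})$ or $g(x) \neq g(x^{B_j})$; that is, $B_j$ is a sensitive block for $f$ at $x$ or for $g$ at $x$. Partition $\{B_1,\dots,B_k\}$ into those sensitive for $f$ (call there are $k_f$ of them) and the rest, which are then sensitive for $g$ ($k_g$ of them), with $k = k_f + k_g$ possibly after arbitrarily assigning blocks sensitive for both. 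Since these are disjoint families, $k_f \le \bs(f,x)$ and $k_g \le \bs(g,x)$, so $\bs(h,x) = k \le \bs(f,x) + \bs(g,x) \le \bs(f) + \bs(g)$; taking the max over $x$ finishes it.

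There is no real obstacle here — both parts are immediate from the boolean identity that $a \oplus b \neq a' \oplus b'$ implies $a \neq a'$ or $b \neq b'$, combined with the fact that the disjoint-blocks structure is shared across $f$, $g$, and $f \oplus g$ (the same blocks $B_j$ are used, just reattributed). The only point requiring a sentence of care is the block case: one must observe that the same collection of disjoint blocks that witnesses $\bs(h,x)$ splits cleanly into subcollections witnessing partial block sensitivity for $f$ and for $g$, and that disjointness is inherited by each subcollection. This is where I would be slightly more explicit in the write-up, but it is routine.
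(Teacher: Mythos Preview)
Your proof is correct and follows essentially the same approach as the paper: both argue pointwise that if $(f\oplus g)(x)\neq (f\oplus g)(x^i)$ (respectively $(f\oplus g)(x)\neq (f\oplus g)(x^{B})$) then either $f$ or $g$ must change on that flip, yielding $\s(f\oplus g,x)\le \s(f,x)+\s(g,x)$ and $\bs(f\oplus g,x)\le \bs(f,x)+\bs(g,x)$, and then take the maximum over $x$. Your write-up is simply more explicit about splitting the witnessing block family between $f$ and $g$, which the paper leaves implicit.
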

\begin{proof}
For any $x\in\{0,1\}^n$ and $i\in [n]$, if $(f\oplus g)(x)\neq(f\oplus g)(x^i)$ then we have that either $f(x)\neq f(x^i)$ or $g(x)\neq g(x^i)$. This implies, for every $x\in \{0,1\}^n$, $\s(f\oplus g,x)\le \s(f,x) +\s(g,x)$. Similarly, for any $x\in\{0,1\}^n$ and $B\subseteq [n]$, if $(f\oplus g)(x)\neq(f\oplus g)(x^B)$ then we have that either $f(x)\neq f(x^B)$ or $g(x)\neq g(x^B)$. This implies, for every $x\in \{0,1\}^n$, $\bs(f\oplus g,x)\le \bs(f,x) +\bs(g,x)$. 
\end{proof}

\begin{proof}[Proof of Theorem~\ref{fgunique}]
The proof is by contradiction. Suppose there exists $a\in \{0,1\}^n$ such that for every $r\in\{0,1\}^n$ of hamming weight at most $\s(f)+\s(g)$, we have that $f(a\oplus r)=g(a\oplus r)$. This implies that for every $r\in\{0,1\}^n$ with $||r||\le \s(f)+\s(g)$, we have $(f\oplus g)(a\oplus r)=0$. Consider $x\in \{0,1\}^n$ of the smallest hamming distance from $a$ such that  $(f\oplus g)(x)=1$. If such a $x$ does not exist then it implies that $f\oplus g$ is the constant zero function. In that case we have that $f=g$, a contradiction. Therefore, let us suppose that $x$ exists as described above. Let $d$ be the hamming distance between $x$ and $a$. We know that $d>\s(f)+\s(g)$. Additionally, we know that there are exactly $d$ neighbors of $x$ at hamming distance $d-1$ from $a$. Since, $x$ was the input with the smallest distance from $a$ such that $(f\oplus g)(x)=1$, we know that the $d$ neighbors of $x$ at hamming distance $d-1$ from $a$ all evaluate to $0$ on $(f\oplus g)$. This means that $\s(f\oplus g,x)\ge d>\s(f)+\s(g)$, which is a contradiction following Lemma~\ref{fXORg}.
\end{proof}

Next, we explore the tightness of Theorem~\ref{fgunique}.

\begin{proposition}
For every $p,q\in \mathbb{N}$, greater than 1, 
there exists Boolean functions $f$ and $g$ such that $\s(f)=p$, $\s(g)=q$, and $f$ and $g$ coincide on a ball of radius $\s(f)+\s(g)-1$.
\end{proposition}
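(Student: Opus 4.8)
The plan is to construct $f$ and $g$ explicitly as ``threshold-like'' functions on disjoint coordinate blocks so that their XOR has a large sensitivity witnessed far from the center of the ball, while the ball of radius $p+q-1$ on which they agree is forced by $f\oplus g$ vanishing there. The natural candidate: work on $n = p+q$ coordinates (or slightly more), let $a=0^n$ be the center, and design $f$ and $g$ so that $(f\oplus g)(x) = 1$ exactly when $\|x\| \geq p+q$, i.e. $f\oplus g$ is the AND of all $n$ variables (the indicator of $1^{p+q}$), which has sensitivity exactly $p+q$ at the point $1^{p+q}$ and $0$ elsewhere on the ball $\|x\|\le p+q-1$. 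So I want $f$ and $g$ agreeing everywhere except at the single point $1^{p+q}$, with $\s(f)=p$ and $\s(g)=q$.

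First I would fix the target: set $g$ to be a function of sensitivity $q$ on coordinates, $f := g \oplus \mathbf{1}_{\{1^{p+q}\}}$, and check $\s(f) = p$. A clean choice is to let $g$ depend on a partition of $[p+q]$ and be built so that flipping the last bit at $1^{p+q}$ changes the value. Concretely I'd try: let $g(x) = 1$ iff $x_1 = \cdots = x_p = 1$ and at most ($q-1$) of $x_{p+1},\dots,x_{p+q}$ are... — actually the cleanest is to make $g$ and $f$ each an OR/AND combination so that the point $1^{p+q}$ has the right sensitivity after the single-point flip. The key local computation is: at $x = 1^{p+q}$, the $p+q$ neighbors split into a group of $p$ coordinates where $f$ is sensitive (and $g$ is not) and a group of $q$ coordinates where $g$ is sensitive (and $f$ is not); away from $1^{p+q}$, $f = g$, so their sensitivities agree there and must be $\le p$ resp. $\le q$ by construction. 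This forces me to pick $g$ whose unique global-sensitivity-$q$ point is adjacent to $1^{p+q}$ in the right way, e.g. $g(x) = [x_{p+1}=\cdots=x_{p+q}=1] \vee [\text{something of sensitivity} \le q \text{ using } x_1,\dots,x_p]$, with $f$ the analogous expression with the roles of the two blocks and the threshold adjusted by one.

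The verification steps, in order: (1) define $f,g$ explicitly; (2) confirm $f$ and $g$ agree on all $x$ with $\|x\|_{\text{from }0^n} \le p+q-1$, which is immediate if they differ only at $1^{p+q}$; (3) compute $\s(f,x)$ and $\s(g,x)$ at $1^{p+q}$ and check they equal $p$ and $q$; (4) check no other input has sensitivity exceeding $p$ for $f$ or $q$ for $g$ — here I'd use that on inputs other than $1^{p+q}$ the function is a small fixed formula (OR of two ANDs, or similar) whose sensitivity is easily bounded, and that the only inputs whose sensitivity could be inflated by the single-point perturbation are $1^{p+q}$ itself and its $p+q$ neighbors, all of which I check by hand; (5) conclude the ball of radius $p+q-1$ centered at $0^n$ is an agreement ball but $f \neq g$.

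The main obstacle I expect is step (4): ensuring the single-bit modification at $1^{p+q}$ does not accidentally create some \emph{other} input of sensitivity $> p$ (resp. $>q$), and simultaneously that the ``base'' function $g$ genuinely has sensitivity exactly $q$ and not less — this requires choosing the base formula carefully (a single AND of $q$ variables has a sensitive point at $1^q$, but I must embed it so that after XOR-ing in $\mathbf 1_{\{1^{p+q}\}}$ the point $1^{p+q}$ inherits precisely the extra $p$ sensitive directions and nothing else gets disturbed). A robust way around this is to add a few ``padding'' coordinates on which the functions are constant, or to take $f,g$ as ORs of several variable-disjoint copies of a small gadget plus the perturbation, isolating the interaction to one gadget; then the sensitivity bookkeeping localizes and steps (3)--(4) become a finite check. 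I would present whichever concrete construction makes this localization cleanest.
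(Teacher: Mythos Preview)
Your high-level plan coincides with the paper's: both make $f$ and $g$ differ at a single point lying at Hamming distance $p+q$ from the ball's center, so that agreement on the ball of radius $p+q-1$ is automatic. The paper even works on exactly $p+q$ variables, just as you propose.

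The gap is that you have not actually produced a pair $(f,g)$, and this is where all the content lies. Your own candidate $g(x)=x_{p+1}\wedge\cdots\wedge x_{p+q}$ already illustrates the difficulty: with $f=g\oplus\mathbf{1}_{\{1^{p+q}\}}$ one gets $\s(f,1^{p+q})=p$ as desired, but at the point $0^p1^q$ (which is far from $1^{p+q}$, so $f=g$ there) one has $\s(f,0^p1^q)=\s(g,0^p1^q)=q>p$, destroying $\s(f)=p$. More generally, for $\s(f)=p$ you need $\s(g,x)\le p$ at \emph{every} $x$ that is not $1^{p+q}$ or one of its neighbors, yet you also need $\s(g)=q$; so the high-sensitivity behaviour of $g$ must be concentrated entirely at the single perturbation point. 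Neither ``padding coordinates'' (which do not change sensitivity) nor ``variable-disjoint gadgets'' (whose $\zeros$ add up, spreading sensitivity around) achieves this concentration.

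The paper resolves this with a genuinely non-obvious construction: $f$ is a threshold function on the first $2p$ coordinates (output $0$ if the weight is below $p$, output $1$ if above $p$, and a parity-of-positions tiebreak when the weight equals exactly $p$), ignoring the remaining $q-p$ coordinates entirely. This $f$ has $\s(f)=p$ everywhere. One then chooses a specific input $a$ of weight $p$ on the first $2p$ coordinates and all ones on the last $q-p$, and sets $g$ to agree with $f$ except that $g(a)=0$. Because $f$ was insensitive to the last $q-p$ coordinates, flipping the value at $a$ makes all of them sensitive for $g$, giving $\s(g,a)=p+(q-p)=q$; the tiebreak rule is what ensures the neighbors of $a$ do not acquire sensitivity exceeding $q$. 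The verification that $\s(g,x)\le q$ at each neighbor of $a$ uses the specific combinatorics of the tiebreak and is the delicate step --- exactly the step you flagged as the obstacle but did not carry out.
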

\begin{proof}
Without loss of generality we will assume that $p\le q$. Fix $p$ and $q$. We will build two function $f$ and $g$ on $p+q$ variables. Let $a\in\{0,1\}^{p+q}$ be a special input defined as follows: $\forall i\in [p+q]$, $a_i=1$ if and only if $i=1$, or $i>2p$, or $i\neq 2p$ is even. Now we define $f$ and $g$ as follows:
$$f(x_1,\ldots ,x_{p+q})=\begin{cases}
0&\text{ if\ \ }\underset{i=1}{\overset{2p}{\mathlarger\sum}}x_i<p\\
1&\text{ if\ \ }\underset{i=1}{\overset{2p}{\mathlarger\sum}}x_i>p\\
\underset{\begin{subarray}{c}
  x_j=1, \\
  j\le 2p
  \end{subarray}}{\mathlarger\sum} j\text{ mod } 2&\text{ if\ \ }\underset{i=1}{\overset{2p}{\mathlarger\sum}}x_i=p
\end{cases}$$

$$g(x)=\begin{cases}
0&\text{ if\ }x=a\\
f(x)&\text{ otherwise.}
\end{cases}$$

Now, we will show that $\s(f)=p$. Fix $x\in\{0,1\}^{p+q}$. $x$ is not sensitive on the last $q-p$ coordinates. If $\underset{i=1}{\overset{2p}{\mathlarger\sum}}x_i<p-1$ or $\underset{i=1}{\overset{2p}{\mathlarger\sum}}x_i>p+1$ then $\s(f,x)=0$. If $\underset{i=1}{\overset{2p}{\mathlarger\sum}}x_i=p$  then $\s(f,x)=p$. If $\underset{i=1}{\overset{2p}{\mathlarger\sum}}x_i=p-1$  then $\s(f,x)\le p$. This is because for any subset of $[2p]$ of size $p+1$ there is both an odd number and an even number in the subset (by pigeonhole principle), and thus amongst its $p+1$ neighbors of hamming weight $p$ (in the first $2p$ coordinates) there must be a neighbor which is not sensitive w.r.t.\ $x$. Similarly, we have that if $\underset{i=1}{\overset{2p}{\mathlarger\sum}}x_i=p+1$  then $\s(f,x)\le p$.  

Next, we will show that $\s(g)=q$. Fix $x\in\{0,1\}^{p+q}$. If $x$ is not in the hamming ball of radius 1 centered at $a$ then, $s(g,x)=s(f,x)\le p \le q$. If $x=a$ then, it is sensitive on all the last $q-p$ coordinates and has $p$ sensitive neighbors in the first $2p$ coordinates. Thus, $\s(g,a)=q$. If $x$ is a neighbor of $a$ through one of the last $q-p$ coordinates (i.e., assuming $q-p>0$) then $\s(g,x)=p+1\le q$. If $x$ is a neighbor of $a$ through one of the first $2p$ coordinates then, we can assume $g(x)=1$. This means hamming weight of $x$ in first $2p$ coordinates is $p+1$ and we know that it is not sensitive on the last $q-p$ coordinates. From the definition of $a$, we know that there is at least one neighbor of $x$ of hamming weight $p$ in the first $2p$ coordinates such that its value on $g$ is the same as $g(x)$. Therefore $\s(g,x)\le p\le q$.   

Finally, we claim that $f$ and $g$ coincide on the ball of radius $p+q-1$ centered at $\left(a\oplus \vec{1}\right)$ (follows from the construction of $g$). 
This completes the proof as $f$ and $g$ are distinct ($f(a)\neq g(a)$) and to distinguish between them by a ball centered at $\left(a\oplus \vec{1}\right)$, we need to consider a ball of radius $p+q$.
\end{proof}

In the case of monotone Boolean functions, we can improve upon the results in Theorem~\ref{funique} and Theorem~\ref{fgunique} as follows: any monotone Boolean function $f$ is uniquely specified by its values on the ball of radius $\s$ centered at $0^n$. This is because, for any input $x$ of hamming weight greater than $\s(f)$, $f(x)$ is equal to 1 if at least one of its neighbors of hamming weight $|x|-1$ is evaluated to 1 on $f$. In other words,
$$f(x)=\underset{\begin{subarray}{c}
  y=x\oplus e_i \\
  |y|=|x|-1
  \end{subarray}}{\mathlarger{\mathlarger{\mathlarger{\mathlarger{\mathlarger\vee}}}}} f(y).$$ 
Furthermore, this result is tight because Wegener's monotone Boolean function \cite{W85} $f$ of sensitivity $\frac{1}{2}\log n+\frac{1}{4}\log\log n +\mathcal{O}(1)$  is identical to the constant zero function on the ball of radius $\s(f) -1$ centered at $0^n$.

\section{The Sensitivity Conjecture: A Computational Perspective}\label{Problem}

We would like to briefly discuss in this section a new perspective on the sensitivity conjecture. Consider a strong version of the sensitivity conjecture which was suggested by Nisan and Szegedy \cite{NS92}: for every Boolean function $f$, we have $\bs(f) \le c\cdot \s(f)^2$, for some constant $c$. Let us assume that the above conjecture is true. We note here that there is no evidence or reason to refute  this  strong version of the sensitivity conjecture. Now consider a computational problem called the \emph{sensitivity problem} defined based on this assumption.

\begin{definition}[Sensitivity Problem]
Given a circuit $C:\{0,1\}^n \to \{0,1\}$, $x \in \{0,1\}^n$, and blocks $B_1, \ldots , B_k$, the sensitivity problem is to find $y \in \{0,1\}^n$ such that $\s(C,y)\ge\sqrt{ \bs(C,x,B_1,...,B_k)/c} $. 
\end{definition}

A solution to the sensitivity problem is guaranteed to exist and a solution can be verified in $\text{poly}(n)$ time, thus the problem is in $\TFNP$. We wonder if the proof of the sensitivity conjecture would give us an efficient algorithm to solve this problem in $\P$? 

We investigated the proofs of the sensitivity conjecture for restricted classes of Boolean functions that exist in literature. In each of these proofs we indeed find an efficient algorithm to solve the above problem in $\P$. For instance, consider the class of Boolean functions admitting the normalized block property. In this case, the computational problem would be that given a DNF $\Phi$, $x \in \{0,1\}^n$, and blocks $B_1, \ldots , B_k$, find   $y \in \{0,1\}^n$ such that $\s(\Phi,y)\ge \sqrt{ \bs(\phi,x,B_1,...,B_k)/2} $ or find two clauses in $\Phi$ which violate $\Phi$ admitting the block property. This problem like the sensitivity problem is in $\TFNP$. However, the proof of Lemma~\ref{zerosbound} gives us an efficient algorithm to find an input $a_1$ with sensitivity $\left\lceil\frac{\dv}{2\da-1}\right\rceil$ and the proof of Lemma~\ref{onesbound} gives us an efficient algorithm to find an input $a_2$ with sensitivity $\left\lceil\frac{\da}{2}\right\rceil$. Since, $\bs(\phi,x,B_1,...,B_k)\le \bs(\Phi)=\dv$, either $a_1$ or $a_2$ is a solution to our problem (assuming there is no violation to the block property of $\Phi$). Thus the computational problem in the case of functions admitting the block property is in $\P$. 

Similarly, for every monotone function $f$, and every input $x$ we have $\s(f,x)=\bs(f,x)$ \cite{N91}. Therefore, for the computational version of the sensitivity problem adapted to the monotone restriction, the input $x$ will be a trivial solution and thus the computational problem would be in $\P$. 
Finally, even for the case of min-term transitive functions, we have an efficient algorithm implicit in the proof of Chakraborthy \cite{S05} who showed that for any min-term transitive function $f$, $\bs(f)\le 2\s(f)^2$.  

Returning to ponder on the existence of efficient algorithms for the sensitivity problem, while it is related to the sensitivity conjecture, it is possible that the sensitivity conjecture is true but there is no efficient algorithm for the sensitivity problem. Similarly, it is possible that an efficient algorithm for the sensitivity problem is found without resolving the sensitivity conjecture (in this case the sensitivity problem should be considered to be in $\NP$ and not in $\TFNP$). However, our progress on the sensitivity conjecture under various restricted settings seem to be by finding a vertex of high sensitivity by starting from a given input with high block sensitivity. Therefore, studying various restrictions on models of computations for Boolean functions seems to be the right direction to pursue, in order to make progress on the sensitivity conjecture.

\section{Conclusion}\label{Conclusion}
In this paper, we motivate the study of the sensitivity conjecture through restrictions on a model of computation. In this regard, we introduced a structural restriction on DNFs representing Boolean functions called the normalized block property. We showed that the examples of Boolean functions that are popular in literature for having a quadratic separation between sensitivity and block sensitivity admit this property. More importantly, we showed that the sensitivity conjecture is true for the class of Boolean functions admitting the normalized block property. Furthermore, we extended a result of Gopalan et al.\ \cite{GNSTW16} and also provided matching lower bounds for our results. Finally, we motivated a new computational problem about finding an input with (relatively) high sensitivity, with respect to the block sensitivity for a given input.

\section{Acknowledgements}
We would like to thank Satya Lokam for discussions and encouraging us to work on the sensitivity conjecture. In particular, Karthik would like to thank him for the internship at Microsoft Research. We would like to thank Avishay Tal for pointing out Corollary~\ref{senslower}. We would like to thank Roei Tell for motivation and discussions. We would like to thank Andris Ambainis and Kri\v{s}j\={a}nis Pr\={u}sis for pointing out an error in a first version of Theorem~\ref{thm_strongAS} and for highlighting the relation between this theorem and the result in~\cite{AP14}. Finally, we would like to thank the anonymous reviewers for helping us improve the presentation of the paper.

\bibliographystyle{alpha}
\bibliography{References}

\newcommand{\etalchar}[1]{$^{#1}$}
\begin{thebibliography}{ABG{\etalchar{+}}14}

\bibitem[ABG{\etalchar{+}}14]{ABGMSZ14}
Andris Ambainis, Mohammad Bavarian, Yihan Gao, Jieming Mao, Xiaoming Sun, and
  Song Zuo.
\newblock Tighter relations between sensitivity and other complexity measures.
\newblock In {\em Automata, Languages, and Programming - 41st International
  Colloquium, {ICALP} 2014, Copenhagen, Denmark, July 8-11, 2014, Proceedings,
  Part {I}}, pages 101--113, 2014.

\bibitem[AP14]{AP14}
Andris Ambainis and Krisjanis Prusis.
\newblock A tight lower bound on certificate complexity in terms of block
  sensitivity and sensitivity.
\newblock In {\em Mathematical Foundations of Computer Science 2014 - 39th
  International Symposium, {MFCS} 2014, Budapest, Hungary, August 25-29, 2014.
  Proceedings, Part {II}}, pages 33--44, 2014.

\bibitem[AS11]{AS11}
Andris Ambainis and Xiaoming Sun.
\newblock New separation between \emph{s(f)} and \emph{bs(f)}.
\newblock {\em Electronic Colloquium on Computational Complexity {(ECCC)}},
  18:116, 2011.

\bibitem[BdW02]{BD02}
Harry Buhrman and Ronald de~Wolf.
\newblock Complexity measures and decision tree complexity: a survey.
\newblock {\em Theor. Comput. Sci.}, 288(1):21--43, 2002.

\bibitem[BLTV16]{BLTV16}
Mitali Bafna, Satyanarayana~V. Lokam, S{\'{e}}bastien Tavenas, and Ameya
  Velingker.
\newblock On the sensitivity conjecture for read-k formulas.
\newblock In {\em 41st International Symposium on Mathematical Foundations of
  Computer Science, {MFCS} 2016, August 22-26, 2016 - Krak{\'{o}}w, Poland},
  pages 16:1--16:14, 2016.

\bibitem[CD82]{CD82}
Stephen~A. Cook and Cynthia Dwork.
\newblock Bounds on the time for parallel ram's to compute simple functions.
\newblock In {\em Proceedings of the 14th Annual {ACM} Symposium on Theory of
  Computing, May 5-7, 1982, San Francisco, California, {USA}}, pages 231--233,
  1982.

\bibitem[CDR86]{CDR86}
Stephen~A. Cook, Cynthia Dwork, and R{\"{u}}diger Reischuk.
\newblock Upper and lower time bounds for parallel random access machines
  without simultaneous writes.
\newblock {\em {SIAM} J. Comput.}, 15(1):87--97, 1986.

\bibitem[Cha11]{S05}
Sourav Chakraborty.
\newblock On the sensitivity of cyclically-invariant boolean functions.
\newblock {\em Discrete Mathematics \& Theoretical Computer Science},
  13(4):51--60, 2011.

\bibitem[Dru11]{D11}
Andrew Drucker.
\newblock Block sensitivity of minterm-transitive functions.
\newblock {\em Theor. Comput. Sci.}, 412(41):5796--5801, 2011.

\bibitem[GNS{\etalchar{+}}16]{GNSTW16}
Parikshit Gopalan, Noam Nisan, Rocco~A. Servedio, Kunal Talwar, and Avi
  Wigderson.
\newblock Smooth boolean functions are easy: Efficient algorithms for
  low-sensitivity functions.
\newblock In {\em Proceedings of the 2016 {ACM} Conference on Innovations in
  Theoretical Computer Science, Cambridge, MA, USA, January 14-16, 2016}, pages
  59--70, 2016.

\bibitem[GSS13]{GSS13}
Justin Gilmer, Michael Saks, and Srikanth Srinivasan.
\newblock Composition limits and separating examples for some boolean function
  complexity measures.
\newblock In {\em 2013 IEEE Conference on Computational Complexity}, pages
  185--196. IEEE, 2013.

\bibitem[HKP11]{HKP11}
Pooya Hatami, Raghav Kulkarni, and Denis Pankratov.
\newblock Variations on the sensitivity conjecture.
\newblock {\em Theory of Computing, Graduate Surveys}, 4:1--27, 2011.

\bibitem[KK04]{KK04}
Claire Kenyon and Samuel Kutin.
\newblock Sensitivity, block sensitivity, and l-block sensitivity of boolean
  functions.
\newblock {\em Inf. Comput.}, 189(1):43--53, 2004.

\bibitem[LZ16]{LZ16}
Chengyu Lin and Shengyu Zhang.
\newblock Sensitivity conjecture and log-rank conjecture for functions with
  small alternating numbers.
\newblock {\em CoRR}, abs/1602.06627, 2016.

\bibitem[Nis91]{N91}
Noam Nisan.
\newblock {CREW} prams and decision trees.
\newblock {\em {SIAM} J. Comput.}, 20(6):999--1007, 1991.

\bibitem[NS94]{NS92}
Noam Nisan and Mario Szegedy.
\newblock On the degree of boolean functions as real polynomials.
\newblock {\em Computational Complexity}, 4:301--313, 1994.

\bibitem[Rub95]{R95}
David Rubinstein.
\newblock Sensitivity vs. block sensitivity of boolean functions.
\newblock {\em Combinatorica}, 15(2):297--299, 1995.

\bibitem[Sun07]{S07}
Xiaoming Sun.
\newblock Block sensitivity of weakly symmetric functions.
\newblock {\em Theor. Comput. Sci.}, 384(1):87--91, 2007.

\bibitem[Vir11]{V11}
Madars Virza.
\newblock Sensitivity versus block sensitivity of boolean functions.
\newblock {\em Inf. Process. Lett.}, 111(9):433--435, 2011.

\bibitem[Weg85]{W85}
Ingo Wegener.
\newblock The critical complexity of all (monotone) boolean functions and
  monotone graph properties.
\newblock {\em Information and Control}, 67(1-3):212--222, 1985.

\end{thebibliography}

\appendix
\clearpage

\section{Missing Proofs}

\subsection{The case of Block property, $2$-Mixing property, and Transitivity property}
	\label{2mixing}
We prove here Theorem~\ref{thm_strongAS}.
The proof is an adaptation of the one of Theorem~2 in~\cite{AS11}.
\begin{theorem*}[Restatement of Theorem~\ref{thm_strongAS}]
If $f$ is a Boolean function which depends on at least two variables and represented in a compact form which admits the block property, the $2$-mixing property and the transitive property then, 
\begin{equation} \bs \leq \frac{2}{3}\s^2-\frac{1}{3}\s. \label{eqn_bs-s}
\end{equation}
\end{theorem*}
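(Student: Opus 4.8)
The plan is to exploit the three hypotheses one at a time. First I would record the easy reductions: since the representation is a normalized compact form, Lemma~\ref{bs=dv} gives $\bs=\zerobs=\dv$; since $f$ has the $2$-mixing property, Lemma~\ref{lem_Gamma0} and the remark following it give $\Gamma=0$, hence $\ones=\da$ and so $\s\ge\da$; and trivially $\s\ge\zeros\ge 1$. Thus it suffices to prove $3\,\dv\le\s(2\s-1)$. Next, using the transitive property, partition the $\dv$ AND-gates into the equivalence classes $C_1,\dots,C_m$ of the relation ``$(A_i\cup\overline A_i)\cap(A_j\cup\overline A_j)\neq\emptyset$''; gates in distinct classes share no variable, so $f=\bigvee_{k=1}^m f_k$ is a variables-disjoint OR in which every $f_k$ inherits a normalized compact form, the block property, and (with every pair of gates of a class now touching) the $2$-mixing property. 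The standard facts about such unions (cf.\ the discussion around Eq.~\eqref{eq:boo}) give $\ones(f)=\max_k\ones(f_k)$ and $\zeros(f)=\sum_k\zeros(f_k)$, while Lemma~\ref{bs=dv} gives $\dv(f_k)=\zerobs(f_k)=|C_k|$ and hence $\dv=\sum_k|C_k|$. One structural remark is needed: by the block property and the $2$-mixing property a width-$1$ AND-gate cannot touch any other gate, so it is alone in its class; consequently every class $C_k$ that is \emph{not} a single dictator has $\ones(f_k)\ge 2$.

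The heart of the proof is a per-class bound, obtained by adapting the argument of Theorem~2 of~\cite{AS11}: for every class $C_k$ which is not a single dictator,
\[ 3\,\dv(f_k)\;\le\;\bigl(2\,\ones(f_k)-1\bigr)\,\zeros(f_k). \]
For $|C_k|=1$ this is immediate ($\dv(f_k)=1$, $\ones(f_k)\ge 2$, $\zeros(f_k)\ge 1$). For $|C_k|\ge 2$ one re-runs the Ambainis--Sun counting in DNF language: $\bs(f_k)$ is attained at $0^n$ with the disjoint positive-literal sets $A_i$, $\wedge_i\in C_k$, as blocks; one builds an input $b$ with $f_k(b)=0$ that is $0$-sensitive along one coordinate of each $A_i$, and pairs it with a count of the $1$-sensitivity of the points obtained by switching a single block on. The key new ingredient is that any two gates of $C_k$ disagree on \emph{at least two} variables --- this is exactly the $2$-mixing property, and it is what upgrades the Ambainis--Sun estimate from $\tfrac23\s^2+O(\s)$ to the exact constant; averaging the counts over the gates of $C_k$, the size-$\ge 2$ disagreements supply precisely the slack needed for the displayed inequality, which for $\zeros(f_k)=1$ reads $\ones(f_k)\ge\frac12(3\dv(f_k)+1)$ and is met with equality by the Ambainis--Sun example.

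To assemble the pieces, let $S$ be the set of classes that are single dictators (so $\dv(f_k)=\zeros(f_k)=1$ there) and $L$ the rest, and put $Z_L=\sum_{k\in L}\zeros(f_k)$. Using $\ones(f_k)\le\da\le\s$ on the classes of $L$ and $|S|+Z_L=\zeros(f)\le\s$, we get $3\,\dv=3|S|+3\sum_{k\in L}\dv(f_k)\le 3|S|+(2\da-1)Z_L\le 3(\s-Z_L)+(2\s-1)Z_L=3\s+(2\s-4)Z_L$. If $L\neq\emptyset$ then some gate has width $\ge 2$, so $\s\ge\da\ge 2$, and since $Z_L\le\s$ the right-hand side is at most $3\s+(2\s-4)\s=\s(2\s-1)$. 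If $L=\emptyset$ then $f$ is an OR of $\dv$ distinct variables, so $\zeros(f)=\dv$ and (as $f$ depends on at least two variables) $\s\ge\dv\ge 2$, whence $3\,\dv=3|S|\le 3\s\le\s(2\s-1)$. Either way $3\,\dv\le\s(2\s-1)$, i.e.\ $\bs=\dv\le\frac23\s^2-\frac13\s$.

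The main obstacle is clearly the per-class inequality for $|C_k|\ge 2$: getting the \emph{exact} constant (not $\tfrac23\s^2+O(\s)$, which is all a black-box application of Theorem~\ref{AS} would give) forces a careful re-derivation of the Ambainis--Sun counting and full use of the $2$-mixing property, and also requires handling classes with $\zeros(f_k)>1$, which~\cite{AS11} does not treat. Everything else is bookkeeping around the variables-disjoint decomposition and the elementary optimization above, together with isolating the genuinely degenerate case $f(x)=x$, which is exactly what the hypothesis on the number of variables removes.
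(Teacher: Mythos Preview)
Your overall architecture is sound and close to the paper's: decompose via the transitive property into variable-disjoint classes, observe that width-$1$ gates are forced (by the $2$-mixing property) to be isolated dictator classes, prove a per-class inequality $3\,\dv(f_k)\le(2\ones(f_k)-1)\zeros(f_k)$ on the non-dictator classes, and assemble. Your assembly arithmetic is correct, and the per-class inequality you state is exactly the single-component case of the paper's central Claim~\ref{clm_caseNoSingleAndGate}. (The paper strips the width-$1$ gates first and then proves Claim~\ref{clm_caseNoSingleAndGate} for the whole remaining multi-component function; you decompose first and apply it class by class. These are equivalent reorganisations, and yours is arguably cleaner since only the single-component version is needed.)

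The gap is that you do not prove the per-class inequality for $|C_k|\ge 2$, and the sketch you offer is both too vague and in one place wrong. You write that ``one builds an input $b$ with $f_k(b)=0$ that is $0$-sensitive along one coordinate of each $A_i$''; but such a $b$ would give $\zeros(f_k)\ge|C_k|$, which is false precisely in the Ambainis--Sun class (there $\zeros=1$ while $|C_k|=2n+1$). What the paper actually does, and what any proof reaching the exact constant must do, is a case split you do not mention: within a class, let $\ell$ be the minimum pairwise disagreement.
\begin{itemize}
\item If $\ell\ge 3$, one shows $\zeros(f_k)=1$ (two distinct sensitive coordinates at a $0$-input would force two gates to disagree on at most two literals), and then the disagreement double-count gives $\da(f_k)\ge\frac{3|C_k|+1}{2}$; this is the AS11 regime and yields $(2\ones-1)\zeros\ge 3|C_k|$ with equality on the Ambainis--Sun function.
\item If $\ell=2$, one picks two gates disagreeing on exactly two coordinates and builds an input that is $0$-sensitive on both, so $\zeros(f_k)\ge 2$; separately the disagreement count gives $\da(f_k)\ge|C_k|$, and then $(2\ones-1)\zeros\ge 2(2|C_k|-1)\ge 3|C_k|$.
\end{itemize}
These two cases are handled by genuinely different constructions for $\zeros$ and give different lower bounds on $\da$; the paper packages them via the weight $I(\ell)$ in Claims~\ref{zeros=k} and~\ref{cases3}. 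You correctly flag that AS11 only treats $\zeros(f_k)=1$ --- that is exactly the $\ell\ge 3$ branch --- but the $\ell=2$ branch is the new work, and your proposal supplies neither the construction giving $\zeros(f_k)\ge 2$ nor the bound $\da(f_k)\ge|C_k|$. Without this bifurcation the per-class inequality, and hence the theorem with the exact constant $\tfrac23$, is not established.
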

\begin{proof}
In fact, we will show that 
\begin{claim}\label{clm_caseNoSingleAndGate}
	If all the hypotheses of Theorem~\ref{thm_strongAS} are satisfied and if all the AND gates of the DNF (given by the compact form) depend on at least two variables, we have: 
\begin{align}\label{eqn_s0s1}\bs\le \frac{2}{3}\ones \zeros - \frac{1}{3}\zeros.\end{align}
\end{claim}

First, let us show how the theorem follows from Claim~\ref{clm_caseNoSingleAndGate}. If the DNF (given by the compact form) does not contain an AND gate which depends on exactly one variable, then the theorem is immediate since $\ones,\zeros \le \s$. So, let us assume that there are $p$ AND gates in the DNF which depend on exactly one variable (with $p\geq 1$). Let $\wedge_j$ be such a gate. We know $\wedge_j$ is of the form $x_i$  (as $f(0^n)=0$, the variable has to appear positively). Now, let us assume that the variable $x_i$ appears positively in another AND gate $\wedge_k$ (with $k\neq j$). In this case, the formula does not depend on this new AND gate, more formally, $S_k \subseteq S_j$ which contradicts the fact that $f$ is given in a compact form. Then, let us assume that the variable $x_i$ appears negatively in an AND gate $\wedge_k$. This time it contradicts the $2$-mixing property between the two gates $\wedge_j$ and $\wedge_k$. Consequently, the variable $x_i$ appears only under the gate $\wedge_j$. In particular, after renaming the variables, $f(\mathbf{x})$ is of the form $g(x_1,\ldots,x_{n-p}) \vee x_{n-p+1} \vee \ldots \vee x_n$ where $g$ is a Boolean function represented in the compact form, admitting the block property, the $2$-mixing property and the transitive property and such that any AND gate depends on at least two variables. As we know $\bs(f)=\zerobs(f)$, we have:
\begin{align}
\bs(f) & = \bs_0(g) + \bs_0(x_{n-p+1})+ \ldots +\bs_0(x_n) \nonumber \\
	& \leq  \frac{2}{3} \ones(g)\zeros(g) - \frac{1}{3}\zeros(g) +p \quad (\textrm{By Claim}~\ref{clm_caseNoSingleAndGate}) \nonumber \\
	& = \frac{2}{3} \ones(f) (\zeros(f) -p) - \frac{1}{3}(\zeros(f) -p) +p \nonumber \\
	& = \frac{2}{3} \ones(f)\zeros(f) - \frac{1}{3}\zeros(f)  +p(-\frac{2}{3}\ones(f)+\frac{1}{3}+1). \label{eqn_bs}
\end{align}
Consequently, the theorem holds as soon as $p(-\frac{2}{3}\ones(f)+\frac{1}{3}+1) \leq 0$, i.e., as soon as $\ones(f) \geq 2$. Hence, let us assume that $\ones(f)=1$ (as $f$ is supposed to depend on at least two variables, we know that $\ones(f) \geq 1$). In particular, $\s(f)=\zeros(f)\geq \ones(f)$. Moreover, in the all zeros input, $f$ is sensitive on the last $p$ indices, then $\zeros(f) \geq p$.  By~(\ref{eqn_bs}),
\begin{align*}
\bs(f) & \leq \frac{2}{3}\zeros(f) -\frac{1}{3}\zeros(f) + \frac{2}{3}\zeros(f) \\
	& \leq \frac{4}{3}\zeros(f) - \frac{1}{3}\zeros(f).
\end{align*}
Then~(\ref{eqn_bs-s}) is satisfied as soon as $\zeros(f) \geq 2$. Consequently, the last case corresponds to a Boolean function $f$ such that $\zeros(f)=\ones(f)=1$. This is possible only if $f$ depends on only one variable which is forbidden by hypothesis of the theorem.
\end{proof}

Let us prove now Claim~\ref{clm_caseNoSingleAndGate}, that will finish the proof of the theorem.

\begin{proof}
The transitive property naturally partitions the gates $\wedge_i$'s: $\wedge_i$ and $\wedge_j$ belong to the same component if and only if $(A_i\cup\overline{A}_i)\cap (A_j\cup\overline{A}_j)\neq\emptyset$. Let $k$ be the number of components. We relabel the AND gates using $(i,j)$ to mean that $\wedge_{i,j}$ is the $j^{\text{th}}$ AND gate in the $i^{\text{th}}$ component. More precisely, we have the following gates: $\wedge_{i,1},\ldots ,\wedge_{i,m_i}$, for all $i\in[k]$. Clearly we have that $\sum_{i=1}^km_i=\dv$. 
For every $i\in[k]$ such that $m_i>1$ let $\ell_i$ be the largest integer such that the following holds for all distinct $j_1,j_2\in[m_i]$:
$$\left\lvert\left(A_{i,j_1}\cap \overline{A}_{i,j_2}\right)\cup\left(A_{i,j_2}\cap \overline{A}_{i,j_1}\right)\right\rvert\ge \ell_i.$$
Let us notice that the $2$-mixing property implies that $\ell_i \geq 2$ for all such $i$. Moreover, if $m_i=1$, we will define $\ell_i=3$. Let $I(\ell_i)$ be defined as follows:
$$I(\ell_i)=\begin{cases}
2&\text{ if }\ell_i = 2\\
1&\text{ if }\ell_i\ge 3.\\
\end{cases}$$
We first assume the following relations between the sensitivity and the parameters of our DNF representation:
\begin{claim}\label{zeros=k}
$$\ones\zeros\ge \da\sum_{i=1}^k I(\ell_i) \quad \textrm{and} \quad \zeros \le \left(\sum_{\ell_i=2} m_{i}\right)+\left(\sum_{\ell_i\ge 3} 1 \right).$$
\end{claim}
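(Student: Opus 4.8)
\textbf{Proof proposal for Claim~\ref{zeros=k}.}

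My plan is to establish the two inequalities separately, each time by exhibiting a single carefully chosen input and lower-bounding (or upper-bounding) a sensitivity quantity on it. For the first inequality $\ones\zeros\ge \da\sum_{i=1}^k I(\ell_i)$, I would start from the AND gate $\wedge_{i^\star}$ of maximal width $\da$. Set $a$ to be the input whose support is exactly $A_{i^\star}$, so $\wedge_{i^\star}(a)=1$ and $f(a)=1$. The plan is to show that flipping any variable of $A_{i^\star}$ to $0$ makes $f$ output $0$: indeed by the block property no other $A_j$ is contained in $A_{i^\star}$, so no other gate can fire, and this uses $f(0^n)=0$ exactly as in the proof of Lemma~\ref{onesbound}. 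This gives a $0$-input, call it $a^j$ for $x_j\in A_{i^\star}$, and on each such $a^j$ I want to count $\da$-many (up to the factor coming from $I(\ell_i)$) sensitive coordinates. The key point is that within the component of $\wedge_{i^\star}$, if $\ell_{i^\star}=2$ then two distinct gates of that component can disagree on as few as two variables, which is what forces the factor $I(\ell_i)=2$; if $\ell_{i^\star}\ge 3$ the gates are ``spread out'' enough that the factor is $1$. Carrying this across all $k$ components (each maximal-width gate in its own component, using the block property to keep supports disjoint between components) is where the sum $\sum_i I(\ell_i)$ should emerge, with each term weighted by $\da$; the product $\ones\zeros$ appears because one factor counts gates/components ($\zeros$-type contribution) and the other counts widths ($\ones$-type contribution), mirroring the split in Lemma~\ref{onesbound} and Lemma~\ref{zerosbound}.

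For the second inequality $\zeros \le \left(\sum_{\ell_i=2} m_{i}\right)+\left(\sum_{\ell_i\ge 3} 1 \right)$, the plan is to take any input $x$ with $f(x)=0$ achieving $\zeros=\zeros(f,x)$ and to injectively assign to each sensitive coordinate of $x$ a gate (or a component), so that a component with $\ell_i\ge 3$ can be charged at most once while a component with $\ell_i=2$ can be charged at most $m_i$ times. Concretely, if $x^c$ makes $f$ output $1$, pick a gate $\wedge_{i,j}$ with $\wedge_{i,j}(x^c)=1$; since $\wedge_{i,j}(x)=0$, the coordinate $c$ lies in $A_{i,j}\cup\overline{A}_{i,j}$, so $c$ determines the component $i$. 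Two distinct sensitive coordinates $c,c'$ charged to the same gate $\wedge_{i,j}$ would force that gate to depend on both and, via the $2$-mixing property together with the assumption that every gate depends on at least two variables, this should be impossible — so at most one sensitive coordinate per gate, giving $\le m_i$ per component automatically. The improvement to $\le 1$ when $\ell_i\ge 3$ is the delicate part: here I would argue that two sensitive coordinates landing in different gates $\wedge_{i,j_1},\wedge_{i,j_2}$ of the same high-$\ell_i$ component lead to a contradiction with compactness or with minimality, because the two gates overlap on a variable and are ``$3$-separated,'' so one of the flipped inputs would already have been $1$ before the flip.

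The main obstacle I anticipate is the second half of the second inequality — proving that a component with $\ell_i\ge 3$ contributes only $1$ rather than $m_i$ to $\zeros$. This is exactly the place where the transitive property (which glues the component together) and the $3$-mixing-like spreading ($\ell_i\ge 3$) must both be used in tandem, and it is the step that distinguishes this bound from the cruder $\bs\le 2\s^2$. I expect the argument to go by contradiction: assume two coordinates of the $\zeros$-witness are charged to two different gates of such a component, use transitivity to find the overlap variable, then use $\ell_i\ge 3$ to show that the single flip could not have been responsible for switching the output, contradicting the choice of the witness or the compact form. Once Claim~\ref{zeros=k} is in hand, the bound $\bs\le\frac{2}{3}\s^2-\frac{1}{3}\s$ of Claim~\ref{clm_caseNoSingleAndGate} should follow by combining $\bs=\dv=\sum_i m_i$ (Lemma~\ref{bs=dv}) with the two displayed inequalities and optimizing the resulting expression in the $m_i$'s and $\ell_i$'s, which is routine once the structural claim is established.
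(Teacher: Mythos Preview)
Your plan for the second inequality is essentially correct and matches the paper. The ``delicate'' step for components with $\ell_i\ge 3$ is a Hamming-distance argument: if two coordinates $c,c'\in X_i$ are both sensitive at a $\zeros$-witness $y$, then some gates $\wedge_{i,j_1},\wedge_{i,j_2}$ of component $i$ fire on $y^c,y^{c'}$ respectively; but any point of $S_{i,j_1}$ differs from any point of $S_{i,j_2}$ on at least $\ell_i\ge 3$ coordinates (by definition of $\ell_i$), while $y^c$ and $y^{c'}$ differ on at most two, forcing $j_1=j_2$ and hence $c=c'$ (an AND gate that is off is one flip away from firing on at most one coordinate). No appeal to $2$-mixing or to the two-variable hypothesis is needed here.

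For the first inequality there is a real gap. The paper proves it as the product of two \emph{independent} lower bounds, $\ones\ge\da$ and $\zeros\ge\sum_{i} I(\ell_i)$, established on two different witnesses, and then multiplies; your sketch tries to extract both factors from the same family of inputs $a^j$ and ends up getting neither cleanly. Concretely: flipping only coordinates of $A_{i^\star}$ on the input $a$ with support $A_{i^\star}$ yields $\ones(f,a)\ge |A_{i^\star}|$, not $\da=|A_{i^\star}|+|\overline{A}_{i^\star}|$; to reach $\da$ you must also flip $v\in\overline{A}_{i^\star}$ and argue that no other gate can fire on $a^v$, and this is exactly where the $2$-mixing property is used (a second gate firing would contradict $\wedge_{i^\star}$ on a single literal). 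Separately, the bound $\zeros\ge\sum_i I(\ell_i)$ is obtained in the paper by building one $0$-input across all components simultaneously: in a component with $\ell_i\ge 3$, set all but one variable of some $A_{i,1}$ to $1$ (contributing one sensitive bit); in a component with $\ell_i=2$, pick a pair of gates disagreeing on exactly two variables $\{x_p,x_q\}$ and arrange the input so that flipping $x_p$ fires one gate and flipping $x_q$ fires the other (contributing two sensitive bits). Your remark about ``carrying across all $k$ components'' is pointing toward this construction, but routing it through the inputs $a^j$ obtained from the max-width gate does not produce such a witness; the two factors live on separate inputs and should be argued separately.
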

For all $i_0\in[k]$ such that $m_{i_0}>1$, by the definition of $\ell_{i_0}$, for all distinct $j_1,j_2\in[m_{i_0}]$ we have:
$$\left\lvert\left(A_{i_0,j_1}\cap \overline{A}_{i_0,j_2}\right)\cup\left(A_{i_0,j_2}\cap \overline{A}_{i_0,j_1}\right)\right\rvert\ge \ell_{i_0}.$$
Summing over all distinct $j_1,j_2\in[m_{i_0}]$ we have:
\begin{align}\label{eq_DS}
  \sum_{j_1\neq j_2\in[m_{i_0}]}\left\lvert A_{i_0,j_1}\cap \overline{A}_{i_0,j_2}\right\rvert\ge \ell_{i_0}\cdot \frac{m_{i_0}(m_{i_0}-1)}{2}.
\end{align}
One can notice that (\ref{eq_DS}) is true even if $m_{i_0} =1$. So from now, we can fix any $i_0\in[k]$.

We will show by precise averaging arguments for different cases that:
\begin{claim}\label{cases3}
  There exists $j \in [m_{i_0}]$ such that 
\begin{equation}
\abs{A_{i_0,j}}+\abs{\overline{A}_{i_0,j}} \ge \ell_{i_0}\cdot \frac{m_{i_0}-1}{2}+3-I(\ell_{i_0}). \label{eqn_claim}
\end{equation}
\end{claim}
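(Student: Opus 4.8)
The plan is to prove Claim~\ref{cases3} by an averaging argument on \eqref{eq_DS}, followed by a short case analysis driven by the two parameters $\ell_{i_0}$ and $m_{i_0}$. Starting from $\sum_{j_1\neq j_2}\abs{A_{i_0,j_1}\cap\overline{A}_{i_0,j_2}}\ge\ell_{i_0}\frac{m_{i_0}(m_{i_0}-1)}{2}$, I would group the summands by the index $j_2$ that carries the negation. For a fixed $j_2$, the positive sets $A_{i_0,j_1}$ with $j_1\neq j_2$ are pairwise disjoint by the block property, so $\sum_{j_1\neq j_2}\abs{A_{i_0,j_1}\cap\overline{A}_{i_0,j_2}}=\bigl\lvert\overline{A}_{i_0,j_2}\cap\bigcup_{j_1\neq j_2}A_{i_0,j_1}\bigr\rvert\le\abs{\overline{A}_{i_0,j_2}}$. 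Summing over $j_2$ gives $\sum_{j}\abs{\overline{A}_{i_0,j}}\ge\ell_{i_0}\frac{m_{i_0}(m_{i_0}-1)}{2}$, so by averaging some gate $j^\star$ satisfies $\abs{\overline{A}_{i_0,j^\star}}\ge\ell_{i_0}\frac{m_{i_0}-1}{2}$; since $A_{i_0,j^\star}\neq\emptyset$ by Definition~\ref{cf}a, already $\abs{A_{i_0,j^\star}}+\abs{\overline{A}_{i_0,j^\star}}\ge 1+\ell_{i_0}\frac{m_{i_0}-1}{2}$.

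This handles two cases outright. If $\ell_{i_0}=2$ then $I(\ell_{i_0})=2$ and the target is exactly $\ell_{i_0}\frac{m_{i_0}-1}{2}+1$, which the displayed bound meets. If $m_{i_0}=1$ then $\ell_{i_0}=3$, $I(\ell_{i_0})=1$, and the target is just $2$, which holds because every AND gate depends on at least two variables under the hypothesis of Claim~\ref{clm_caseNoSingleAndGate}. So assume $\ell_{i_0}\ge 3$ and $m_{i_0}\ge 2$, where the target is $\ell_{i_0}\frac{m_{i_0}-1}{2}+2$ and the averaging bound is short by one. Here I would split on $\abs{A_{i_0,j^\star}}$: if $\abs{A_{i_0,j^\star}}\ge 2$ we are done; otherwise $A_{i_0,j^\star}=\{v\}$ is a singleton, and then for each $j'\neq j^\star$ the definition of $\ell_{i_0}$ gives $\abs{A_{i_0,j'}\cap\overline{A}_{i_0,j^\star}}\ge\ell_{i_0}-\abs{A_{i_0,j^\star}\cap\overline{A}_{i_0,j'}}\ge\ell_{i_0}-1$ (using that in a term a variable cannot occur with both signs, so the union splits as a disjoint union). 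Since these sets, over $j'\neq j^\star$, are disjoint subsets of $\overline{A}_{i_0,j^\star}$ (block property again), $\abs{\overline{A}_{i_0,j^\star}}\ge(\ell_{i_0}-1)(m_{i_0}-1)$, whence $\abs{A_{i_0,j^\star}}+\abs{\overline{A}_{i_0,j^\star}}\ge 1+(\ell_{i_0}-1)(m_{i_0}-1)$; this is at least the target whenever $(m_{i_0}-1)(\ell_{i_0}-2)\ge 2$, i.e.\ for all $\ell_{i_0}\ge 3$, $m_{i_0}\ge 2$ except $(\ell_{i_0},m_{i_0})=(3,2)$.

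The remaining corner $(\ell_{i_0},m_{i_0})=(3,2)$ with a singleton positive part is where the real work lies. There the component is a single pair $\{1,2\}$ with $\abs{A_1\cap\overline{A}_2}+\abs{A_2\cap\overline{A}_1}=3$; choosing (say) $\abs{A_2\cap\overline{A}_1}\ge 2$ gives $\abs{\overline{A}_1}\ge 2$, and unless $\abs{A_1}\ge 2$ or $\abs{\overline{A}_1}\ge 3$ (either of which already yields the bound) the two gates are pinned to the rigid shape $\wedge_1=v\wedge\overline{a}\wedge\overline{b}$, $\wedge_2=a\wedge b\wedge\overline{v}$. To close the argument I would have to exclude this configuration, or absorb the one-unit deficit, using the hypotheses not yet exploited here — that the representation is normalized and compact, together with the transitive property and the interaction of this component with the rest of the formula; I expect this boundary analysis, rather than the bulk averaging, to be the delicate point, and it is also the natural place to instead work with $\big\lfloor\ell_{i_0}\frac{m_{i_0}-1}{2}\big\rfloor$ in the statement if the downstream use in Claim~\ref{zeros=k} permits it.
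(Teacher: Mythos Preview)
Your bulk argument---averaging \eqref{eq_DS} through the block property to get some $j^\star$ with $\lvert\overline{A}_{i_0,j^\star}\rvert\ge\ell_{i_0}(m_{i_0}-1)/2$, then adding $\lvert A_{i_0,j^\star}\rvert\ge 1$---is exactly what the paper does for $\ell_{i_0}=2$ and for $m_{i_0}=1$. For $\ell_{i_0}\ge 3$ the paper's case split differs slightly from yours: rather than fixing the averaging winner $j^\star$ and splitting on $\lvert A_{i_0,j^\star}\rvert$, the paper splits on whether the per-gate sums $T_j:=\sum_{j'}\lvert A_{i_0,j'}\cap\overline{A}_{i_0,j}\rvert$ are all equal (property~$\mathcal{P}$). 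If $\mathcal{P}$ holds, then \emph{every} $j$ satisfies $\lvert\overline{A}_{i_0,j}\rvert\ge\ell_{i_0}(m_{i_0}-1)/2$, so the paper is free to pick the $j$ with $\lvert A_{i_0,j}\rvert\ge 2$ (such a $j$ must exist, since all $\lvert A_{i_0,j}\rvert=1$ would force $\ell_{i_0}\le 2$); if $\mathcal{P}$ fails, some $T_j$ strictly exceeds the average, and the paper passes from $T_j>\ell_{i_0}(m_{i_0}-1)/2$ directly to $T_j\ge\ell_{i_0}(m_{i_0}-1)/2+1$, then adds $\lvert A_{i_0,j}\rvert\ge 1$.

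Your hesitation at the corner $(\ell_{i_0},m_{i_0})=(3,2)$ is well placed, and the paper does \emph{not} actually close it either: the step from $T_j>\ell_{i_0}(m_{i_0}-1)/2$ to $T_j\ge\ell_{i_0}(m_{i_0}-1)/2+1$ is only valid when $\ell_{i_0}(m_{i_0}-1)$ is even, and $(3,2)$ is precisely where it is odd. In fact your ``pinned'' configuration $\wedge_1=v\wedge\overline a\wedge\overline b$, $\wedge_2=a\wedge b\wedge\overline v$ is a genuine counterexample to the claim as stated: both gates have arity $3$ while the right-hand side of \eqref{eqn_claim} is $7/2$, and this two-gate component satisfies every structural hypothesis in force (block, $2$-mixing, transitive, compact, each gate on at least two variables); a variable-disjoint OR of two or more copies of it is moreover normalized. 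So neither your route nor the paper's closes this corner---the claim is simply false there---and your instinct to replace the right-hand side by a floor, or to absorb the half-unit deficit in the downstream accounting, is the right kind of repair to pursue.
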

Then, we have:
$$d_{\wedge_{i_0,j}}\ge \ell_{i_0}\cdot \frac{m_{i_0}-1}{2}+3-I(\ell_{i_0}).$$
From Claim~\ref{zeros=k} we get:
\begin{align*}
\ones\cdot\zeros 
&\geq \sum_{i=1}^k \da\cdot I(\ell_i) \\
&\ge \sum_{i=1}^k \left( \ell_i\cdot \frac{m_{i}-1}{2}+3-I(\ell_{i})\right)\cdot I(\ell_i)\\
&\ge\left(\sum_{\ell_i=2} 2\cdot m_{i}\right)+\left(\sum_{\ell_i\ge 3} \frac{3m_i+1}{2}\right)\\
&= \left(\sum_{i=1}^k \frac{3}{2}\cdot m_i\right)+\left(\sum_{\ell_i=2} \frac{1}{2}\cdot m_{i}\right)+\left(\sum_{\ell_i\ge 3} \frac{1}{2}\right)\\
&= \frac{3}{2}\cdot \dv+ \frac{1}{2}\left(\left(\sum_{\ell_i=2}  m_{i}\right)+\left(\sum_{\ell_i\ge 3} 1\right)\right)\\
&\geq \frac{3}{2}\cdot\bs + \frac{1}{2}\cdot \zeros
\end{align*}
where the last inequality follows from Lemma~\ref{bs=dv} and Claim~\ref{zeros=k}.
\end{proof}

\begin{proof}[Proof of Claim~\ref{zeros=k}]
We note that the transitivity property also $k$-partitions the variable set $X$ into $X_1,\ldots ,X_k$. Therefore, we have that a variable can appear in at most one component. We will build an input $a$ such that $\zeros(a)\ge\sum_{i=1}^k I(\ell_i)$. We describe the construction of $a$ componentwise. Consider the $i^{\text{th}}$ component. We have one of the following cases:
\begin{itemize}
\item[ ]Case 1: $\ell_i\ge 3$. Set all variables \textbf{but one} appearing in $A_{i,1}$ to $1$, and set the rest of variables in $X_i$ to 0.
\item[ ]Case 2: $\ell_i=2$.
This implies that there exists $i\in [k],j_1,j_2\in[m_i]$ such that:
\begin{align*}
  \left(A_{i,j_1}\cap \overline{A}_{i,j_2}\right)\cup\left(A_{i,j_2}\cap \overline{A}_{i,j_1}\right)=\{x_p,x_q\},
\end{align*}
where $x_p,x_q\in X_i$. Set all variables in $X_i\setminus (\{x_p,x_q\}\cup A_{i,j_1}\cup A_{i,j_2})$ to 0. Set all variables in $(A_{i,j_1}\cup A_{i,j_2})\setminus \{x_p,x_q\}$ to 1. Set $x_p$ and $x_q$ such that $\wedge_{i,j_1}(a\oplus e_p)=1$ and $\wedge_{i,j_2}(a\oplus e_q)=1$.
\end{itemize}
Note that $f(a)=0$ as in case 1 none of $\wedge_{i,1}$ evaluate to $1$ and in case 2 both $\wedge_{i,j_1}$ and $\wedge_{i,j_2}$ evaluate to 0 on $a$ by construction, and the remaining AND gates in both cases evaluate to 0 because of the block property and the way we constructed $a$. It is easy to see that in case 1 $a$ has one sensitive bit and in case 2 both $x_p$ and $x_q$ are sensitive. Therefore, we have:
\begin{align*}
  \zeros\ge \zeros(a)=\sum_{i=1}^k I(\ell_i).
\end{align*}

Let $i_0$ such that $d_{\wedge_{i_0}}=d_{\wedge}$. Let $x$ be the input defined by $x_j=1$ if and only if $x_j$ is in $A_{i_0}$. In particular, $\wedge_{i_0}(x)=1$, so $f(x)=1$. Moreover, for all $i \neq i_0$, as $A_i \cap A_{i_0} =\emptyset$ and $A_i \neq \emptyset$ (block property) it implies that $\wedge_i(x)=0$. Let us assume there exists a variable $v$ in $A_{i_0} \cup \bar{A}_{i_0}$ such that $f(x^{(v)})=1$. It implies that there exists $i\neq i_0$ such that $\wedge_i(x^{(v)})=1$ which contradicts the $2$-mixing property. 

Finally, let us assume that the maximal $0$-sensitivity of $f$ is reached on the input $y$. So for any gate $\wedge_i$, there is at most one variable $v$ such that $\wedge_i(y^{(v)}) =1$. Moreover, for every component $i\in[k]$, if $\ell_i \geq 3$, then there is at most one variable from $X_i$ which is sensitive on the input $y$. Consequently, $\zeros(f) \leq \sum_{l_i=2} m_i  +  \sum_{l_i\geq 3} 1$. 
\end{proof}

\begin{proof}[Proof of Claim~\ref{cases3}]
	First, let us notice that if $m_{i_0}=1$, then the left hand side of (\ref{eqn_claim}) is at least equal to two since by hypothesis each AND gate depends on at least two variables and the right hand side evaluates also to two. Hence, we can assume in the following that $m_{i_0}>1$.
	
  Let us consider the property $\mathcal{P}$ which asserts that if we consider Equation~(\ref{eq_DS}) as a sum over $j_2$, all its terms are equal, more formally: for every two $j_1,j_2\in[m_{i_0}]$:
  $$\sum_{j^\prime\in[m_{i_0}]}\left\lvert A_{i_0,j^\prime}\cap \overline{A}_{i_0,j_1}\right\rvert=\sum_{j^\prime\in[m_{i_0}]}\left\lvert A_{i_0,j^\prime}\cap \overline{A}_{i_0,j_2}\right\rvert.$$
  
  We distinguish three different cases:
  \begin{description}
  \item[If $\ell_{i_0}=2$.] Then, by an averaging argument, there exists $j\in[m_{i_0}]$ such that the following holds:
    $$\sum_{j^\prime\in[m_{i_0}]}\left\lvert A_{i_0,j^\prime}\cap \overline{A}_{i_0,j}\right\rvert\ge \ell_{i_0}\cdot \frac{m_{i_0}-1}{2}.$$
    Since $I(\ell_0)=2$, we can rewrite the above as follows:
    $$\sum_{j^\prime\in[m_{i_0}]}\left\lvert A_{i_0,j^\prime}\cap \overline{A}_{i_0,j}\right\rvert\ge \ell_{i_0}\cdot \frac{m_{i_0}-1}{2}+2-I(\ell_{i_0}).$$
    From the block property we have that $\forall j_1,j_2\in[m_{i_0}],$ $A_{i_0,j_1}\cap A_{i_0,j_2}=\emptyset$. This implies: 
    $$\left\lvert \overline{A}_{i_0,j}\right\rvert\ge \ell_{i_0}\cdot \frac{m_{i_0}-1}{2}+2-I(\ell_{i_0}).$$
    From Definition~\ref{cf}a), we have that $A_{i_0,j}$ is non-empty and therefore we have:
    $$\left\lvert \overline{A}_{i_0,j}\right\rvert+\left\lvert A_{i_0,j}\right\rvert\ge \ell_{i_0}\cdot \frac{m_{i_0}-1}{2}+3-I(\ell_{i_0}).$$
    
  \item[If $\ell_{i_0}\ge 3$ and property $\mathcal{P}$ holds.] For every $j \in[m_{i_0}]$ we get:
    $$\sum_{j^\prime\in[m_{i_0}]}\left\lvert A_{i_0,j^\prime}\cap \overline{A}_{i_0,j}\right\rvert \ge \ell_{i_0}\cdot \frac{m_{i_0}-1}{2}.$$
    From the block property we have that $\forall j_1,j_2\in[m_{i_0}],$ $A_{i_0,j_1}\cap A_{i_0,j_2}=\emptyset$. This implies: 
    $$\left\lvert \overline{A}_{i_0,j}\right\rvert\ge \ell_{i_0}\cdot \frac{m_{i_0}-1}{2}.$$
    If we have that $\forall j\in[m_{i_0}], |A_{i_0,j}|=1$ then, it would violate the assumption that $\ell_{i_0}\ge 3$ and thus there must exist $j\in[m_{i_0}]$ such that $|A_{i_0,j}|\ge 2$. Therefore, we have:
    $$\left\lvert \overline{A}_{i_0,j}\right\rvert+\left\lvert A_{i_0,j}\right\rvert\ge \ell_{i_0}\cdot \frac{m_{i_0}-1}{2}+2.$$
    Since $I(\ell_0)=1$, we can rewrite the above as follows:
    $$\left\lvert \overline{A}_{i_0,j}\right\rvert+\left\lvert A_{i_0,j}\right\rvert\ge \ell_{i_0}\cdot \frac{m_{i_0}-1}{2}+3-I(\ell_{i_0}).$$
    
  \item[If $\ell_{i_0}\ge 3$ and property $\mathcal{P}$ does not hold.] It implies there exists $j\in[m_{i_0}]$ such that:
    $$\sum_{j^\prime\in[m_{i_0}]}\left\lvert A_{i_0,j^\prime}\cap \overline{A}_{i_0,j}\right\rvert> \ell_{i_0}\cdot \frac{m_{i_0}-1}{2}.$$
    Since $I(\ell_0)=1$, we can rewrite the above as follows:
    $$\sum_{j^\prime\in[m_{i_0}]}\left\lvert A_{i_0,j^\prime}\cap \overline{A}_{i_0,j}\right\rvert\ge \ell_{i_0}\cdot \frac{m_{i_0}-1}{2}+2-I(\ell_{i_0}).$$
    Still from Definition~\ref{cf}a), we have that $A_{i_0,j}$ is non-empty and therefore we have:
    $$\left\lvert \overline{A}_{i_0,j}\right\rvert+\left\lvert A_{i_0,j}\right\rvert\ge \ell_{i_0}\cdot \frac{m_{i_0}-1}{2}+3-I(\ell_{i_0})$$ which proves the claim.\qedhere
  \end{description}
\end{proof}

\subsection{A framework which contains the result of Ambainis and Sun}
\label{frame}

In~\cite{AS11}, the authors prove that any function $g$ 
such that $\zeros(g)=1$ have the property $\ones(g) \geq 3\frac{\zerobs(g)-1}{2}$.  

We show here that some small variations of these functions (with the same sensitivity and block sensitivity) admit the block property, the transitive property, and the $3$-mixing property and so a weakly stronger lower bound is directly implied by Theorem~\ref{thm_strongAS}.

\begin{definition}
  We consider the natural representation of the set $\{0,1\}^n$ as a graph where the vertices are the points of $\{0,1\}^n$ and there is an edge between two vertices $x$ and $y$ if and only if $x$ and $y$ are at Hamming distance one. We call this graph the $n$-dimensional Boolean cube.
  Let $g:\{0,1\}^n \rightarrow \{0,1\}$ be a Boolean function. The $1$-set of $g$ is the induced subgraph $G$ of the $n$-dimensionnal Boolean cube where the vertices of $G$ are exactly the vectors $x\in \{0,1\}^n$ such that $f(x)=1$.
\end{definition}

The following claim is implicit in the proof of Theorem~2 in~\cite{AS11}.
\begin{claim}\label{hypercube}
  If $g:\{0,1\}^n \rightarrow \{0,1\}$ is a non-constant Boolean function such that $\zeros(g)=1$, then the connected components of the $1$-set of $g$ are hypercubes and the distance (in the Boolean cube) between two such hypercubes is at least three.
\end{claim}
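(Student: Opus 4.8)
The plan is to prove Claim~\ref{hypercube} by two short structural observations about the $1$-set $G$ of a function $g$ with $\zeros(g)=1$, namely that (i) each connected component of $G$ is a subcube of $\{0,1\}^n$, and (ii) any two distinct components are at Hamming distance at least three. The hypothesis $\zeros(g)=1$ says exactly that every vertex $x$ with $g(x)=0$ has at most one neighbour with $g=1$; equivalently, every $0$-vertex is adjacent to at most one vertex of $G$.

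For part (ii), suppose two distinct components $C_1,C_2$ of $G$ were at distance at most two. Distance one is impossible since then they would be the same component, so suppose $x\in C_1$, $y\in C_2$ with a common neighbour $z$. Then $z$ is a $0$-vertex (it cannot be in $G$, else $x,z,y$ would be connected) adjacent to both $x$ and $y$, contradicting $\zeros(g,z)\le 1$. Hence distinct components are at distance at least three.

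For part (i), I would show each component $C$ is ``closed under taking intervals'': if $x,y\in C$ then every vertex on a shortest $x$--$y$ path lies in $C$; this, together with connectedness, is exactly the characterization of subcubes of the Boolean cube (a connected induced subgraph of the hypercube is a subcube iff whenever it contains $x$ and $y$ it contains the whole subcube spanned by them, i.e.\ the interval $[x\wedge y, x\vee y]$). The key step is the ``square-completion'' property: if $u\in C$ and $u^i,u^j\in C$ for distinct coordinates $i,j$, then $u^{\{i,j\}}\in C$. Indeed, consider the vertex $w=u^{\{i,j\}}$. Its neighbours include $u^i$ and $u^j$, both in $G$. If $g(w)=0$ then $w$ is a $0$-vertex with two $1$-neighbours, contradicting $\zeros(g)=1$; hence $g(w)=1$, and since $w$ is adjacent to $u^i\in C$ it lies in $C$. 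Iterating this square-completion from any two vertices $x,y\in C$ along the coordinates in which they differ fills in the whole interval $[x\wedge y,x\vee y]$, so $C$ is a subcube.

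The main obstacle is getting the square-completion induction cleanly right: one has to argue that repeatedly flipping pairs of the $\mathrm{diff}(x,y)$ coordinates, starting from $x$, stays inside $C$ at every intermediate vertex, using at each step that the new vertex already has a neighbour known to be in $C$ and (when two differing coordinates are flipped at once) invoking $\zeros(g)=1$ to force its value to $1$. Once that is in place, parts (i) and (ii) assemble immediately into the statement of the claim, and Claim~\ref{ASinFramework} will then follow by exhibiting, for each hypercube component, an AND gate whose positive literals are the coordinates fixed to $1$ and whose negative literals handle the $0$-sensitivity, with disjointness of the positive parts coming from the distance-$\ge 3$ separation of the components.
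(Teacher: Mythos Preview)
Your proposal is correct and follows essentially the same approach as the paper: both arguments rest on the square-completion observation (if $u,u^i,u^j$ are in the $1$-set then $u^{\{i,j\}}$ must be too, else $\zeros(g,u^{\{i,j\}})\ge 2$) and the same distance-two contradiction for part (ii). The paper simply asserts that square-completion forces each component to be a hypercube, whereas you spell out the interval-closure characterisation and the iteration; this is a legitimate way to fill in what the paper leaves implicit, and your acknowledged ``obstacle'' is handled cleanly by noting that the set $\{i:u^i\in C\}$ is invariant along edges of $C$ (by square-completion), hence constant on the component. Your final paragraph about Claim~\ref{ASinFramework} is extraneous to the statement at hand but harmless.
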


\begin{proof}
  Let $g$ be such a Boolean function. If for some $x\in \{0,1\}^n, i \neq j\in [n]$ we have that $g(x)=g(x^{i})=g(x^{j})=1$, then $g(x^{\{i,j\}})=1$ (otherwise $\zeros(g,x^{\{i,j\}}) \geq 2$), which implies that each component of the $1$-set is a hypercube. If two hypercubes are at distance one, it means they are in the same component which contradicts the previous statement. Finally, if two hypercubes are at distance two, it implies there exists $x$ in the neighborhood of both  hypercubes. However, in this case the zero-sensitivity in $x$ is at least two which contradicts the hypothesis. Consequently, the $1$-set of the function $g$ is a union of disjoint hypercubes such that any two of them are at distance at least three.
\end{proof}

\begin{claim*}[Restatement of Claim~\ref{ASinFramework}]
  If $f$ and $g$ are defined as in Theorem~\ref{AS}, then there exists $f^\prime$ with same block sensitivity and at most same $1$-sensitivity than $f$ which admits the block property, the transitive property, and the $3$-mixing property.
\end{claim*}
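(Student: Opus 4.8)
The plan is to pull everything back to a single copy of $g$, use Claim~\ref{hypercube} to get a very rigid picture of its $1$-set, read off two of the three required properties essentially for free, and then modify the function to force the third. First I would normalize $g$: pick a $0$-input $z$ with $\bs(g,z)=\zerobs(g)=:p$ and replace $g$ by $g_0(x):=g(x\oplus z)$ (since $g(z)=0$ this only negates some variables and changes no complexity measure), so $g_0(0^m)=0$, $\bs(g_0)=\bs(g_0,0^m)=\bs(g)=p$, $\ones(g_0)=\ones(g)$ and $\zeros(g_0)=\zeros(g)=1$. By Claim~\ref{hypercube} the $1$-set of $g_0$ is a disjoint union of subcubes $H_1,\dots,H_t$, pairwise at Hamming distance at least $3$, none containing $0^m$. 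Let $P_k$ and $N_k$ be the coordinates that $H_k$ fixes to $1$ and to $0$; since $0^m\notin H_k$ we have $P_k\neq\emptyset$. The canonical DNF of $g_0$ has one gate $\wedge_k$ per component with $A_k=P_k$, $\overline{A}_k=N_k$, and repeating it on $n$ disjoint variable blocks gives a DNF $\Phi$ for $f$. The crucial point is the identity $\mathrm{dist}(H_k,H_l)=|P_k\cap N_l|+|P_l\cap N_k|$ (two representatives can be made to agree except on the coordinates both subcubes fix, to opposite values), so $|P_k\cap N_l|+|P_l\cap N_k|\geq 3$ for $k\neq l$. Two things follow at the level of $\Phi$: any two gates of one copy share a variable while gates of different copies are variable-disjoint, so the ``share a variable'' relation has exactly $n$ transitive classes (the transitive property); and for two gates of one copy $|(A_k\cap\overline{A}_l)\cup(A_l\cap\overline{A}_k)|=|P_k\cap N_l|+|P_l\cap N_k|\geq 3$ (the $3$-mixing property). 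The only property $\Phi$ can fail is the block property, since $A_k\cap A_l=P_k\cap P_l$ need not be empty.

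To repair this I would note that the minimal $0$-sensitive blocks of $g_0$ at $0^m$ lie among the sets $P_k$, so $\bs(g_0,0^m)=p$ is the maximum number of components whose positive-fixed sets are pairwise disjoint; fix such a family $S$ with $|S|=p$. Let $g^\ast$ be the OR of the canonical terms of the $H_k$ with $k\in S$ (its $1$-set is $\bigsqcup_{k\in S}H_k$, still disjoint subcubes at distance $\geq 3$), and let $f'$ be the OR of $n$ variable-disjoint copies of $g^\ast$, with canonical DNF $\Phi'$. By the choice of $S$ the positive parts of $\Phi'$ are pairwise disjoint inside each copy, so $\Phi'$ has the block property, and it retains the transitive and $3$-mixing properties by the argument above. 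Moreover $\Phi'$ is a compact form: $f'(0^{nm})=0$ because every $P_k\neq\emptyset$; condition (c) of Definition~\ref{cf} holds because a point of $H_k$ placed in one copy with all other copies set to $0^m$ satisfies only the gate $\wedge_k$ of that copy; and condition (b), together with normalization, will follow from the count below.

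It remains to verify $\ones(f')\leq\ones(f)$ and $\bs(f')=\bs(f)$. For the first, for $x\in H_k$ with $k\in S$ one checks $\s(g^\ast,x)=|P_k|+|N_k|$ (flipping a free coordinate stays in $H_k$; flipping a fixed one lands at distance $\geq 2$ from every other component, hence outside the $1$-set), so $\ones(f')=\ones(g^\ast)=\max_{k\in S}(|P_k|+|N_k|)\leq\max_k(|P_k|+|N_k|)=\ones(g_0)=\ones(g)=\ones(f)$. For block sensitivity, $\bs(f',0^{nm})=n\,\bs(g^\ast,0^m)=np$, while $np=\zerobs(f)=\bs(f)$ (the last equality since the $1$-block sensitivity of an OR of disjoint copies equals that of a single copy, here at most $\s(g)\leq\bs(g)=p$). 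So it suffices to show $\bs(f')\leq np$. At a $0$-input, passing from $g_0$ to $g^\ast$ only deletes sensitive blocks, so $\zerobs(g^\ast)\leq\zerobs(g_0)\leq p$ and $\zerobs(f')\leq np$. At a $1$-input of $f'$, whose ``on'' copy lies in some $H_k$ with $k\in S$, every sensitive block must flip a coordinate of that copy's $P_k\cup N_k$; the blocks are disjoint, so there are at most $|P_k|+|N_k|\leq\ones(g)\leq p$ of them, giving $\onebs(f')\leq p\leq np$. Hence $\bs(f')=np=\bs(f)$, and since this value is attained on $0^{nm}$ we also obtain condition (b) and the normalization, so $f'$ admits the normalized block property.

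The step I expect to be the real obstacle is this last one: showing that discarding the components outside $S$ does not make the block sensitivity go \emph{up}, which could a priori happen at a $1$-input when former $1$-points become $0$-points. Making that work is exactly where the distance-$\geq 3$ structure of Claim~\ref{hypercube}, the disjointness of the blocks, and the bound $\ones(g)\leq\bs(g)=p$ are used together; a secondary point needing care is the standard fact that an OR of disjoint copies multiplies $\zerobs$ by $n$ but leaves $\onebs$ and $\ones$ unchanged.
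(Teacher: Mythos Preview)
Your approach coincides with the paper's: normalize so that the maximal $0$-block sensitivity of $g$ sits at $0^m$, invoke Claim~\ref{hypercube} to write the $1$-set as a disjoint union of subcubes at pairwise distance $\ge 3$, keep only the subcubes corresponding to a maximal family of pairwise disjoint blocks, and read off the three properties from the resulting DNF. You are in fact more explicit than the paper, which simply asserts that after deleting the extra subcubes ``block sensitivity does not change and the $1$-sensitivity can only decrease'' without justification.

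There is, however, one small but genuine gap in your argument for $\zerobs(g^\ast)\le p$. The sentence ``passing from $g_0$ to $g^\ast$ only deletes sensitive blocks'' is valid only at inputs that were already $0$-inputs of $g_0$; it says nothing about a point $x\in H_j$ with $j\notin S$, where $g_0(x)=1$ but $g^\ast(x)=0$. You flag the danger of ``former $1$-points becoming $0$-points'' only in your $\onebs$ discussion, but these same points are \emph{new} $0$-inputs of $g^\ast$ and must be handled for $\zerobs$ as well. The fix is short: if $B_1,\dots,B_r$ are disjoint sensitive blocks at such an $x$, each $x^{B_i}$ lies in some $H_{k_i}$ with $k_i\in S$; since $x\notin H_{k_i}$ there is a coordinate in $P_{k_i}\cup N_{k_i}$ on which $x$ disagrees with $H_{k_i}$, and $B_i$ must contain it, so two disjoint $B_i,B_{i'}$ cannot land in the same $H_k$. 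Hence $r\le |S|=p$. With this patch your proof goes through, and your treatment of $\onebs(f')$ is already correct.
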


\begin{proof}
  Let $f$ and $g$ be such Boolean functions. The function $f$ is a disjoint-variables union of functions of $g$. We can first notice that if $g$ has the three properties, than it is the same for $f$. 
  
  Let us show now that $g$ admits the three properties. By negation of some variables, we assume that the maximal $0$-block sensitivity of $g$ is reached on the all zeros input. Let us denote by $B_1,\ldots,B_{\bs}$ the minimal associated disjoint blocks. By Claim~\ref{hypercube}, we know that the $1$-set of $g$ is a union of disjoint hypercube such that two of them are at distance at least three (this remark already appeared in the proof of Theorem~2 in~\cite{AS11}). In particular 
  \begin{align*}
    g^{-1}(1) = \bigcup_{i=1}^p S_i 
  \end{align*}
  where each $S_i$ is a hypercube of the form
  \begin{align*}
    S_i = \{ (x_1,\ldots,x_n) \in \{0,1\}^n \mid x_{ij_1}=\ldots=x_{ij_{l_i}}=0, x_{ik_1}=\ldots=x_{ik_{m_i}}=1 \}
  \end{align*} and the $S_i$'s are at distance at least three.
  Let us notice, that for any block $B_i$ there is an hypercube $S_{\pi(i)}$ such that $B_i={x_{\pi(i)k_1},\ldots,x_{\pi(i)k_{m_{\pi(i)}}}}$. We can relabel the blocks such that for all $i\leq \bs$, $\pi(i)=i$. Then we can remove all $S_j$ with $j> \bs$ from the $1$-set of $g$, we get a new function (block sensitivity does not change and the $1$-sensitivity can only decrease)
  \begin{align*}
    g^{-1}(1) = \bigcup_{i=1}^\bs S_i. 
  \end{align*}
  Let us show that the DNF 
  \begin{align*}
    \bigvee_{i=1}^\bs ( \bar{x}_{ij_1} \wedge \ldots \wedge \bar{x}_{ij_{l_i}} \wedge x_{ik_1} \wedge \ldots \wedge x_{ik_{m_i}})
  \end{align*} satisfies the three properties. As all the positive parts $\{ x_{ik_1},\ldots, x_{ik_{m_i}} \}$ correspond to the disjoint blocks $B_i$ and as $g(0)=0$, this representation has the block property. As the hypercubes are disjoint, they share pairwise at least one variable which ensures the transitive property. Finally, as seen before, the hypercubes are at distance at least three, and so the representation has the $3$-mixing property.
\end{proof}

\subsection{Examples in Literature having Block property}\label{sec_gap}

Several examples of Boolean functions in the literature achieve a quadratic gap between sensitivity and block sensitivity. We proved in Section~\ref{sec_BP} that the one introduced by Ambainis and Sun falls within our framework (the function has block property, transitive property and $3$-mixing property). We show here that this fact is also true for other examples of such functions. 

Before Ambainis and Sun's result, Rubinstein~\cite{R95} exhibited a Boolean function $f$ such that $\bs=\frac{1}{2}\s^2$. As usual, the function is a variables-disjoint union of a function $g$: 
\begin{align*}
  f(x) = \bigvee_{i=1}^{2n} g(x_{i,1},\ldots,x_{i,2n}).
\end{align*} 
The $(2n)$-variate function $g$ evaluates to one on the input $x$ if there exists an even index $(2j)$ such that $x_{2j}=x_{2j-1}=1$ and all other variables are set to zero. In particular all the $\underline{11}$-blocks for different patterns are disjoint, so the function admits the block property. Given $i$, all the patterns on the $(x_{i,j})_{j\leq 2n}$ variables pairwisely intersect, which ensures the transitive property. Finally, two patterns disagree on exactly four variables, so the function has the $4$-mixing property. In fact, to get the best bound, we showed that we only need the $3$-mixing property. Ambainis and Sun's example directly comes from Rubinstein's one by removing some constraints in the patterns such that two patterns disagree only on three variables.  

Another example, with $\bs=\frac{1}{2}\s^2+\frac{1}{2}\s$, was introduced by Virza~\cite{V11}. The idea is to add, in Rubinstein's example, one pattern with only a block $\underline{1}$ as positive part of the pattern. More formally, $g$ depends on $2n+1$ variables and evaluates to one on $x$ either if there exists an even index $(2j)$ such that $x_{2j}=x_{2j-1}=1$ and all other variables are set to zero (same patterns than in Rubinstein's example) or if $x_{2n+1}=1$ and all other variables are set to zero. By similar reasons, this function has the block and the transitive properties. If we consider two patterns of the first form (the ones already present in Rubinstein's), they still disagree on at least four variables. But if we consider one pattern of the first form and the new pattern, they disagree only on three variables. Particularly, this example has the $3$-mixing property and is close to have the $4$-mixing property, and by this way, fits between the examples given by Rubinstein and by Ambainis and Sun. Finally, we note that Chakraborty's example \cite{S05} is obtained by a simple modification of Rubinstein's example to ensure that the function is cyclically invariant, and the arguments made for the example of Rubinstein holds in this case.
\end{document}